\newtheorem{theorem}{Theorem}
\newtheorem{assumption}[theorem]{Assumption}
\newtheorem{lemma}[theorem]{Lemma}
\newtheorem{corollary}[theorem]{Corollary}
\newtheorem{proposition}[theorem]{Proposition}
\newtheorem{remark}[theorem]{Remark}
\newcommand{\vip}{\vskip.2cm}
\newcommand{\E}{\mathbb{E}}
\newcommand{\COMMENTAIRE}[1]{}
\begin{document}

\title[]{Electricity intraday price modeling with marked Hawkes processes}

\author{Thomas Deschatre \and Pierre Gruet}

\address{Thomas Deschatre, EDF Lab Paris-Saclay and FiMe, Laboratoire de Finance des March\'es de l'Energie, 91120 Palaiseau, France}
\email{thomas-t.deschatre@edf.fr}
\address{Pierre Gruet, EDF Lab Paris-Saclay and FiMe, Laboratoire de Finance des March\'es de l'Energie, 91120 Palaiseau, France}
\email{pierre.gruet@edf.fr}

\begin{abstract} 
We consider a 2-dimensional marked Hawkes process with increasing baseline intensity in order to model prices on electricity intraday markets. This model allows to represent different empirical facts such as increasing market activity, random jump sizes but above all microstructure noise through the signature plot. This last feature is of particular importance for practitioners and has not yet been modeled on those particular markets. We provide analytic formulas for first and second moments and for the signature plot, extending the classic results of Bacry et al.~\cite{Bacry2013a} in the context of Hawkes processes with random jump sizes and time dependent baseline intensity. The tractable model we propose is estimated on German data and seems to fit the data well. We also provide a result about the convergence of the price process to a Brownian motion with increasing volatility at macroscopic scales, highlighting the Samuelson effect.
\end{abstract}

\maketitle

\textbf{Mathematics Subject Classification (2020)}: 60G55, 62M10, 91G30

\textbf{Keywords}: Electricity intraday prices, Microstructure noise, Hawkes processes, High-frequency statistics

\section{Introduction}

In Europe, The electricity intraday markets are quickly developing as they meet increasing trading needs from the actors in the field: according to EPEX SPOT, which runs those markets in Western Europe, the German yearly exchanged volume has grown from 2~TWh in 2008 to 54~TWh in 2019. To a large extent, this development is caused by the increasing share of renewable power plants in the energy system: the output of such plants is hard to forecast (one may refer to Giebel et al.~\cite{Giebel2011} for a thorough review on this topic), so that their owners face imbalances between their production and the consumption in their perimeter. For each delivery hour (24 in a day), an auction called ``day-ahead market'' is held at noon the day before: it reflects the supply--demand equilibrium that holds, based on the anticipations of market agents. Yet, those agents can also suffer from random events happening between the day-ahead auction and the electricity delivery, so that they need to rebalance their positions before the delivery. Electricity intraday markets provide them with a way to trade deliveries of electricity for the upcoming hours in order to reduce their imbalances. As regulation varies across market places, the intraday markets may change a bit from one country to another. They however share common features: they offer the possibility to exchange power for the next, say, 9 to 32 hours (in Germany, for instance). Three hours after the day-ahead market is cleared, the intraday market for the next day opens. Then trades occur as in the regular order book markets that are well known to finance practitioners. It is possible to trade up to 5 minutes before the physical delivery of electricity actually occurs. The traded products are much standardized: for instance in Germany, one may exchange hourly, half-hourly and quarter-hourly contracts for, respectively, the 24 hours, the 48 half-hours and the 96 quarters of an hour that exist each day.\par \medskip
Because the electricity markets expanded quite lately compared to more conventional financial markets, practitioners in the former already benefit from the modeling tools developed by academic researchers and practitioners in the latter. Yet the nature of the underlying, which is not storable at large scale, has motivated the development of some modeling artifacts that are specific to the field. Kiesel and Paraschiv~\cite{Kiesel2017} look at the fundamental drivers of intraday prices: they can understand how the intraday prices react to, for instance, errors in the renewable production forecast. In order to gain some more insight about the endogeneous mechanisms driving the electricity intraday markets, Narajewski and Ziel~\cite{Narajewski2019} examine the transaction arrival times in this market to understand the behavior of market participants and to describe the increase in the transaction rhythm as maturity gets closer, which is a largely observed stylized fact in the intraday market. A\"id et al.~\cite{Aid2016} determine the optimal strategy of a trader aiming at reducing her imbalance thanks to the intraday market, additionally featuring market impact leading to additional transaction costs. 
Recently, Hawkes processes have proved to be relevant modeling tools in electricity intraday markets. For an extensive review on those processes, the interested reader may refer to Bacry et al.~\cite{Bacry2015}. They are well known to finance practitioners as they allow to represent self-exciting behaviors and clustering of events; for instance Gao et al.~\cite{Gao2018} use them to model the trades arrival in dark pool trading. Back to the context of electricity intraday markets, Favetto~\cite{Favetto2019} fits an univariate Hawkes process to describe the market activity of a product in the intraday market. Graf von Luckner and Kiesel~\cite{Graf2020} work with a 2-dimensional Hawkes process in the intraday market.\par \medskip
Our model shares a lot of features with the model of Graf von Luckner and Kiesel~\cite{Graf2020}: we also use a 2-dimensional Hawkes process with exponential baseline function and exponential excitation functions to define the positive and the negative jumps of prices in the intraday market. Within this model, the occurrence of positive jumps has an influence on the intensity of negative jumps. The same holds for the occurrence of negative jumps, which have an influence on the intensity of positive jumps. Whereas Graf von Luckner and Kiesel~\cite{Graf2020} focus on the statistical selection of models to describe the occurrence of jumps at best, we choose to go one step further on the modeling path by also selecting the distribution of the price after a thorough statistical analysis, and then we define the price on the intraday market as being the sum of positive and negative jumps triggered by the 2-dimensional marked Hawkes process. Being able to model the price and analyse its theoretical features is a great improvement for the practitioner. A very interesting empirical fact captured by our model is the presence of microstructure noise. We generalize the signature plot formulas of Bacry et al.~\cite{Bacry2013a} to the case where the baseline intensity is not constant and the Hawkes processes are marked. We also extend the results of Bacry et al.~\cite{Bacry2013b} concerning limit theorems at macroscopic scale to our framework. To our knowledge, we are proposing the first price model for electricity intraday prices featuring microstructure effects, with a focus on sticking to the actual signature plot, which represents the realized volatility of prices against the sampling frequency. We care for the statistical properties of the price process: the computation of the first and second moments allows to compare the theoretical signature plots to the ones we compute on the market. By doing so, we provide some evidence of patterns that are well known in classic financial markets, and we reproduce them within the model. This is especially interesting as the intraday market of electricity is rather uncommon in the landscape of financial markets. We also examine the behavior of prices in a low-frequency asymptotics.\par
\medskip
The outline of the paper is organized as follows. In Section~\ref{sec:stylizedfact}, we perform a statistical analysis to show some evidence of the increase of the intensity of jumps over the trading sessions. We also study the distribution of jumps sizes, and we look at the signature plot computed on market data. In Section~\ref{sec:model}, we explain how the features we examine in Section~\ref{sec:stylizedfact} translate into modeling choices. We also state the theoretical moments of the price and the signature plot formula within our model. We then estimate the model on German data and we study the adequacy of the model to the data. 
Section~\ref{sec:macro} examines the behavior of prices at the macroscopic scale and the representation of the Samuelson effect, according to which the prices tend to be more volatile as time to maturity decreases. Section~\ref{sec:conclusion} contains the conclusion, and proofs of our propositions are deferred to Sections~\ref{proof:sectionmodel} and \ref{proof:sectionmacro}.

\section{Empirical stylized facts}
\label{sec:stylizedfact}

In this section, we first describe the dataset and some empirical stylized facts which we want to reproduce within our model. These facts are mainly related to the jumps one may observe in the intraday prices: we will estimate their arrival intensity and comment on the distributions that may fit the sizes of the positive and negative price jumps.
Then we will identify the presence of microstructure noise through a signature plot of electricity intraday prices. These features are of particular interest from the perspective of a practitioner, as they provide guidance for modeling in order to trade on the intraday market.

\subsection{Description of the dataset}

We consider German electricity intraday mid-prices (arithmetic average between bid and ask prices) between July and September 2017 for products with a delivery period of one hour. The mid-prices are built using order book data from EPEX Spot, which is the organizer of the short-term market in Western Europe.
There is one time series per trading session and delivery product. Let us recall that one trading session starts at 3~pm the day before maturity and ends 5 minutes before maturity for hourly products, that are the only ones we consider here. Therefore, the duration of observation changes from a maturity to another. Consider one trading session: each time an order is inserted, we update the mid-price. Prices are then given with a precision up to the millisecond. To simplify the different numerical computations, we build the prices data set with a time step of one second, which is enough for our study. If there are one or several price changes over one second, the average volume weighted price is taken for that second. One hour before delivery, cross-border trading is not possible anymore. Thirty minutes before delivery, transactions are only possible into each of the four control areas in Germany and not across them, meaning the bidding zone then becomes smaller than before entering this 30 minutes delay. Those changes in the market can have an impact on the liquidity and then in the price behavior.
For this reason, we only consider data up to one hour before maturity. Furthermore, as the price is not defined at the beginning of the session (at least one buy and one sell order are needed), the price we register at 3~pm is the first existing price (which is thus defined in the future). This does not affect the results as most of them concern the returns and times of price changes. Figure~\ref{fig:prices_ts} represents prices for deliveries beginning at 18h, 19h and 20h on July 11\textsuperscript{th}, 2017 and on August 30\textsuperscript{th}, 2017. In the rest of the paper, we focus on those three maturities but only for a better visualization and understanding: results are robust to the choice of the maturity (with a delivery period of one hour). 

\begin{figure}
    \centering
     \centering
    \begin{subfigure}{0.9\textwidth}
    \centering
    \includegraphics[width=\textwidth]{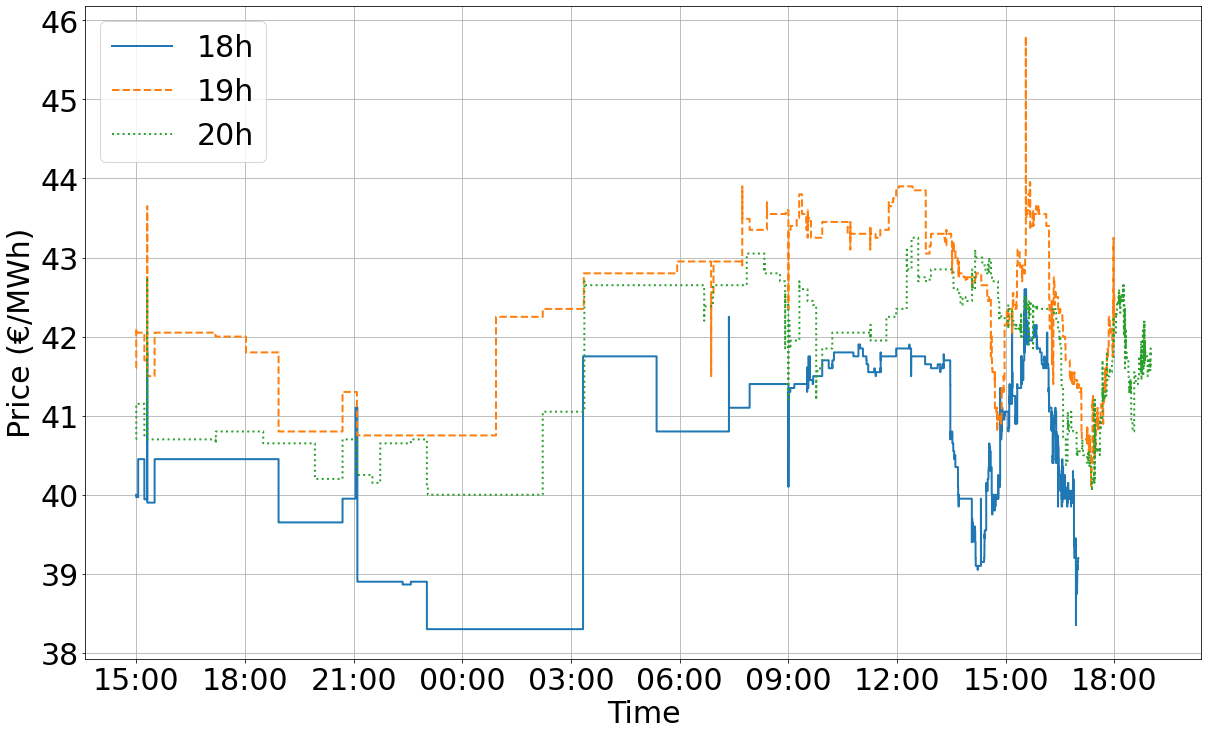}
    \caption{Prices on July 11\textsuperscript{th}, 2017}
\end{subfigure}
   \begin{subfigure}{0.9\textwidth}
    \centering
   \includegraphics[width=\textwidth]{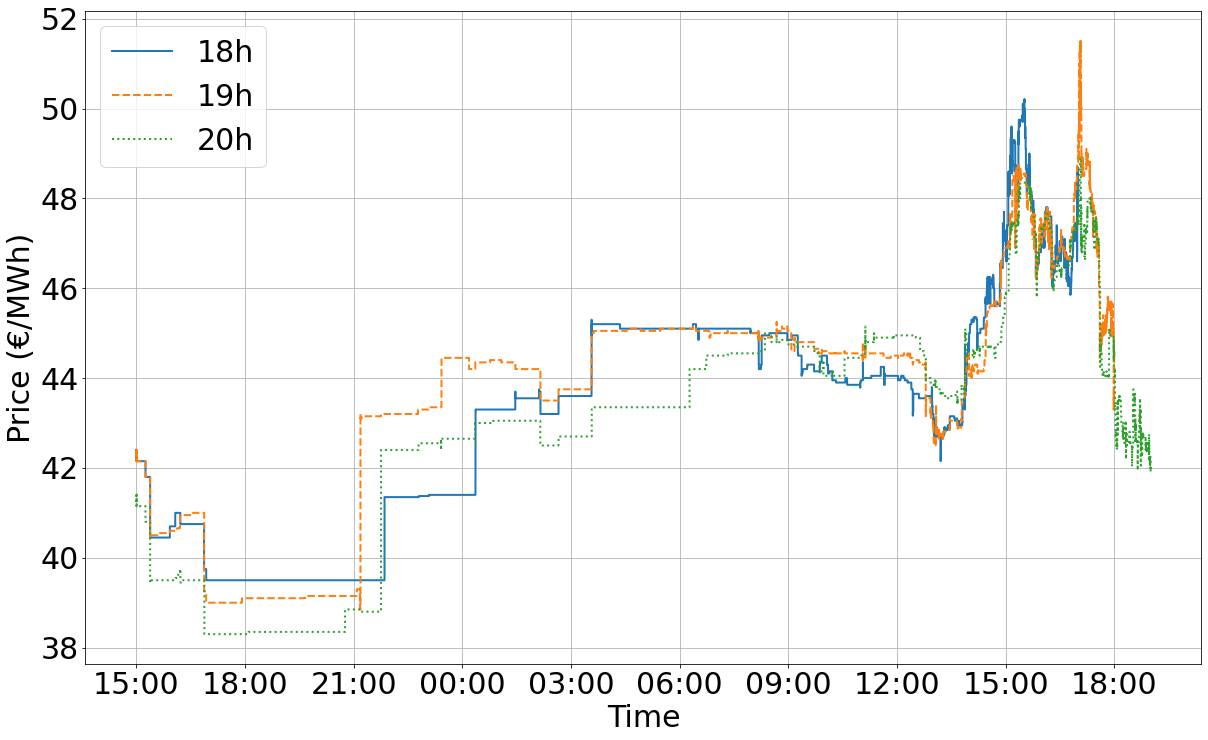}
    \caption{Prices on August 30\textsuperscript{th}, 2017}
    \end{subfigure}
       \caption{\label{fig:prices_ts} Intraday mid-prices for deliveries at 18h, 19h and 20h up to 1 hour before maturity for two trading sessions}
\end{figure}

\subsection{Arrival times of price changes distribution} 
\label{sec:intensity}

In this section, we study statistical properties of the price changing times, that is when there is an upward or downward move in the price. Intensity is first studied, followed by the (non) adequacy of the arrival times to an inhomogeneous Poisson process.

\medskip
\paragraph{\it Increasing intensity} Looking at Figure~\ref{fig:prices_ts}, the first empirical finding is that market activity increases when trading time approaches maturity. Let $N^c_t$ be the number of price changes between $0$ and $t$. Figure~\ref{fig:cum_intensity} represents the path of $(N^c_t)_t$ for one trading session but also the average for all trading sessions. $N^c_t$ is an estimator of the cumulative intensity up to time $t$. If the intensity were constant, $(N^c_t)_t$ would be a linear function of time. Let $(\lambda^c_t)_t$ be the intensity associated to the jump process $(N^c_t)_t$. We pointly estimate $(\lambda^c_t)_t$ using the kernel estimator
\[
\hat{\lambda}^c_t = \frac{\int_{0}^T \mathcal{K}_h(t-s)dN^c_s}{\int_{0}^T \mathcal{K}_h(t-s)ds} 
\]
with $T$ the last date of observation (that is one hour before maturity), $\mathcal{K}_h(u) = \frac{1}{h} \mathcal{K}(\frac{u}{h})$ for $u \in \mathbb{R}$, $h > 0$ and $\mathcal{K}$ a kernel function. The denominator allows to avoid boundary effects when the number of data diminishes. Figure~\ref{fig:kernel_intensity} shows the intensity process for a single trading session and the average of all the intensity processes using the Epanechnikov kernel $\mathcal{K}(u) = \frac{3}{4}(1-u^2){\bf 1}_{|u|\leq1}$ and the bandwidth $h = 300$ seconds. Typically, market activity is almost null during the first hours of the trading session, and at some point, intensity begins increasing exponentially as trading time comes closer to maturity. The quality of renewable production and consumption forecasts improves only a few hours before the delivery, so that market actors do not really have enough information to reduce their imbalances during the first hours of the trading session. Those empirical findings are consistent with the ones of \cite{Favetto2019, Graf2020} showing an increasing market activity in the order book with trading time.

\begin{figure}
    \centering
     \centering
    \begin{subfigure}{0.49\textwidth}
    \centering
    \includegraphics[width=\textwidth]{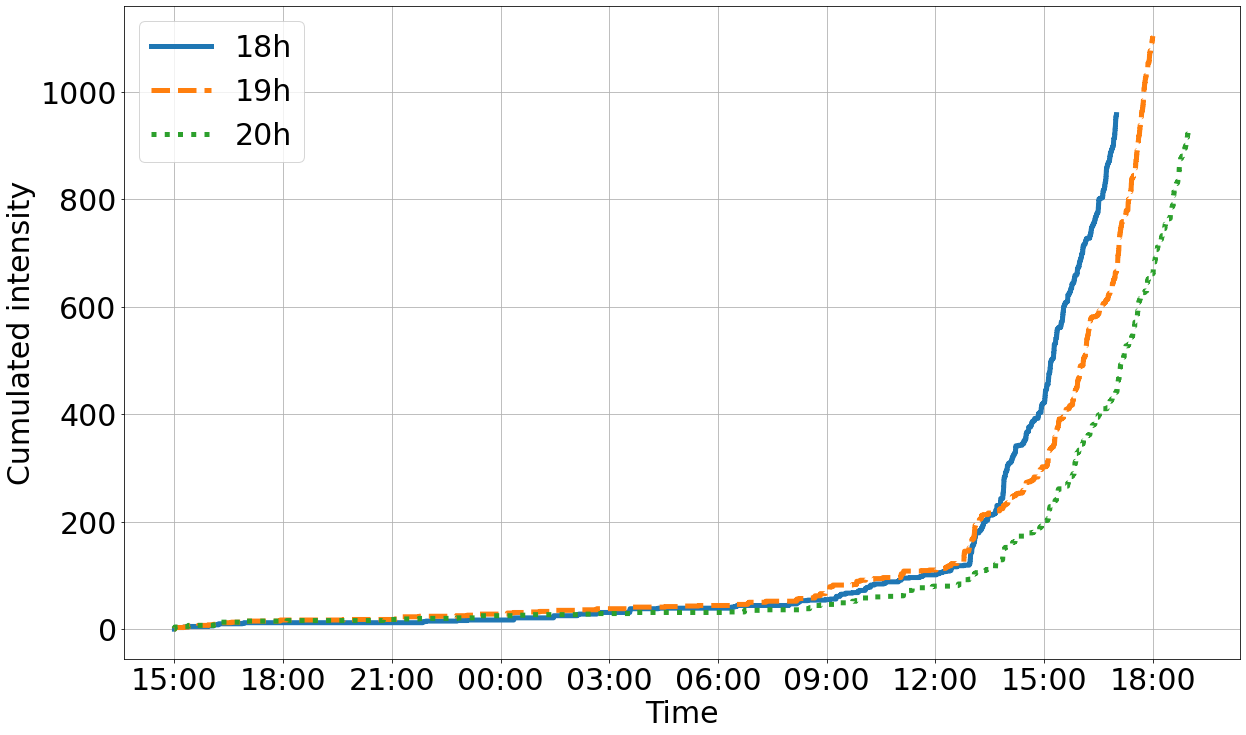}
    \caption{August 30\textsuperscript{th}, 2017}
\end{subfigure}
   \begin{subfigure}{0.49\textwidth}
    \centering
   \includegraphics[width=\textwidth]{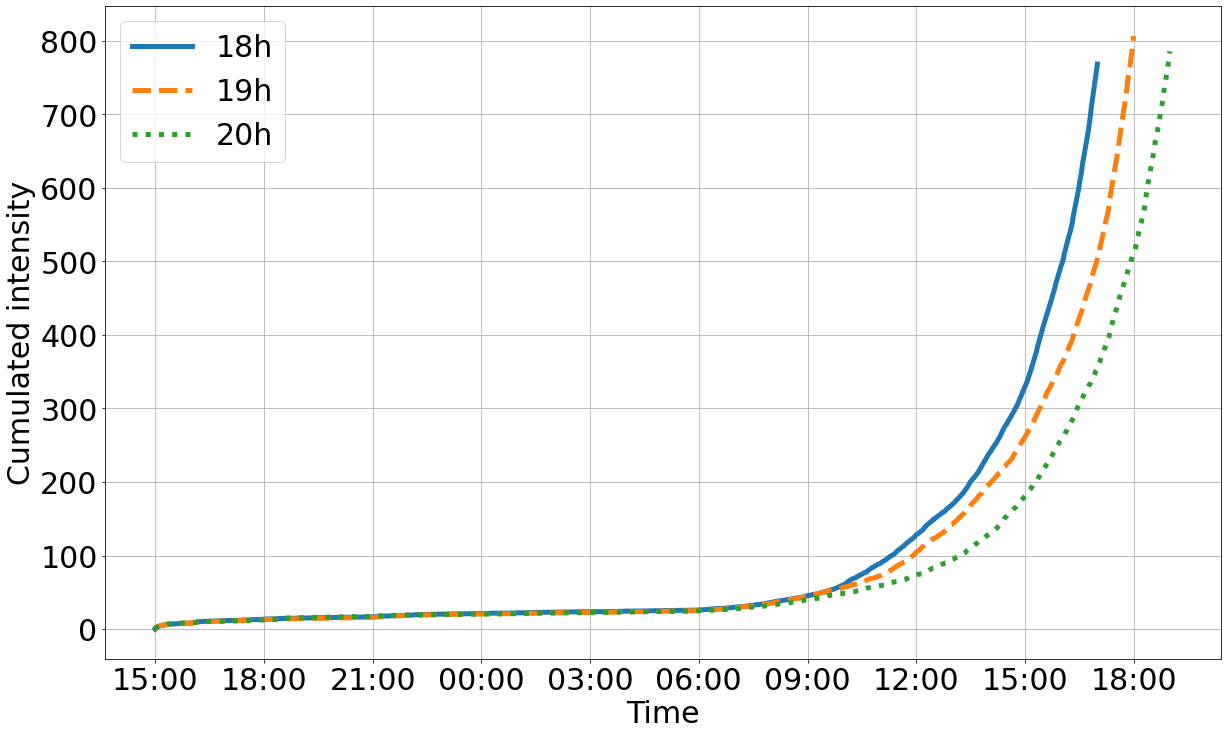}
    \caption{Average}
    \end{subfigure}
       \caption{\label{fig:cum_intensity} Number of price changes for deliveries at 18h, 19h and 20h up to 1 hour before maturity for one trading session and on average over all the dataset}
\end{figure}

\begin{figure}
    \centering
    \begin{subfigure}{0.49\textwidth}
    \centering
    \includegraphics[width=\textwidth]{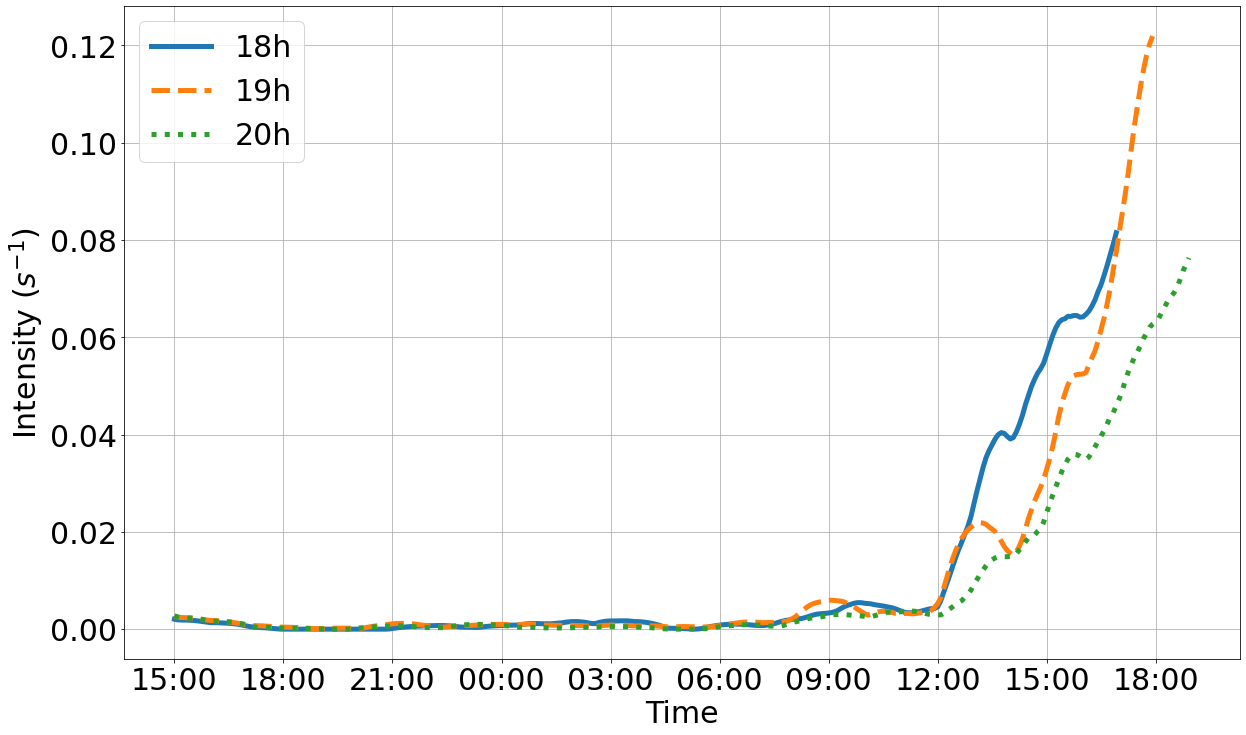}
    \caption{Intensity on August, 30\textsuperscript{th}, 2017}
\end{subfigure}
   \begin{subfigure}{0.49\textwidth}
    \centering
   \includegraphics[width=\textwidth]{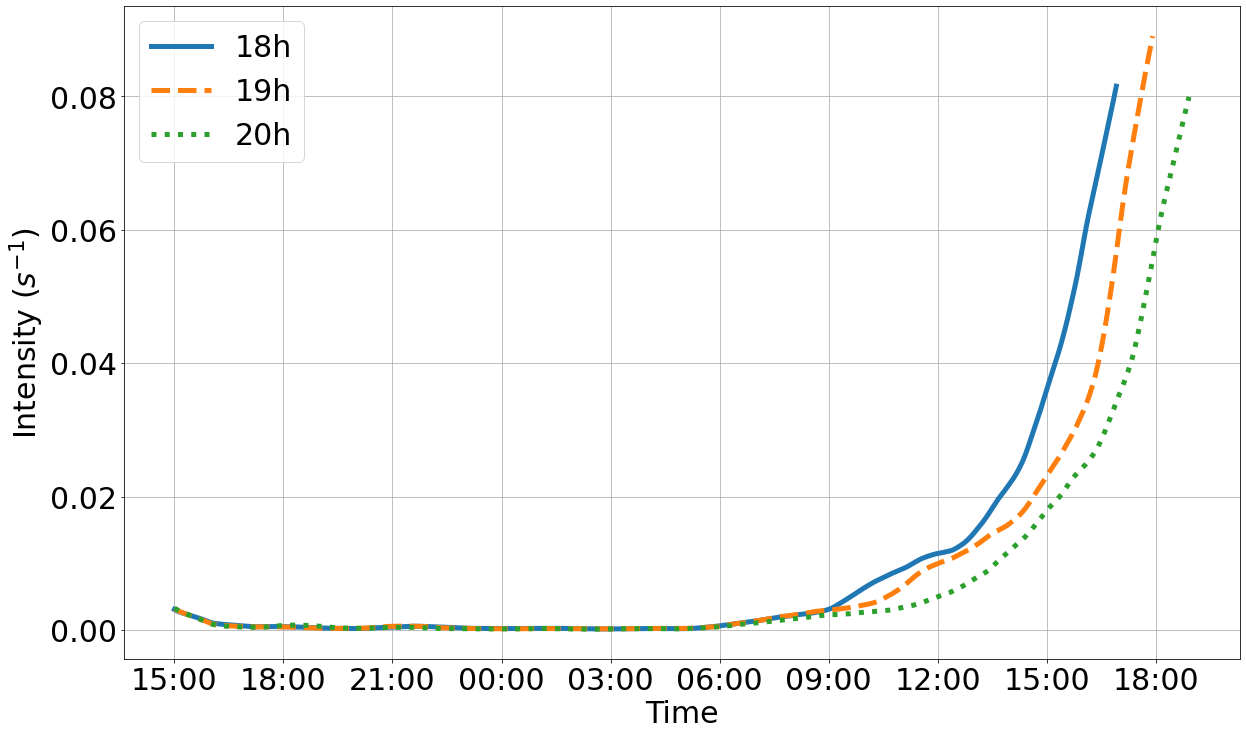}
    \caption{Average}
    \end{subfigure}
       \caption{\label{fig:kernel_intensity} Intensity of price changes for deliveries at 18h, 19h and 20h up to 1 hour before maturity for one trading session and on average over all the dataset}
\end{figure}

\medskip
\paragraph{\it Non Poissonian arrival times} Let us assume that $(N^c_t)_t$ is an inhomogeneous Poisson process with deterministic intensity $(\lambda^c_t)_t$. Thus, the variables $\Lambda^c(\tau^c_i) -\Lambda^c(\tau^c_{i-1})$, where $\Lambda^c(t) = \int_0^t \lambda^c_u du$ and $(\tau^c_i)_{i \geq 0}$ are the jump times of $N^c$, are identically and independently distributed (i.i.d.) and follow an exponential distribution with parameter 1. It is then possible to compute the quantile-quantile plot of those variables against an exponential distribution, see Figure~\ref{fig:qqplot18} for the quantile-quantile plot corresponding to one trading session and maturity 18h. Results are similar for maturities 19h and 20h. The function $\Lambda^c$ at time $t$ is estimated considering the empirical average of $N^c_t$ across the different days. The process is clearly not an inhomogeneous Poisson process, which argues in favor of a more sophisticated modeling. 

\begin{figure}
    \centering
    \includegraphics[width=0.6\textwidth]{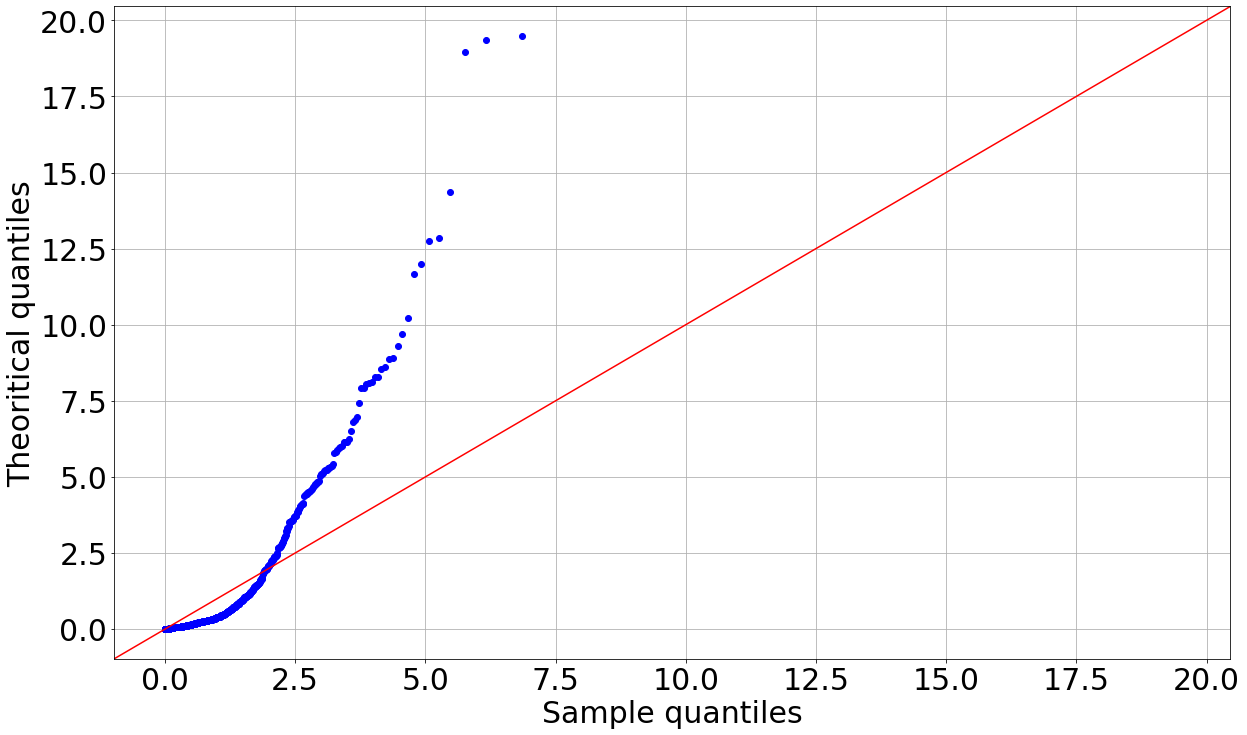}
    \caption{Quantile-quantile plot between the time-changed jump time intervals and an exponential distribution for the trading session of August, 30\textsuperscript{th}, 2017 and for maturity 18h in order to test if the process of price changes is an inhomogenous Poisson process}
    \label{fig:qqplot18}
\end{figure}

\subsection{Jump sizes distribution}
\label{sec:jumpsize}

While very liquid markets present tick by tick data, that is prices moving by one tick up or down, the electricity intraday market features jumps with different sizes. The purpose of this section is to study the distribution of the jump heights. We distinguish positive and negative jumps. When we study the negative jumps, we look at their absolute value. The distributions of positive and negative jumps are represented in Figure~\ref{fig:histjump} (left column) for maturities 18h, 19h and 20h.

\medskip
First, we can observe jump sizes up to 100~\euro/MWh. Those jumps are probably triggered by the lack of liquidity at the beginning of the trading session. As noticed by Balardy~\cite{Balardy2017}, the bid-ask spread is large when time is far from maturity. One order can then move the price strongly. If we consider only price moves less than 9 hours before maturity, the jump sizes are lower as seen in the histogram in Figure~\ref{fig:histjump} (right column). Figure~\ref{fig:statsttm} represents the mean and standard deviation of jump sizes when we stick to the jumps happening less than $x$ hours before the delivery starts, $x$ being the value on the x-axis. The mean and the standard deviation decrease when time approaches maturity, which is consistent with the increasing liquidity and decreasing bid-ask spread. There is a stabilization of those two moments between 8 and 10 hours before maturity. From now on, and in the rest of the paper, particular attention is paid to the price distribution starting 9~hours before maturity.

\medskip 
Second, we can observe that positive and negative jumps seem to have a similar distribution in Figure \ref{fig:histjump}. The Kolmogorov-Smirnov test to compare the two distributions, with null hypothesis being that the two sequences of samples are drawn for the same distribution \cite{Hodges1958}, is not rejected at level 95\% for maturity 18h, but it is for maturities 19h and 20h, considering the jump sizes from 9 hour before maturity. Performing this test on each day, it is not rejected for maturity 18h in $64.1~\%$ of the days in the dataset, $67.4~\%$ of them for maturity 19h and $77.2~\%$ of them for maturity 20h, at level $95\%$. 
Results are similar considering the whole trading session. 
Table~\ref{tab:secondmomentCI} gives confidence intervals for first and second orders moments at level $95\%$ for positive and negative jumps happening less than 9 hours before maturity. Intervals linked to positive and negative jumps intersect for both moments. The same table for the whole trading session could be displayed but it gives very wide intervals for the second order moments caused by the very large jumps. While we cannot perfectly conclude on the equality in distribution of positive and negative jumps, it seems to be a reasonable assumption if modeling the trading session starting 9 hours before maturity, especially as properties linked to second order moments will be of particular interest in our model.

\begin{figure}
    \centering
     \centering
    \begin{subfigure}{0.49\textwidth}
    \centering
    \includegraphics[width =\textwidth]{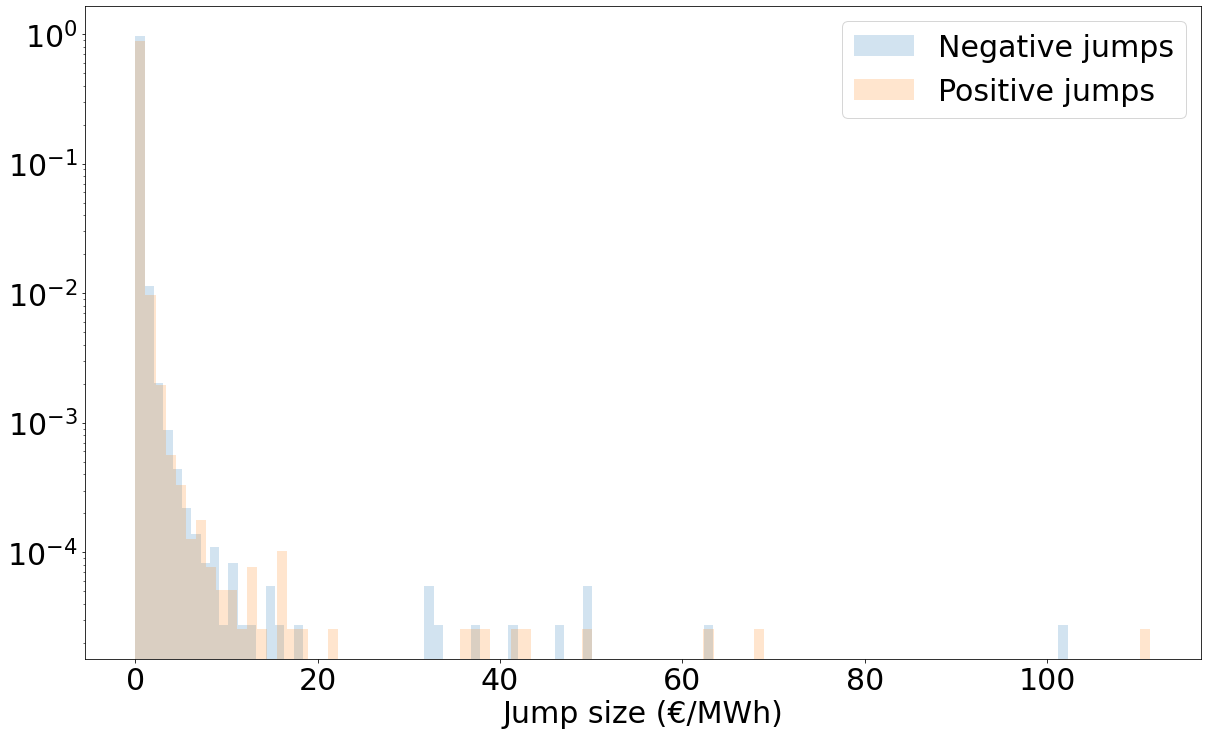}
    \caption{\label{fig:histjump18ttm0}  All trading sessions, maturity 18h}
\end{subfigure}
   \begin{subfigure}{0.49\textwidth}
    \centering
   \includegraphics[width=\textwidth]{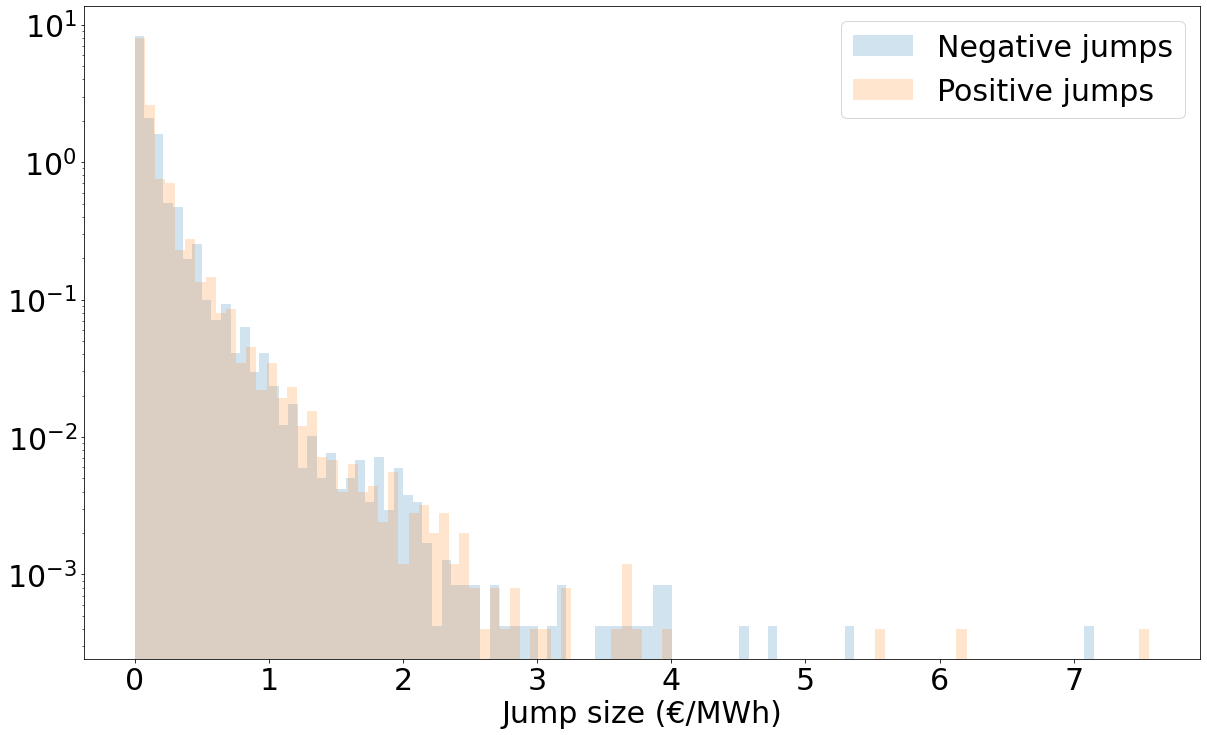}
    \caption{\label{fig:histjump18ttm9} 9 hours before maturity, maturity 18h}
    \end{subfigure}
    \begin{subfigure}{0.49\textwidth}
    \centering
    \includegraphics[width =\textwidth]{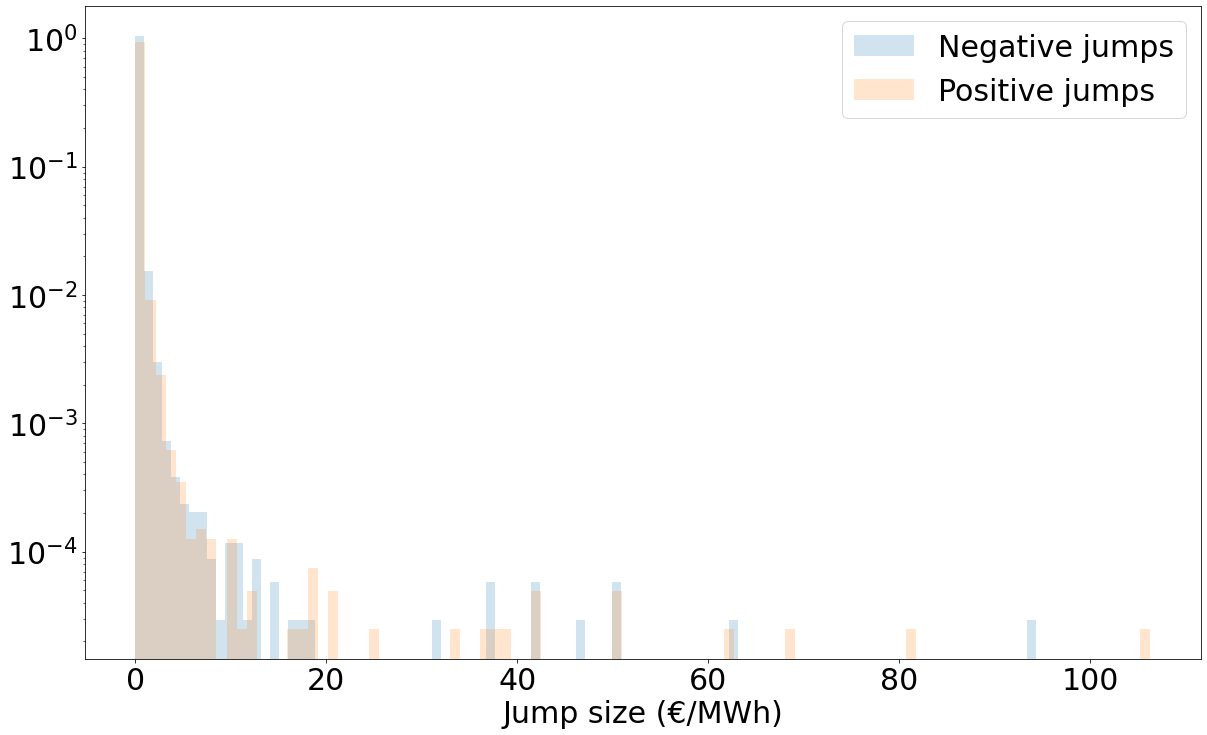}
    \caption{All trading sessions, maturity 19h}
\end{subfigure}
   \begin{subfigure}{0.49\textwidth}
    \centering
   \includegraphics[width=\textwidth]{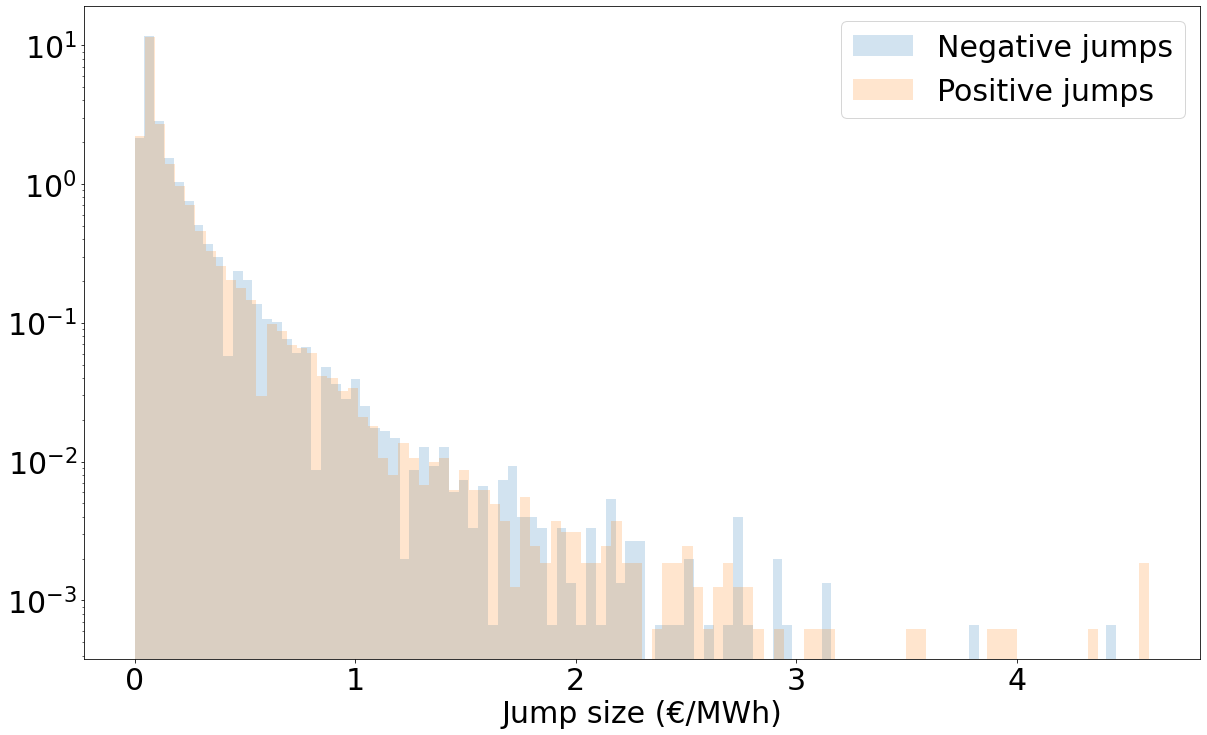}
    \caption{9 hours before maturity, maturity 19h}
    \end{subfigure}
    \begin{subfigure}{0.49\textwidth}
    \centering
    \includegraphics[width =\textwidth]{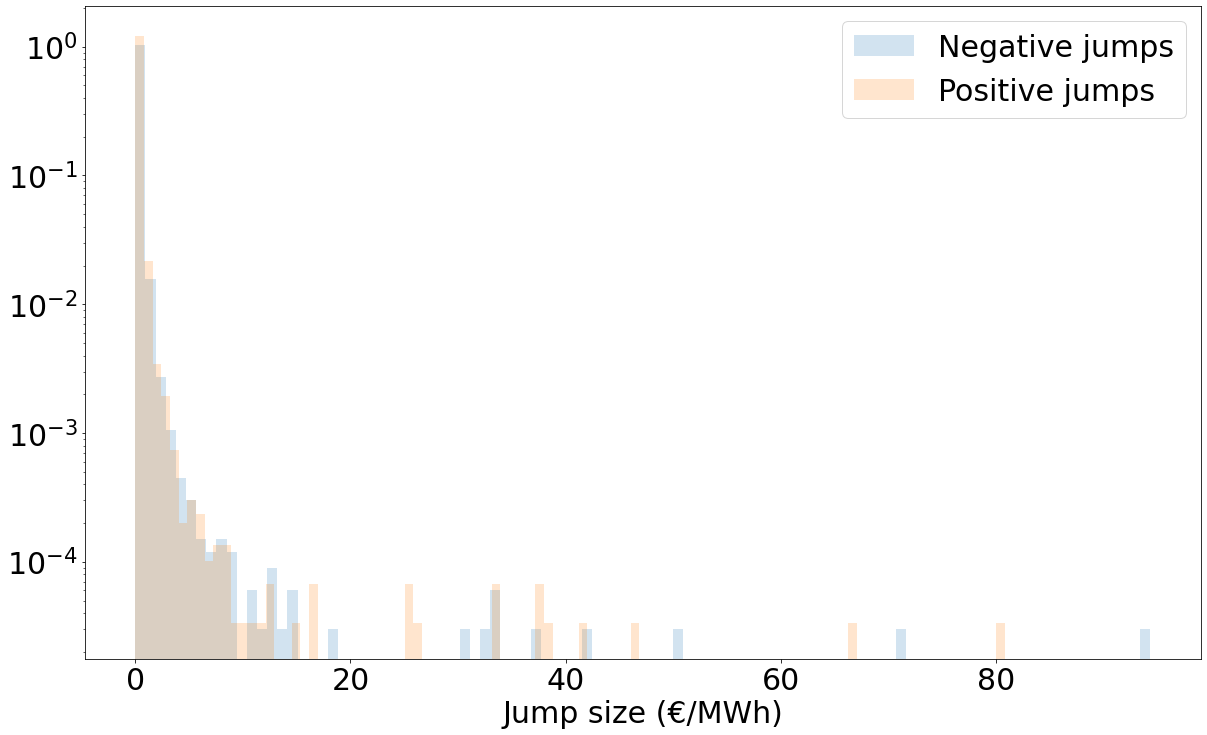}
    \caption{All trading sessions, maturity 20h}
\end{subfigure}
   \begin{subfigure}{0.49\textwidth}
    \centering
   \includegraphics[width=\textwidth]{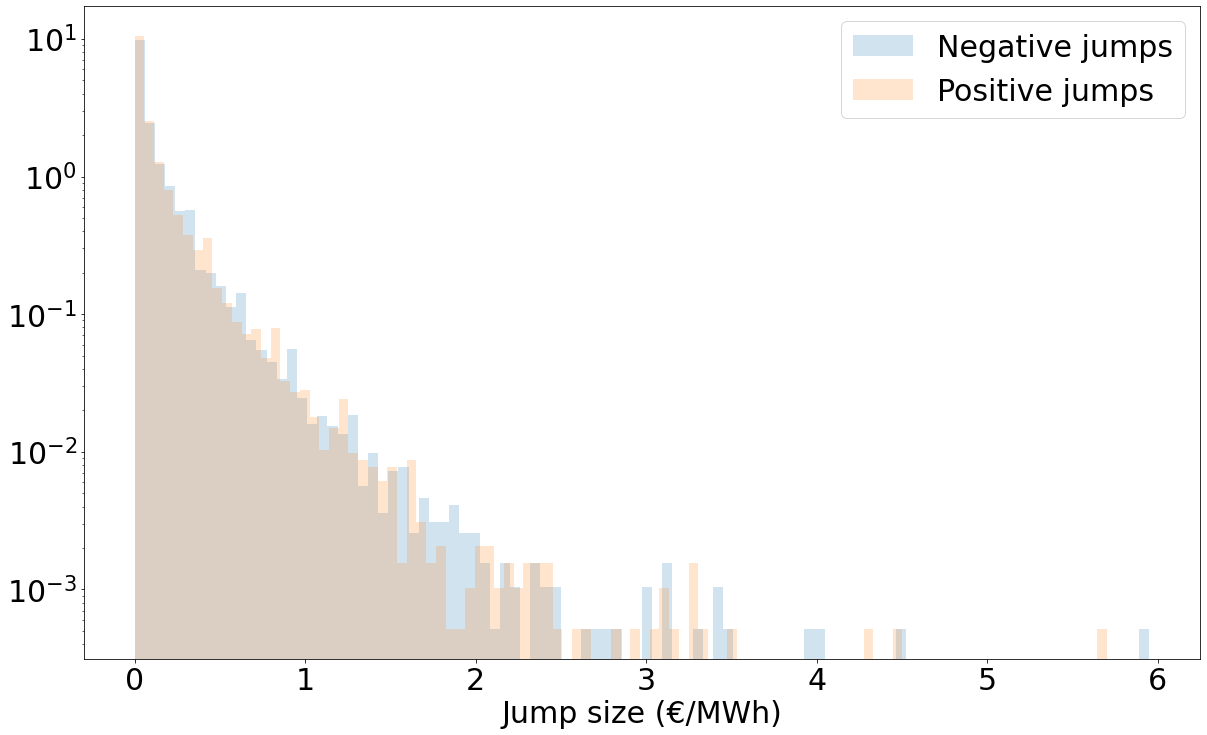}
    \caption{9 hours before maturity, maturity 20h}
    \end{subfigure}
       \caption{ \label{fig:histjump}Positive and negative jump size distributions with a log scale on the y-axis, for maturities 18h, 19h and 20h for all the trading sessions (left), keeping only jumps that happen less than 9 hours before maturity (right)}
\end{figure}

\begin{figure}
    \centering
     \centering
    \begin{subfigure}{0.49\textwidth}
    \centering
    \includegraphics[width =\textwidth]{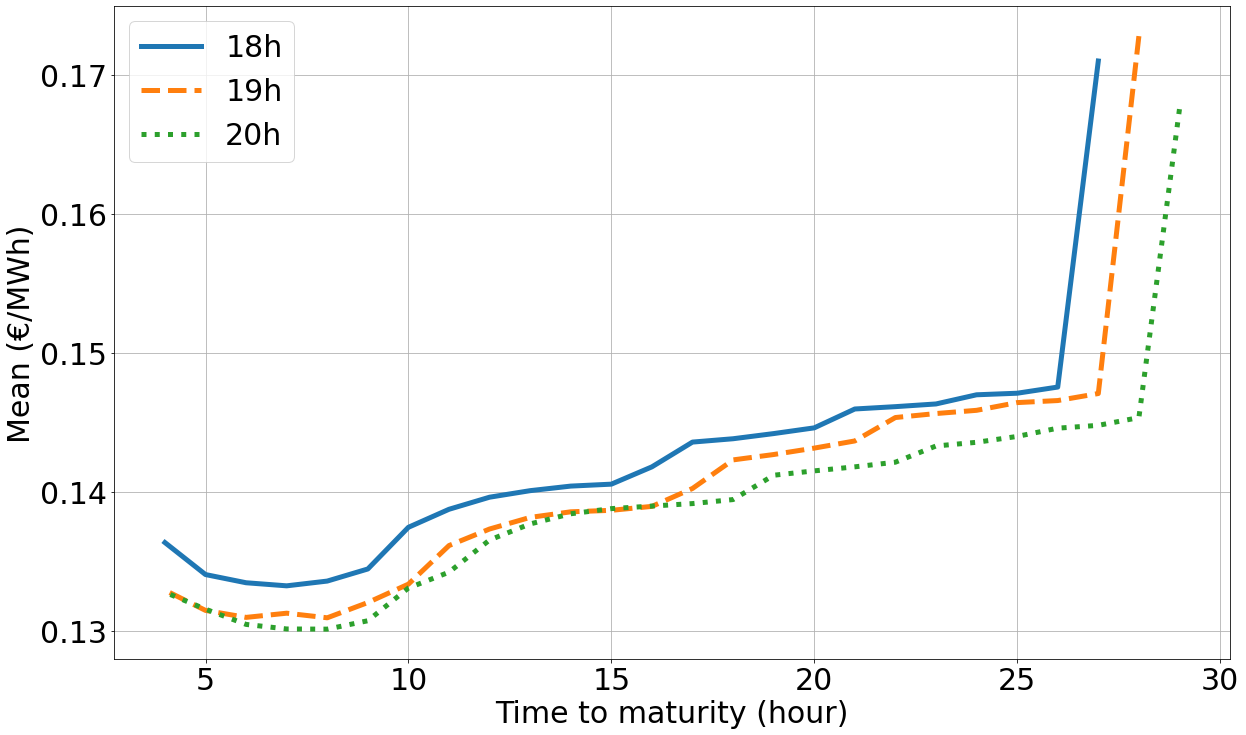}
    \caption{Mean}
\end{subfigure}
   \begin{subfigure}{0.49\textwidth}
    \centering
   \includegraphics[width=\textwidth]{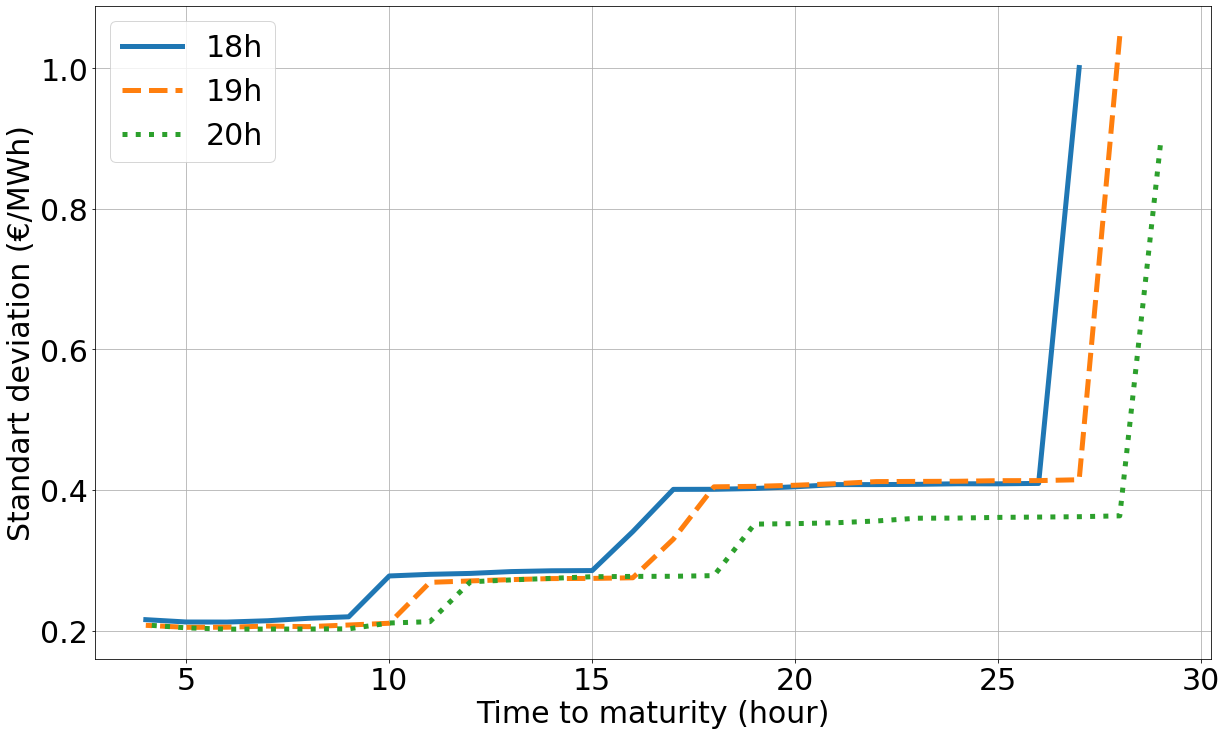}
    \caption{Standard deviation}
    \end{subfigure}
       \caption{Mean and standard deviation of jump sizes (positive and negative considered indifferently) against time to maturity: x-axis corresponds to the number of hours before maturity at which the estimation starts}
       \label{fig:statsttm}
\end{figure}

\begin{table}
    \centering
\begin{tabular}{lcccc}
\toprule
    \multirow{2}{*}{Hour / Stat.} &
      \multicolumn{2}{c}{Mean} &
      \multicolumn{2}{c}{Second order moment} \\
      & {Neg.} & {Pos.} & {Neg.} & {Pos.}  \\
      \midrule
          18 &[0.133,0.138] &  [0.131,0.136] &     [0.0613,0.0731] &      [0.0597,0.0711] \\
          19 &  [0.129,0.133] &  [0.131,0.136] &     [0.0574,0.0661] &     [0.0560,0.0631] \\
          20 &  [0.127,0.131] &   [0.130,0.135] &     [0.0527,0.0607] &     [0.0552,0.0638] \\
\bottomrule
\end{tabular}
    \caption{Confidence intervals for first and second order moments of negative and positive price jumps at level $95\%$ for the different maturities and considering jumps happening less than 9 hours before maturity}
    \label{tab:secondmomentCI}
\end{table}

\subsection{Signature plot}
\label{sec:signatureplot}
One of the main interesting facts concerning high-frequency data studied those last years is the presence of microstructure effects and in particular the existence of some patterns in the signature plot: the estimated realized volatility
\[\hat{C}(T, \delta) = \frac{1}{T} \sum_{i=1}^{\lfloor \frac{T}{\delta} \rfloor} (f_{i\delta} - f_{(i-1)\delta})^2, \; \delta > 0,\]
decreases with the estimation frequency, where $(f_t)_t$ is the price process observed between $0$ and $T$, and $\delta$ represents the estimation time step (inverse of the frequency). While it is natural to consider the largest frequency to estimate quadratic variation with a high precision, the estimator can increase a lot and be unstable for large frequencies. The signature plot is the function
\[C(T, \delta) = \mathbb{E}\left(\hat{C}(T, \delta) \right), \; \delta > 0.\] 
This phenomenon is called microstructure noise and it is caused by the fact that at high frequencies, prices have a mean reverting behavior ; a positive jump is often followed by a negative one and this round trip is not observed at lower frequencies. The objective is then to have a consistent model that can explain this volatility change: this is not the case if we consider a Brownian motion which is invariant by scale change and has a flat signature plot. The signature plot for the electricity intraday prices is represented in Figure~\ref{fig:signatureplot} considering the price process less than 9 hours before maturity for one trading session (left) and on average (right). This shape is common in the high-frequency finance literature, with a fast decrease when frequency goes high and a stabilization as it becomes low. Yet, to our knowledge, we are the first to exhibit such behavior for electricity intraday prices. The signature plot still has this shape when considering all the trading sessions but it is less smooth because of the big jumps happening at the beginning of the trading session.

\begin{figure}
    \centering
     \centering
    \begin{subfigure}{0.49\textwidth}
    \centering
    \includegraphics[width=\textwidth]{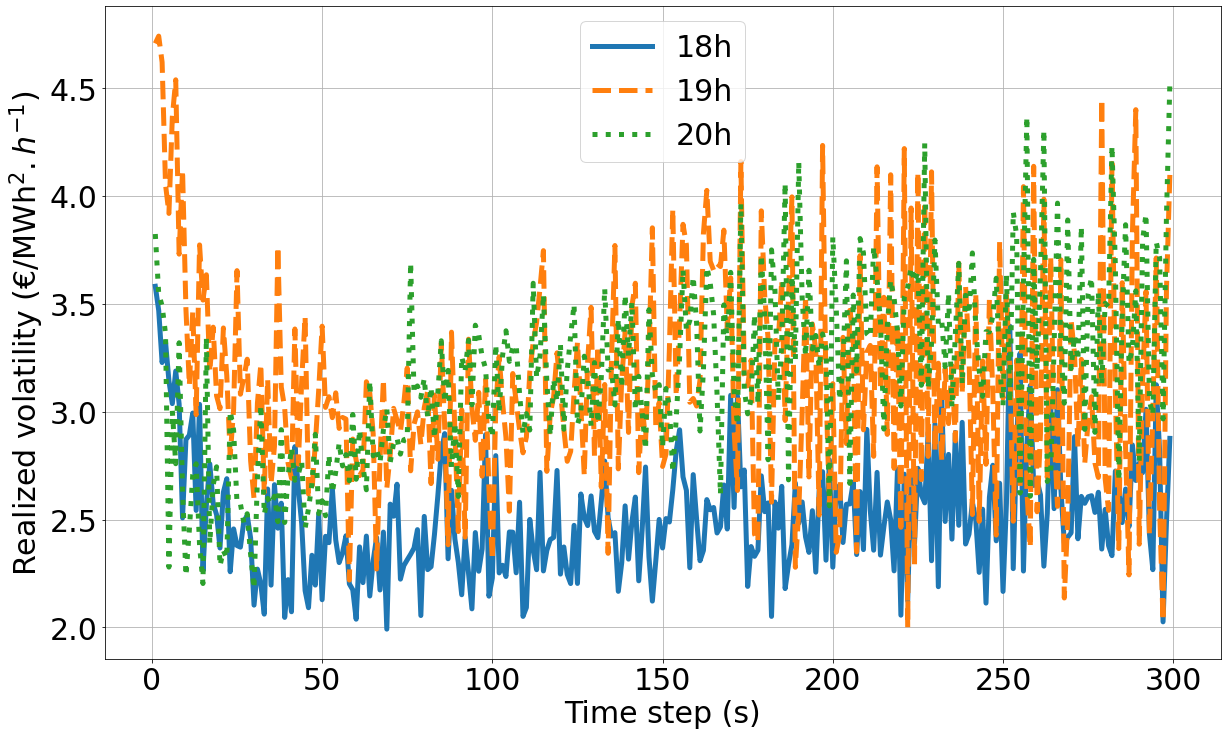}
    \caption{August 30\textsuperscript{th}, 2017}
\end{subfigure}
   \begin{subfigure}{0.49\textwidth}
    \centering
   \includegraphics[width=\textwidth]{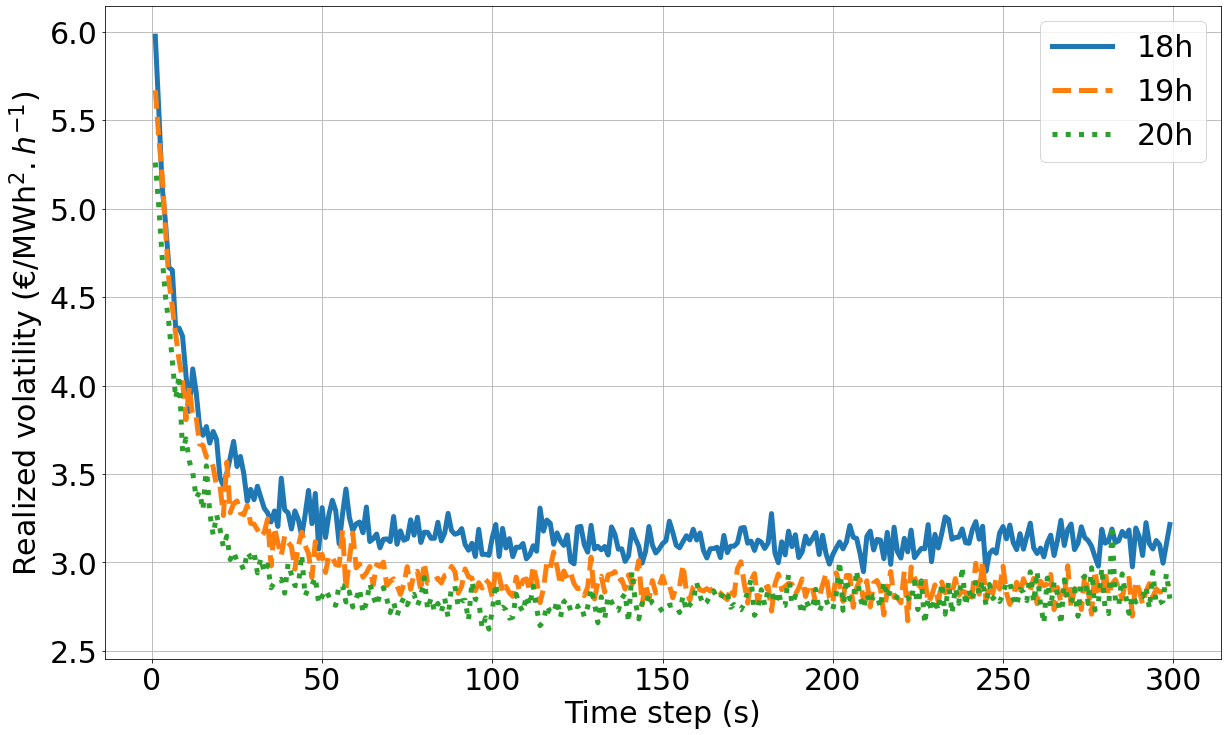}
    \caption{Average}
    \end{subfigure}
       \caption{\label{fig:signatureplot} Signature plot for different maturities estimated from 9 hours before maturity for one trading session, and on average over all trading sessions}
\end{figure}

\section{Price model}
\label{sec:model}

Let us consider the sequence of arrival times $0 < \tau_1 < \tau_2 < \ldots$ defined on a rich enough probability space $(\Omega, \mathcal{F}, \mathbb{P})$ endowed with a right continuous and complete filtration $(\mathcal{F}_t)_{t \geq 0}$. Let $(J_i)_{i \geq 1}$ be a sequence of positive i.i.d. random variables such that $J_{i}$ is $\mathcal{F}_{\tau_i}$ measurable for $i \geq 1$. We assume that they have the same law as a random variable $J$ defined on $(\Omega, \mathcal{F}, \mathbb{P})$ with $\mathbb{E}(J^2) < \infty$. Marking the arrival times $(\tau_i)_i$, one can construct two sequences $(\tau^+_i)_i$ and $(\tau^-_i)_i$, associated with two sequences of jump sizes $(J^+_i)_i$ and $(J^-_i)_i$, and define the bivariate point process $N = (N^+, N^-)^{\top}$ with
\[N^+_t = \sum_{n=1}^{\infty} {\bf 1}_{\tau^+_n \leq t},\]
\[N^-_t = \sum_{n=1}^{\infty} {\bf 1}_{\tau^-_n \leq t}.\]
From now on, we work on the finite time horizon $\left[0,T\right]$, $T > 0$. We model the point process $N$ as a bivariate Hawkes process whith intensity depending on the marks $(J_i)_{i}$:
\begin{equation} \label{eq:intensities}
\begin{pmatrix}\lambda_t^+\\ \lambda_t^-
\end{pmatrix} =  \mu\left(\frac{t}{T}\right) \begin{pmatrix} 1 \\ 1 \end{pmatrix} + \int_0^t \varphi(t-s)\begin{pmatrix}J_sdN_s^+\\ J_sdN_s^-\end{pmatrix}\end{equation}
with $\mu : \left[0,1\right] \to \mathbb{R}_+$ a non decreasing bounded function and $\varphi : \mathbb{R}_+ \to \mathbb{R}^{2,2}$ a locally bounded function with positive components such that $\rho(K) < 1$, where $K = \mathbb{E}(J)\int_0^{\infty}|\varphi(u)|du$ and $\rho(K)$ is the spectral radius of $K$. Recall that the standard notation $\int_0^t h(s)J_sdN^+_s$ (resp. $\int_0^t h(s)J_sdN^-_s$) for a measurable function $h$ stands for $\sum_{i=1}^{N_t^+} J_i^+ h(\tau^+_i)$ (resp. $\sum_{i=1}^{N_t^+} J_i^- h(\tau^-_i)$). Let us also consider the marked version of $N$, $(f^+, f^-)^{\top}$, that models the upward and downward jumps in the price by 
\begin{equation} \label{eq:upwarddownward}
\begin{pmatrix} f^+_t \\ f^-_t \end{pmatrix} = \int_0^t \begin{pmatrix} J_s dN^+_s \\ J_s dN^-_s \end{pmatrix}.
\end{equation}
The price is then given by
\begin{equation} \label{eq:prices}
f_t = f_0 + f_t^+ - f_t^-
\end{equation}
with $f_0 \in \mathbb{R}$ the initial value of the price. 

\begin{remark} The marked Hawkes process could also be defined using random measure notations, see Jacod and Shiryaev~\cite[Chapter 2]{Jacod13}. Let $N(dt,dx) = \left(N^+(dt,dx), N^-(dt,dx)\right)^{\top}$ be the 2-dimensional random Poisson measure with compensator $\nu(dt, dx) = \lambda_t dt \otimes \mu_J(dx)$ on $\left[0,T\right] \times \mathbb{R}_+$ where $\mu_J(dx)$ is a probability measure representing the law of $J$ and 
\[\lambda_t =  \mu\left(\frac{t}{T}\right) \begin{pmatrix} 1 \\ 1 \end{pmatrix} + \int_0^t \int_{\mathbb{R}_+} x \varphi(t-s) N(ds,dx).\]
The components of the price process are then defined by 
\[\begin{pmatrix} f^+_t \\ f^-_t \end{pmatrix} = \int_0^t \int_{\mathbb{R}_+} x N(ds,dx)\]
and 
\[f_t = f_0 + \int_0^t \int_{\mathbb{R}_+} x \left(N^+(ds, dx) - N^-(ds, dx)\right).\]
One can then naturally define the integral, for a measurable function $h$, 
\[\int_0^t \int_{\mathbb{R}_+} h(s,x) N(ds,dx) = \begin{pmatrix} \sum_{i=1}^{N_t^+} h(\tau_i^+, J_i^+) \\ \sum_{i=1}^{N_t^-} h(\tau_i^-, J_i^-) \end{pmatrix}.\]
While this notation can seem more convenient than the notation $J_sdN_s$, we prefer keeping the latter as it is more used in financial price modeling or insurance risk modeling.
\end{remark}

\medskip
A priori, our model allows to represent the different empirical facts identified on data in Section~\ref{sec:stylizedfact}:
\begin{itemize}
    \item[(i)] the baseline intensity depends on time, in order to model the increasing market activity over a trading session identified in Section~\ref{sec:intensity};
    \item[(ii)] the Hawkes modeling encompasses the inhomogeneous Poisson framework which does not fit the data well, see Section~\ref{sec:intensity};
    \item[(iii)] positive and negative jump heights in price are random with the same distribution, see Section~\ref{sec:jumpsize};
    \item[(iv)] the mutual excitation between positive and negative jumps is expected to represent the signature plot described in Section~\ref{sec:signatureplot} well, as in Bacry et al.~\cite{Bacry2013a}.
\end{itemize}

\begin{remark}
The time in the baseline function is normalized by the time horizon $T$. The normalization allows to have a bounded intensity even when the time horizon becomes large. It is then possible to study the limit behavior of the model when $T \to \infty$ (macroscopic scale), which is done in Sections~\ref{sec:theoriticalprop} and \ref{sec:macro}. This normalization is used by Duval and Hoffmann~\cite{Duval11} to study the limit behavior of an inhomogeneous compound Poisson process when $T \to \infty$.
\end{remark}

We choose to set the model as in Assumption~\ref{ass:parametric} in order to have analytical and tractable formulas while still allowing to represent the different empirical facts. Our model presents two main differences with the classic Hawkes model used by Bacry et al.~\cite{Bacry2013a}, which are the presence of random jumps, both in prices and intensities, and the time dependent baseline intensity. This parameterization is the same as the one in Graf von Luckner and Kiesel~\cite{Graf2020} to model the order book activity. The mid-price considered here being strongly related to the order book activity, the empirical study in \cite{Graf2020} is in favor of the choice of the parameterization~\ref{ass:parametric}. The main difference made by our model is the presence of the jump size in the intensity and the null diagonal in $\varphi$ : this last assumption allows us to have a tractable model with closed formulas for the moments and the signature plot, as shown in Section~\ref{sec:theoriticalprop}. 

\begin{assumption}
\label{ass:parametric}
The baseline intensity and the excitation function are given by
\begin{itemize}
\item[(i)] $\mu(t) = \mu_0 e^{\kappa t}$ with $\mu_0$, $\kappa > 0$;
\item[(ii)] $\varphi = \begin{pmatrix} 0 & \varphi_{\exp}\\ \varphi_{\exp} & 0 \end{pmatrix}$ with $\varphi_{\exp}: t \mapsto \alpha e^{-\beta t}$, $\alpha$, $\beta > 0$ ; the condition $\rho(K) < 1$ becomes $\alpha \mathbb{E}(J) < \beta$.
\end{itemize}
\end{assumption}

\subsection{Theoretical properties}
\label{sec:theoriticalprop}
In this section, one confirms theoretically that our model is able to feature an increasing market activity, by computing $\mathbb{E}(\lambda^+_t + \lambda^-_t)$ using Proposition~\ref{prop:expectation}, and that it can provide a good signature plot representation using Proposition~\ref{prop:signatureplot}, under Assumption~\ref{ass:parametric}. First and second order moment properties for the price are also given in Proposition~\ref{prop:expectation} and Proposition~\ref{prop:moment2}.

\medskip
Proposition~\ref{prop:expectation} gives the expectation of the positive and negative price changes together with the expectation of the intensities. Proof is given in Section~\ref{proof:propexpectation}. The market activity intensity is equal to $\mathbb{E}(\lambda_t^+ + \lambda_t^-)$, which is increasing exponentially with time at a rate $\frac{\kappa}{T}$: our model can reproduce the increase in market activity intensity.

\begin{proposition} \label{prop:expectation} Let us consider the model \eqref{eq:intensities}--\eqref{eq:upwarddownward}--\eqref{eq:prices} under Assumption~\ref{ass:parametric}. We have for $t \in \left[0,T\right]$
\begin{align*}
\mathbb{E}(f^+_t) &= \mathbb{E}(f^-_t) =\mu_0\mathbb{E}(J)\left( \frac{\beta + \frac{\kappa}{T}}{\frac{\kappa}{T}\left(\beta - \alpha \mathbb{E}(J) + \frac{\kappa}{T}\right)}e^{\kappa \frac{t}{T}}\right.\\
&\hspace{3.25cm}\left.+ \frac{\alpha \mathbb{E}(J)}{\left(\beta-\alpha\mathbb{E}(J)\right)\left(\beta-\alpha\mathbb{E}(J) + \frac{\kappa}{T}\right)}e^{-(\beta-\alpha\mathbb{E}(J))t} - \frac{\beta}{\frac{\kappa}{T}(\beta - \alpha\mathbb{E}(J))}\right)
\end{align*}
and
\[
\mathbb{E}(\lambda_t^+) = \mathbb{E}(\lambda_t^-) = \mu_0 \left(\frac{\beta + \frac{\kappa}{T}}{\beta - \alpha \mathbb{E}(J) + \frac{\kappa}{T}}e^{\kappa \frac{t}{T}} - \frac{\alpha \mathbb{E}(J)}{\beta - \alpha \mathbb{E}(J) + \frac{\kappa}{T}} e^{-(\beta - \alpha  \mathbb{E}(J))t}\right).
\]
\end{proposition}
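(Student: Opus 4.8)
The plan is to first compute the mean intensities $\mathbb{E}(\lambda_t^+)$ and $\mathbb{E}(\lambda_t^-)$, and then to recover $\mathbb{E}(f_t^\pm)$ by integrating them. Under Assumption~\ref{ass:parametric} the off-diagonal structure of $\varphi$ decouples the self-excitation, so that $\lambda_t^+ = \mu_0 e^{\kappa t/T} + \int_0^t \alpha e^{-\beta(t-s)} J_s \, dN_s^-$ and symmetrically for $\lambda_t^-$. Taking expectations, I would use that each mark $J_s$ is independent of $\mathcal{F}_{s^-}$ with the law of $J$, together with the fact that $\lambda^\mp$ is the intensity of $N^\mp$, to justify the identity $\mathbb{E}\!\left(\int_0^t \alpha e^{-\beta(t-s)} J_s \, dN_s^-\right) = \alpha\mathbb{E}(J)\int_0^t e^{-\beta(t-s)}\mathbb{E}(\lambda_s^-)\,ds$ (and the analogous one with $+$ and $-$ exchanged). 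Writing $m^\pm(t) = \mathbb{E}(\lambda_t^\pm)$, this yields the coupled system of linear Volterra equations $m^\pm(t) = \mu_0 e^{\kappa t/T} + \alpha\mathbb{E}(J)\int_0^t e^{-\beta(t-s)} m^\mp(s)\,ds$. Since the two equations are symmetric and share the same forcing term $\mu_0 e^{\kappa t/T}$, I would argue that $m^+ \equiv m^-$, reducing the problem to the single scalar equation $m(t) = \mu_0 e^{\kappa t/T} + \alpha\mathbb{E}(J)\int_0^t e^{-\beta(t-s)} m(s)\,ds$.

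Next I would exploit the exponential kernel to turn this convolution equation into an ODE. Setting $g(t) = \int_0^t e^{-\beta(t-s)} m(s)\,ds$, one has $g(0)=0$ and $g'(t) = -\beta g(t) + m(t)$; substituting $m(t) = \mu_0 e^{\kappa t/T} + \alpha\mathbb{E}(J)\, g(t)$ gives the first-order linear ODE $g'(t) = (\alpha\mathbb{E}(J) - \beta)\,g(t) + \mu_0 e^{\kappa t/T}$. Solving it with the integrating factor $e^{(\beta - \alpha\mathbb{E}(J))t}$ and the initial condition $g(0)=0$ produces $g$ as a linear combination of $e^{\kappa t/T}$ and $e^{-(\beta - \alpha\mathbb{E}(J))t}$, and then $m(t) = \mu_0 e^{\kappa t/T} + \alpha\mathbb{E}(J) g(t)$ yields exactly the stated formula for $\mathbb{E}(\lambda_t^+) = \mathbb{E}(\lambda_t^-)$ after collecting the coefficient of $e^{\kappa t/T}$ into $\frac{\beta + \kappa/T}{\beta - \alpha\mathbb{E}(J) + \kappa/T}$. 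The condition $\alpha\mathbb{E}(J) < \beta$ from Assumption~\ref{ass:parametric} guarantees $\beta - \alpha\mathbb{E}(J) > 0$, so all denominators are nonzero; in particular $\beta - \alpha\mathbb{E}(J) + \kappa/T \neq 0$ and the two exponents differ, which is what the integrations below require.

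For the price moments, I would use $f_t^+ = \int_0^t J_s \, dN_s^+$ and the same compensation argument to get $\mathbb{E}(f_t^+) = \mathbb{E}(J)\int_0^t \mathbb{E}(\lambda_s^+)\,ds = \mathbb{E}(J)\int_0^t m(s)\,ds$, and likewise $\mathbb{E}(f_t^-) = \mathbb{E}(J)\int_0^t m(s)\,ds$, which already gives $\mathbb{E}(f_t^+) = \mathbb{E}(f_t^-)$. It then remains to integrate the two exponentials in $m$ term by term; the only slightly delicate point is to check that the constants of integration combine to $-\frac{\beta}{(\kappa/T)(\beta - \alpha\mathbb{E}(J))}$, which follows from the algebraic identity $(\beta + \kappa/T)(\beta - \alpha\mathbb{E}(J)) + \alpha\mathbb{E}(J)\frac{\kappa}{T} = \beta\!\left(\beta - \alpha\mathbb{E}(J) + \kappa/T\right)$.

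I expect the main obstacle to be the rigorous justification of the compensation step, namely the exchange of expectation and stochastic integration and the evaluation $\mathbb{E}(J_s \, dN_s^\mp) = \mathbb{E}(J)\,\mathbb{E}(\lambda_s^\mp)\,ds$. This rests on the predictability of the intensity and the independence of the marks from the past, and requires a preliminary integrability bound — e.g. that $\mathbb{E}(N_t^\pm) < \infty$ on $[0,T]$, which one obtains from the local boundedness of $\varphi$, the bound $\mathbb{E}(J)<\infty$, and the stability condition $\rho(K) < 1$ — before Fubini and the martingale property can be applied. Once this is in place, everything else reduces to the linear ODE computation described above.
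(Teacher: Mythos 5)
Your proof is correct, but it takes a genuinely different route from the paper's. The paper first derives the characteristic function $L(a_+,a_-,t)$ of $(f^+,f^-)^{\top}$ via the cluster (branching) representation of marked Hawkes processes (Proposition~\ref{prop:charachawkes}), differentiates it at the origin to express $\mathbb{E}(f_t^{\pm})$ as $\int_0^t \Psi_{\pm}(s)\mu\left(\frac{t-s}{T}\right)ds$ with $\Psi_{\pm}$ solving a system of Volterra equations handled by Lemma~\ref{lemma:ode}, and only then recovers $\mathbb{E}(\lambda_t^{\pm})$ from the identity $\mathbb{E}(f_t^{\pm}) = \mathbb{E}(J)\,\mathbb{E}\left(\int_0^t \lambda_s^{\pm}ds\right)$. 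You proceed in the opposite order: you take expectations directly in the intensity equation~\eqref{eq:intensities}, use the compensator factorization $\mathbb{E}(J_s\,dN_s^{\mp}) = \mathbb{E}(J)\,\mathbb{E}(\lambda_s^{\mp})\,ds$ to get a coupled linear Volterra system for $m^{\pm}(t)=\mathbb{E}(\lambda_t^{\pm})$, reduce it to a scalar equation by symmetry (which requires the uniqueness of the solution of the Volterra system, a standard fact, or a direct Gr\"onwall bound on $m^+-m^-$), and solve it by converting the exponential-kernel convolution into a first-order linear ODE; the price expectations then follow by integrating $m$. Your computations check out: the ODE solution gives exactly the stated $\mathbb{E}(\lambda_t^{\pm})$, and the constant term in $\mathbb{E}(f_t^{\pm})$ follows from the identity $\left(\beta+\frac{\kappa}{T}\right)\left(\beta-\alpha\mathbb{E}(J)\right)+\alpha\mathbb{E}(J)\frac{\kappa}{T}=\beta\left(\beta-\alpha\mathbb{E}(J)+\frac{\kappa}{T}\right)$ that you identify. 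What your route buys is economy: for first moments it bypasses the cluster representation and characteristic-function machinery entirely, at the cost of the justification of the compensation step (predictability of the intensity plus $\mathbb{E}(N_T^{\pm})<\infty$), which you correctly flag; note that on the finite horizon $\left[0,T\right]$ this integrability follows from the boundedness of $\mu$ and local boundedness of $\varphi$ via a localized Gr\"onwall argument, so the stability condition $\rho(K)<1$ is not really needed there. What the paper's route buys is reuse: the derivatives of $C^{\pm}$ computed inside its proof of Proposition~\ref{prop:expectation} are exactly the ingredients recycled in the proofs of Propositions~\ref{prop:moment2} and~\ref{prop:signatureplot}, so the heavier machinery is amortized over the second-order results, which a purely first-moment argument like yours does not produce. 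Finally, your ODE trick is essentially an instance of the paper's Lemma~\ref{lemma:ode} (the resolvent formula for the exponential-kernel Volterra equation), so the two approaches coincide at the solving step; the real difference lies in how the moment equations are obtained.
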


Proposition~\ref{prop:moment2} gives the second order moment of the price. Proof is given in Section~\ref{proof:propmoment2}.

\begin{proposition} \label{prop:moment2} Let us consider the model \eqref{eq:intensities}--\eqref{eq:upwarddownward}--\eqref{eq:prices} under Assumption \ref{ass:parametric}. We have for $t \in \left[0,T\right]$
\begin{align*}
\mathbb{E}(f_t^2) &= f_0^2 + 2\mu_0\mathbb{E}(J^2)\left(\left(C_{1} + C_{2} + C_{3} + C_{4}\right)e^{\kappa \frac{t}{T}} - C_{1} e^{-(\beta-\alpha \mathbb{E}(J))t} \right.\\
&\hspace{2.9cm}\left.- C_{2} e^{-2(\beta+\alpha \mathbb{E}(J))t} - C_{3}e^{-(\beta+\alpha \mathbb{E}(J))t} - C_{4}\right)
\end{align*}
with 
\begin{align*}C_{1} &= \frac{- \alpha^2 \mathbb{E}(J)^2}{\left(\beta - \alpha \mathbb{E}(J)\right) \left(\beta+3\alpha \mathbb{E}(J)\right)\left(\beta - \alpha \mathbb{E}(J) + \frac{\kappa}{T}\right)}, \\
C_{2} &= \frac{\alpha^2 \mathbb{E}(J)^2\left(\beta + 2\alpha\mathbb{E}(J)\right)}{\left(\beta + \alpha\mathbb{E}(J)\right)^2\left(\beta + 3\alpha\mathbb{E}(J)\right)\left(2\beta + 2\alpha \mathbb{E}(J) + \frac{\kappa}{T}\right)},\\
C_{3} &= \frac{\alpha \beta \mathbb{E}(J)}{\left(\beta+\alpha \mathbb{E}(J)\right)^2\left(\beta + \alpha \mathbb{E}(J) + \frac{\kappa}{T}\right)},\\
C_{4} &= \frac{\beta^3}{\frac{\kappa}{T}\left(\beta+\alpha\mathbb{E}(J)\right)^2\left(\beta-\alpha\mathbb{E}(J)\right)}.
\end{align*}
\end{proposition}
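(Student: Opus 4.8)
The plan is to reduce the computation to the variance of the signed jump sum and then exploit the null-diagonal, exponential structure of $\varphi$ to obtain a small closed system of linear ODEs whose solution is a combination of exactly the exponential modes appearing in the statement.

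First I would set $M_t := f_t^+ - f_t^-$, so that $f_t = f_0 + M_t$; since Proposition~\ref{prop:expectation} gives $\mathbb{E}(f_t^+) = \mathbb{E}(f_t^-)$, one has $\mathbb{E}(M_t) = 0$ and hence $\mathbb{E}(f_t^2) = f_0^2 + \mathbb{E}(M_t^2)$, so it suffices to compute $v(t) := \mathbb{E}(M_t^2)$. The key observation is that under Assumption~\ref{ass:parametric}(ii) the antidiagonal form of $\varphi$ makes the signed intensity $\Delta\lambda_t := \lambda_t^+ - \lambda_t^- = -\int_0^t \varphi_{\exp}(t-s)\,dM_s$ depend only on $M$ and not on the symmetric part; with $\varphi_{\exp}(u)=\alpha e^{-\beta u}$ this yields the Markovian dynamics $d\Delta\lambda_t = -\beta\,\Delta\lambda_t\,dt - \alpha\,dM_t$, while $dM_t = J_t\,(dN_t^+ - dN_t^-)$ and (positive and negative jumps never coinciding) $(dM_t)^2 = J_t^2\,d(N^++N^-)_t$.

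Next I would apply the integration-by-parts formula for finite-variation jump processes to $M_t^2$, $M_t\Delta\lambda_t$ and $(\Delta\lambda_t)^2$, take expectations (the local-martingale parts having zero mean by the stability condition and $\mathbb{E}(J^2)<\infty$), and compensate every jump integral with $\lambda_t^{\pm}\,dt\otimes\mu_J(dx)$, using that the current mark is independent of the predictable intensity. Writing $\psi(t):=\mathbb{E}(M_t\Delta\lambda_t)$, $\phi(t):=\mathbb{E}((\Delta\lambda_t)^2)$ and $g(t):=\mathbb{E}(\lambda_t^{\pm})$ (known from Proposition~\ref{prop:expectation}), the quadratic variations $(d\Delta\lambda)^2 = \alpha^2 J^2\,d(N^++N^-)$ and $d[M,\Delta\lambda] = -\alpha J^2\,d(N^++N^-)$ produce, after compensation, the closed triangular system
\begin{align*}
v'(t) &= 2\mathbb{E}(J)\,\psi(t) + 2\mathbb{E}(J^2)\,g(t),\\
\psi'(t) &= -(\beta+\alpha\mathbb{E}(J))\,\psi(t) + \mathbb{E}(J)\,\phi(t) - 2\alpha\mathbb{E}(J^2)\,g(t),\\
\phi'(t) &= -2(\beta+\alpha\mathbb{E}(J))\,\phi(t) + 2\alpha^2\mathbb{E}(J^2)\,g(t),
\end{align*}
with $v(0)=\psi(0)=\phi(0)=0$. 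Crucially, the symmetric intensity $\lambda_t^++\lambda_t^-$ enters only through its expectation $2g(t)$, so no further moment equations are needed; this is exactly the tractability afforded by the null diagonal of $\varphi$.

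Finally I would solve this cascade from the bottom up: each line is a scalar linear ODE forced by exponentials, with $g$ a combination of $e^{\kappa t/T}$ and $e^{-(\beta-\alpha\mathbb{E}(J))t}$. Solving for $\phi$ introduces the homogeneous mode $e^{-2(\beta+\alpha\mathbb{E}(J))t}$, solving for $\psi$ introduces $e^{-(\beta+\alpha\mathbb{E}(J))t}$, and integrating for $v$ adds a constant, so $v$ is a combination of the five modes $e^{\kappa t/T}$, $e^{-(\beta-\alpha\mathbb{E}(J))t}$, $e^{-2(\beta+\alpha\mathbb{E}(J))t}$, $e^{-(\beta+\alpha\mathbb{E}(J))t}$ and $1$, matching the statement; the constant is fixed by $v(0)=0$, equivalently by forcing the five coefficients to sum to zero, which is why the coefficient of $e^{\kappa t/T}$ is $C_1+C_2+C_3+C_4$. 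The stability condition $\alpha\mathbb{E}(J)<\beta$ keeps the mode rates distinct, so no resonant $t\,e^{rt}$ terms arise and the denominators $\beta-\alpha\mathbb{E}(J)$, $\beta+3\alpha\mathbb{E}(J)$, $2\beta+2\alpha\mathbb{E}(J)+\tfrac{\kappa}{T}$, etc.\ remain nonzero. The only real obstacle is computational: carrying the partial-fraction bookkeeping through the three successive integrations and collapsing the coefficients into $C_1,\dots,C_4$. The way to control it is to read off each denominator as a difference of a forcing rate and a homogeneous rate (for instance $\beta+3\alpha\mathbb{E}(J)$ comes from matching the $e^{-(\beta-\alpha\mathbb{E}(J))t}$ forcing of $\phi$ against its own rate $-2(\beta+\alpha\mathbb{E}(J))$, and $2\beta+2\alpha\mathbb{E}(J)+\tfrac{\kappa}{T}$ from matching the $e^{\kappa t/T}$ forcing against that same rate), after which the stated formulas follow.
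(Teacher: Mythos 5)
Your argument is correct, and I checked that it does reproduce the stated constants, but it takes a genuinely different route from the paper's. The paper works through the characteristic function obtained from the cluster representation (Proposition~\ref{prop:charachawkes}): differentiating it twice at the origin gives Volterra equations of the second kind for $\Psi_{2,+}$ and $\Psi_{2,+,-}$, which are solved by Lemma~\ref{lemma:ode} and then integrated against the baseline; only the initial reduction $\mathbb{E}(f_t^2)=f_0^2+2\left(\mathbb{E}((f_t^+)^2)-\mathbb{E}(f_t^+f_t^-)\right)$ is common to both proofs. You instead exploit the Markov structure created by Assumption~\ref{ass:parametric}: the null diagonal and the exponential kernel make $\Delta\lambda_t=\lambda_t^+-\lambda_t^-=-\int_0^t\varphi_{\exp}(t-s)\,dM_s$ autonomous given $M$, so integration by parts and compensation close a triangular linear ODE system for $\mathbb{E}((\Delta\lambda_t)^2)$, $\mathbb{E}(M_t\Delta\lambda_t)$, $\mathbb{E}(M_t^2)$, forced only by $\mathbb{E}(\lambda_t^{\pm})$ from Proposition~\ref{prop:expectation}. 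Your three ODEs are exact (I rechecked each coefficient), and solving the cascade yields precisely $C_1,\dots,C_4$; for instance the factor $\beta+2\alpha\mathbb{E}(J)$ in $C_2$ emerges from the factorization $(\beta+\tfrac{\kappa}{T})(\beta+3\alpha\mathbb{E}(J))-\alpha\mathbb{E}(J)(\tfrac{\kappa}{T}+2\beta+2\alpha\mathbb{E}(J))=(\beta+2\alpha\mathbb{E}(J))(\beta-\alpha\mathbb{E}(J)+\tfrac{\kappa}{T})$ when the homogeneous mode of $\mathbb{E}((\Delta\lambda_t)^2)$ is fixed by its zero initial condition. The paper's method buys generality: the Volterra equations hold for arbitrary kernels, the characteristic function encodes all moments, and the same machinery serves Proposition~\ref{prop:expectation}. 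Yours buys a more elementary, self-contained derivation that makes the origin of every exponential mode and denominator transparent, and it is close in spirit to the paper's own proof of Proposition~\ref{prop:signatureplot}, where Equation~\eqref{diffintensity} is exactly the conditional version of your $\Delta\lambda$ dynamics. In a complete write-up you should add two standard justifications that you currently wave at: first, that the compensated jump integrals are true martingales and not just local ones (localization plus finiteness of the second moments of the intensity on $[0,T]$, which holds since $\mu$ is bounded and $\rho(K)<1$); second, that the compensator of $\int_0^t h(s)J_s^i\,dN_s^{\pm}$ is $\mathbb{E}(J^i)\int_0^t h(s)\lambda_s^{\pm}\,ds$ because each mark is independent of the strict past, as in the paper's random-measure formulation of the model.
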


In Proposition~\ref{prop:signatureplot}, we give a tractable formula that allows to compute the signature plot. Proof is given in Section~\ref{proof:propsignatureplot}.

\begin{proposition} \label{prop:signatureplot} Let us consider the model \eqref{eq:intensities}--\eqref{eq:upwarddownward}--\eqref{eq:prices} under Assumption~\ref{ass:parametric} and $0 \leq s < t \leq T$. We have 
\[
\mathbb{E}\left(\left(f_t-f_s\right)^2\right) = \mathbb{E}(f_t^2) - \mathbb{E}(f_s^2) -\frac{(1 - e^{-(\beta+\alpha\mathbb{E}(J))(t-s)})}{\beta + \alpha\mathbb{E}(J)} \left(\frac{d \mathbb{E}(f_s^2)}{ds}(s) - 2\mathbb{E}(J^2)\mathbb{E}(\lambda_s^+)\right)
\]
with $\mathbb{E}(\lambda_s^+)$ given in Proposition~\ref{prop:expectation} and $\mathbb{E}(f_s^2)$ given in Proposition~\ref{prop:moment2}.
\end{proposition}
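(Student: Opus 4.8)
The plan is to expand the square and reduce everything to a single cross-term. Writing $\mathbb{E}((f_t - f_s)^2) = \mathbb{E}(f_t^2) + \mathbb{E}(f_s^2) - 2\mathbb{E}(f_s f_t)$ and comparing with the claimed identity, it suffices to prove that for $t \geq s$,
\[
\mathbb{E}(f_s f_t) - \mathbb{E}(f_s^2) = \frac{1 - e^{-(\beta + \alpha\mathbb{E}(J))(t-s)}}{2(\beta + \alpha\mathbb{E}(J))}\left(\frac{d\mathbb{E}(f_s^2)}{ds}(s) - 2\mathbb{E}(J^2)\mathbb{E}(\lambda_s^+)\right).
\]
So, with $s$ fixed, I would study the function $h(t) := \mathbb{E}(f_s f_t)$ on $[s,T]$, which starts at $h(s) = \mathbb{E}(f_s^2)$.

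The engine of the proof is a structural identity specific to Assumption~\ref{ass:parametric}: because $\varphi$ has a null diagonal with off-diagonal entries $\varphi_{\exp}(t) = \alpha e^{-\beta t}$, the signed intensity $D_t := \lambda_t^+ - \lambda_t^-$ equals $-\alpha\int_0^t e^{-\beta(t-u)}\,df_u$ and hence solves the linear equation $dD_t = -\beta D_t\,dt - \alpha\,df_t$, where $df_t = J_t(dN_t^+ - dN_t^-)$. Setting $\psi(t) := \mathbb{E}(f_s D_t)$ for $t \geq s$ and compensating the jump integrals (the compensator of $\int_{\mathbb{R}_+} x\,N^\pm(dt,dx)$ being $\mathbb{E}(J)\lambda_t^\pm\,dt$, the marks being independent of the history), I obtain the coupled relations $h'(t) = \mathbb{E}(J)\psi(t)$ and $\psi'(t) = -(\beta + \alpha\mathbb{E}(J))\psi(t)$. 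The second integrates immediately to $\psi(t) = \psi(s)e^{-(\beta+\alpha\mathbb{E}(J))(t-s)}$; feeding this into the first and integrating from $s$ to $t$ yields
\[
h(t) - h(s) = \mathbb{E}(J)\psi(s)\,\frac{1 - e^{-(\beta+\alpha\mathbb{E}(J))(t-s)}}{\beta + \alpha\mathbb{E}(J)}.
\]

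It then remains only to identify the initial slope $\mathbb{E}(J)\psi(s) = \mathbb{E}(J)\mathbb{E}(f_s(\lambda_s^+ - \lambda_s^-))$. For this I would apply the jump It\^o formula to $f_s^2$: since $N^+$ and $N^-$ never jump simultaneously, $d(f_s^2) = 2f_{s-}J_s(dN_s^+ - dN_s^-) + J_s^2(dN_s^+ + dN_s^-)$, and taking expectations through the compensator, together with $\mathbb{E}(\lambda_s^+) = \mathbb{E}(\lambda_s^-)$ from Proposition~\ref{prop:expectation}, gives $\frac{d}{ds}\mathbb{E}(f_s^2) = 2\mathbb{E}(J)\psi(s) + 2\mathbb{E}(J^2)\mathbb{E}(\lambda_s^+)$. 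Solving for $\mathbb{E}(J)\psi(s)$ and substituting into the displayed expression for $h(t)-h(s)$ reproduces exactly the parenthesized factor in the claim, after re-collecting $\mathbb{E}((f_t-f_s)^2) = \mathbb{E}(f_t^2) - \mathbb{E}(f_s^2) - 2(h(t)-h(s))$.

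The explicit forms of $\mathbb{E}(\lambda_s^+)$ and $\mathbb{E}(f_s^2)$ are routine, being already available from Propositions~\ref{prop:expectation} and~\ref{prop:moment2}. The hard part will be analytic rather than algebraic: justifying that the compensated jump integrals are genuine martingales so that the compensation computations are valid (this rests on $\mathbb{E}(J^2) < \infty$ together with $\rho(K) < 1$, i.e.\ $\alpha\mathbb{E}(J) < \beta$, which ensure the relevant moments are finite and locally integrable), legitimizing the interchange of $\frac{d}{dt}$ with the expectation when forming the ODEs for $h$ and $\psi$, and handling the harmless distinction between $f_{s-}$ and $f_s$ (they agree for Lebesgue-almost every $s$, hence coincide under the expectation). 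Once these points are settled, the coupling between $h$ and $\psi$ closes the system and the identity is essentially forced.
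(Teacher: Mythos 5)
Your proof is correct, and its overall skeleton matches the paper's: reduce everything to the cross-covariance $\mathbb{E}(f_sf_t)$, show that it relaxes exponentially in $t$ at rate $\beta+\alpha\mathbb{E}(J)$, and identify the prefactor through $\frac{d}{ds}\mathbb{E}(f_s^2)$ and $\mathbb{E}(\lambda_s^+)$. The execution, however, is genuinely different. The paper expands $\mathbb{E}(f_tf_s)$ into the four cross-moments $\mathbb{E}(f_t^{\pm}f_s^{\pm})$, conditions on $\mathcal{F}_s$, writes Volterra equations for the conditional intensities $\mathbb{E}(\lambda_u^{\pm}\,|\,\mathcal{F}_s)$, and solves them with its resolvent Lemma~\ref{lemma:ode} to obtain the conditional decay $\mathbb{E}(\lambda_u^+-\lambda_u^-\,|\,\mathcal{F}_s)=(\lambda_s^+-\lambda_s^-)e^{-(\beta+\alpha\mathbb{E}(J))(u-s)}$ of Equation~\eqref{diffintensity}; you instead work unconditionally with the signed quantities, observing that the null-diagonal exponential kernel makes $D_t=\lambda_t^+-\lambda_t^-$ solve the autonomous linear equation $dD_t=-\beta D_t\,dt-\alpha\,df_t$, which closes the ODE system for $h(t)=\mathbb{E}(f_sf_t)$ and $\psi(t)=\mathbb{E}(f_sD_t)$. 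Your $\psi$ is precisely the expectation of the paper's conditional object, so your route replaces the conditioning-plus-Volterra machinery by two elementary linear ODEs, and it makes transparent where the rate $\beta+\alpha\mathbb{E}(J)$ comes from; what it gives up is the (slightly stronger) pathwise conditional statement \eqref{diffintensity} that the paper establishes along the way. Likewise, your identification of the initial slope via the jump It\^o formula for $f_s^2$, using that $N^+$ and $N^-$ never jump simultaneously and that $\mathbb{E}(\lambda_s^+)=\mathbb{E}(\lambda_s^-)$ by Proposition~\ref{prop:expectation}, is the differential form of the paper's integrated identities for $\mathbb{E}\left(\left(f_s^{\pm}\right)^2\right)$ and $\mathbb{E}(f_s^+f_s^-)$, so the two proofs coincide there up to presentation. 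The analytic caveats you flag (genuine martingale property of the compensated mark integrals, differentiation under the expectation, the harmless $f_{s-}$ versus $f_s$ distinction, which for fixed $s$ holds almost surely since the law of a jump time has no atoms) are exactly the points the paper also passes over silently, so your treatment is no less rigorous than the original.
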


The signature plot for $t \in \left[0,T\right]$, $\delta > 0$, given by
\[C(t,\delta) = \frac{1}{t}\mathbb{E}\left(\sum_{i=1}^{\lfloor \frac{t}{\delta}\rfloor} \left(f_{i\delta} - f_{(i-1)\delta}\right)^2 \right),\]
can be computed directly from the result of Proposition~\ref{prop:signatureplot}:
\begin{equation}\label{eq:Cnotcomputed}C(t,\delta) = \frac{1}{t} \left(\mathbb{E}(f^2_{\lfloor \frac{t}{\delta}\rfloor \delta})-f_0^2\right) - \frac{1}{t}\frac{(1 - e^{-(\beta+\alpha\mathbb{E}(J))\delta})}{(\beta + \alpha\mathbb{E}(J))} \sum_{i=0}^{\lfloor \frac{t}{\delta}\rfloor -1 } \left(\frac{d \mathbb{E}(f_s^2)}{ds}(i\delta) - 2\mathbb{E}(J^2)\mathbb{E}(\lambda_{i\delta}^+)\right).
\end{equation}
$C(t,\delta)$ can then be computed explicitly directly from Equation~\eqref{eq:Cnotcomputed}. In particular, the following two regimes are specifically interesting to understand the evolution of the signature plot:
\begin{itemize}
    \item In the microscopic regime, that is $\delta \to 0$, the signature plot converges to 
    \begin{equation} \label{eq:Cmicro}
    C^{micro}(t) = 2\mathbb{E}(J^2) \frac{\mathbb{E}\left(\int_0^t \lambda^+_s ds\right)}{t}.
    \end{equation}
    \item In the macroscopic regime, that is $\delta \to \infty$ and $\frac{\delta}{t}\to 0$ (while $t\leq T$), we have 
    \begin{equation}\label{eq:macro}
    C^{macro}(t) \sim \frac{2 \mathbb{E}(J^2)}{\left(1 + \frac{\alpha \mathbb{E}(J)}{\beta}\right)^2 \left(1-\frac{\alpha \mathbb{E}(J)}{\beta}\right)} \frac{\int_0^t \mu(\frac{s}{T})ds}{t}.
    \end{equation}
\end{itemize}
The microscopic and macroscopic signature plots evolve at the same speed as the intensity with respect to time, that is exponentially with a rate $\frac{\kappa}{T}$. This volatility increase, present at both microscopic and macroscopic scales, is induced by the time-dependent baseline and allows to represent the so-called {\it Samuelson effect}, which is well known for electricity forward prices, see for instance Jaeck and Lautier~\cite{Jaeck2016} : volatility increases when time to maturity decreases. One can also give intuition about the formula considering the asymptotics $t \to \infty$ (while $t \leq T$) : 
\[C(t,\delta) \sim \frac{2 \mathbb{E}(J^2) \int_0^t \mu(\frac{s}{T})ds}{t\left(1-\frac{\alpha\mathbb{E}(J)}{\beta}\right)} \left(\frac{1}{\left(1 + \frac{\alpha \mathbb{E}(J)}{\beta}\right)^2} + \left(1 - \frac{1}{\left(1 + \frac{\alpha \mathbb{E}(J)}{\beta}\right)^2}\right)\left(\frac{1-e^{-(\beta + \alpha \mathbb{E}(J))\delta}}{\left(\beta + \alpha \mathbb{E}(J)\right)\delta}\right)\right).
\]
In this asymptotics, the signature plot is a decreasing function of $\delta$ with a term having an exponential decay and a constant term. This theoretical shape can fit the empirical signature plot described in Section~\ref{sec:signatureplot} for $t = T$ as we have
\[C(T, \delta) \sim \frac{2 \mathbb{E}(J^2) \int_0^1 \mu(s)ds}{1-\frac{\alpha\mathbb{E}(J)}{\beta}} \left(\frac{1}{\left(1 + \frac{\alpha \mathbb{E}(J)}{\beta}\right)^2} + \left(1 - \frac{1}{\left(1 + \frac{\alpha \mathbb{E}(J)}{\beta}\right)^2}\right)\left(\frac{1-e^{-(\beta + \alpha \mathbb{E}(J))\delta}}{\left(\beta + \alpha \mathbb{E}(J)\right)\delta}\right)\right).
\]
In the case where $\mu$ is constant and equal to $\mu_0$, and the regime is stationary, we find a signature plot equal to
\begin{equation} \label{eq:signaturestati}
C(T, \delta) = \frac{2\mu_0\mathbb{E}(J^2)}{1- \frac{\alpha \mathbb{E}(J)}{\beta}}\left(\frac{1}{\left(1 + \frac{\alpha \mathbb{E}(J)}{\beta}\right)^2} + \left(1 - \frac{1}{\left(1 + \frac{\alpha \mathbb{E}(J)}{\beta}\right)^2}\right)\left(\frac{1-e^{-(\beta + \alpha \mathbb{E}(J))\delta}}{\left(\beta + \alpha \mathbb{E}(J)\right)\delta}\right)\right),
\end{equation}
and moreover, in this stationary case,
\begin{align*}
C^{micro} &= \frac{2\mu_0\mathbb{E}(J^2)}{1 - \frac{\alpha \mathbb{E}(J)}{\beta}}, \\
C^{macro} &= \frac{2\mu_0 \mathbb{E}(J^2)}{\left(1 + \frac{\alpha \mathbb{E}(J)}{\beta}\right)^2 \left(1-\frac{\alpha \mathbb{E}(J)}{\beta}\right)}.
\end{align*}
The structure of the signature plot is the same as the one computed by Bacry et al.~\cite{Bacry2013a}, the main difference being the multiplicative term $\mathbb{E}(J^2)$ accounting for the random jump size and $\alpha$ which is multiplied by $\mathbb{E}(J)$ accounting for the presence of the jump size in the intensity.

\medskip

\subsection{Estimation}
\label{sec:estimation}
Let us estimate the parameters of the model on the data, using likelihood maximization. Observing continuously the price process $f$ corresponds to a continuous observation of the process $(N^+, N^-)$ and of the jump sizes $(J^+, J^-)$ at jump times. If $N$ is a Poisson process with intensity $\lambda$, the log-likelihood is equal to (see Daley and Vere-Jones~\cite[Proposition 7.2III]{daley2003}) 
\[\mathcal{L} = \int_0^ T \log(\lambda_t)dN_t + \int_0^T (1 - \lambda_t)dt.\]
The log-likelihood for one observation (that is, for one trading session) is then equal to the sum of
\[
    \mathcal{L}^+ = \sum_{i=1}^{N^+_T}\log\Big(\mu_0 e^{\kappa \frac{\tau_i^+}{T}} + \sum_{j=1}^{N^-_{\tau_i^+}}  \alpha J^-_j e^{-\beta \left(\tau_i^+ - \tau^-_j\right)}\Big) + T - \frac{\mu_0 T}{\kappa} \left(e^{\kappa} - 1\right) - \sum_{i=1}^{N^-_T}\frac{\alpha}{\beta}J^-_i \left(1-e^{-\beta\left(T-\tau^-_i\right)}\right)
\]
and
\[
    \mathcal{L}^- = \sum_{i=1}^{N^-_T}\log\Big(\mu_0 e^{\kappa \frac{\tau_i^-}{T}} + \sum_{j=1}^{N^+_{\tau_i^-}}  \alpha J^+_j e^{-\beta \left(\tau_i^- - \tau^+_j\right)}\Big) + T - \frac{\mu_0T}{\kappa} \left(e^{\kappa} - 1\right) - \sum_{i=1}^{N^+_T}\frac{\alpha}{\beta}J^+_i \left(1-e^{-\beta(T-\tau^+_i)}\right).
\]
If we dispose of continuous independent observations of the price process $f$ (and equivalently of observations of $N^+$, $N^-$ and $J$ at jump times) on $\left[0, T\right]$, the log-likelihood is equal to the sum of the likelihood of each observation. We perform the estimation on the whole dataset (prices between July and September) for maturities 18h, 19h and 20h and considering data from 9 hours to 1 hour before maturity (we then assume that $T = 8$ hours). We do not fit a model for the jump sizes but only estimate their first two moments by using empirical averages. To initialize the parameters of the minimization, except $\kappa$, we use estimates from the minimization of the distance between the averaged empirical signature plot and the signature plot associated to the stationary model when $\kappa = 0$, given by Equation \eqref{eq:signaturestati} for $\delta = 1, \ldots, 300$. The initial value for $\kappa$ is chosen equal to 0.1. Parameters are given in Table~\ref{tab:parameters}. Values for $\kappa$ are very close to each other, which could be interesting for a multidimensional model in term of parameter numbers.

\begin{table}
    \centering
    \begin{tabular}{ccccccc}
    \toprule
        Maturity & $\mu_0 \; (h^{-1})$ & $\kappa$ & $\alpha \; (h^{-1})$ & $\beta \; (h^{-1})$ & $\mathbb{E}(J)$ (\euro/MWh) & $\mathbb{E}(J^2)$ (\euro/MWh$^2$) \\
        \midrule
         18h & 2.49 & 3.51 & 864.39 & 237.30 & 0.13 & 0.066 \\
        19h &3.01 & 3.50& 2344.97 & 639.64 &0.13 &0.061\\
        20h & 3.06& 3.51 &3100.46 &859.11 & 0.13 &0.058\\
        \bottomrule
    \end{tabular}
    \caption{Estimated parameters for different maturities}
    \label{tab:parameters}
\end{table}

\subsection{Analysis} First, simulated prices are given in Figure \ref{fig:simu} for maturities 18h, 19h and 20h using parameters of Table \ref{tab:parameters} and jumps simulated from the empirical distributions. We also plot a price series from the dataset in each sub-figure. Prices are simulated using the well known thinning algorithm for Poisson processes given in \cite{ogata81}. It is difficult to distinguish simulations from the real prices and, at first sight, the model seems to reproduce the different stylized facts of the price, in particular the increase of market activity over the trading session. This intuition is confirmed by Figure \ref{fig:momentsimu} where empirical moments are compared to theoretical ones for the different considered maturities. The model succeeds in reproducing the shape of the different moments of the model with a low number of parameters. The expectation that reproduces the market activity can sometimes be slightly underestimated or overestimated. Of course one could improve these curves by estimating the baseline non parametrically but it would increase the complexity in the model a lot. Our main concern is the reproduction of the signature plot $C(t,\delta)$, $t \in \left[0,T\right]$, $\delta > 0$. Empirical signature plot (average of the different signature plot for the different trading days) and theoretical one are given in Figure~\ref{fig:signatureplotsimu} for different times. For every considered time, one observes the usual shape of the signature plot that exists in classic financial markets, with a very strong value at high frequency, a strong decrease then a stabilization. The model reproduces this shape very well. At last date $T$, the theoretical values are very close to the empirical ones. An interesting fact is the translation of the signature plot when time increases and gets closer to maturity : the whole curve goes upwards. For maturities 19h and 20h, theoretical curves values are above the empirical ones when time is equal to $T$ minus one or two hours. As at very high frequency, the signature plot is proportional to the expectation of the integrated intensity, see Equation \eqref{eq:Cmicro}, equal to $\frac{\mathbb{E}(f^+_t)}{\mathbb{E}(J)}$ which is overestimated for those two maturities at the end, see Figure~\ref{fig:momentsimu}. To conclude, our model represents the different empirical facts described in Section~\ref{sec:stylizedfact} well while being tractable and providing formulas for different quantities of interest for a practitioner. The model could be improved for instance by considering a full matrix for $\varphi$, or a more complex baseline $\mu$ however losing the analytical results.

\begin{figure}
    \centering
    \begin{subfigure}{0.60\textwidth}
    \centering
    \includegraphics[width=\textwidth]{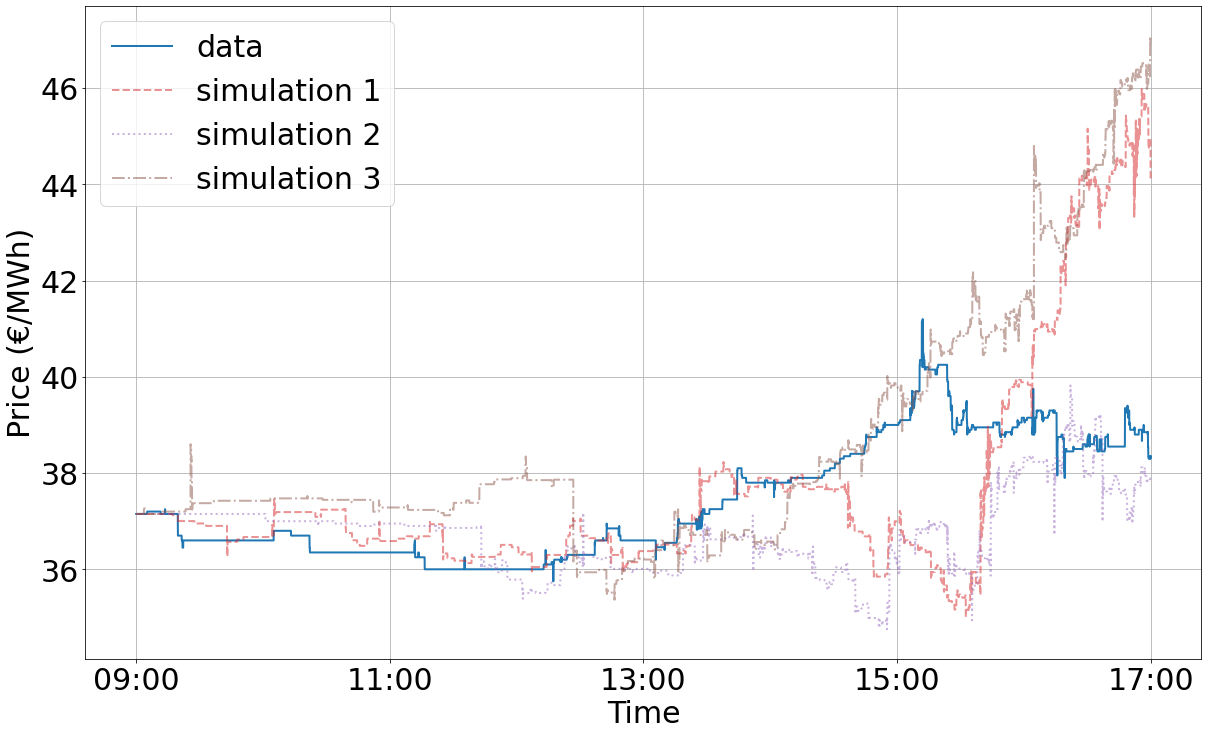}
    \caption{18h}
\end{subfigure}
   \begin{subfigure}{0.60\textwidth}
   \includegraphics[width=\textwidth]{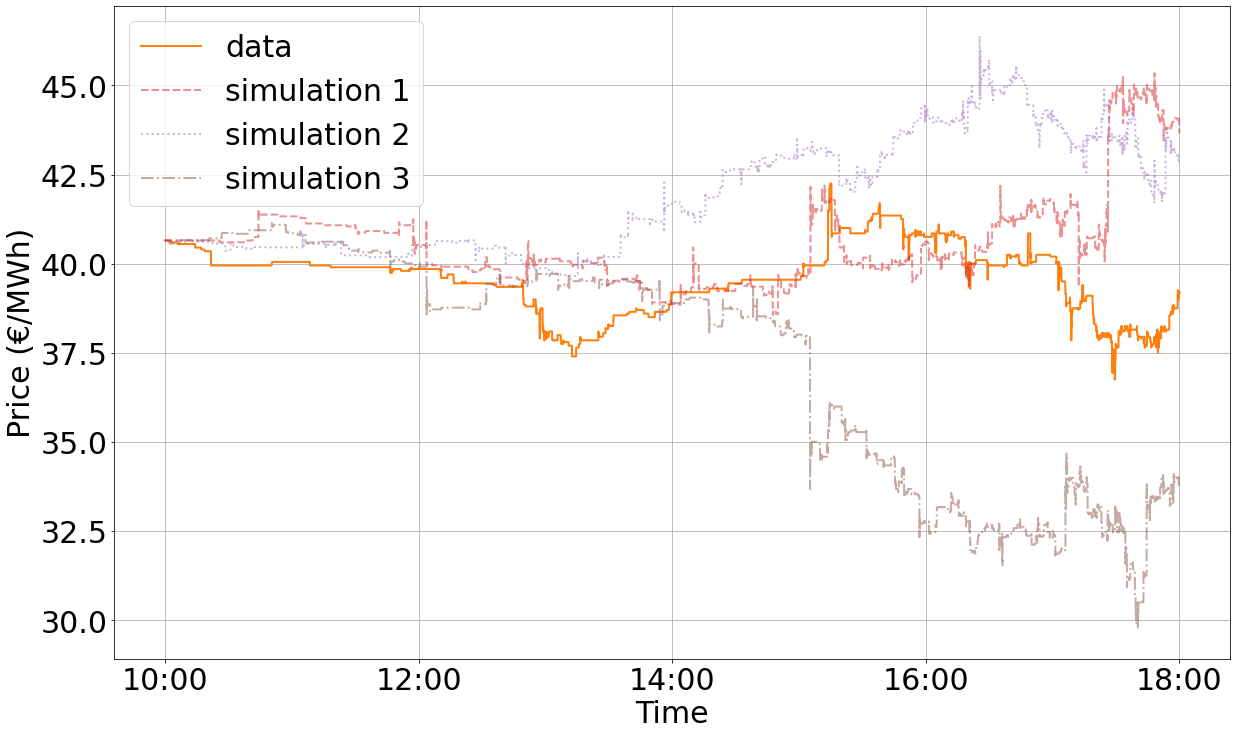}
    \caption{19h}
    \end{subfigure}
   \begin{subfigure}{0.60\textwidth}
   \includegraphics[width=\textwidth]{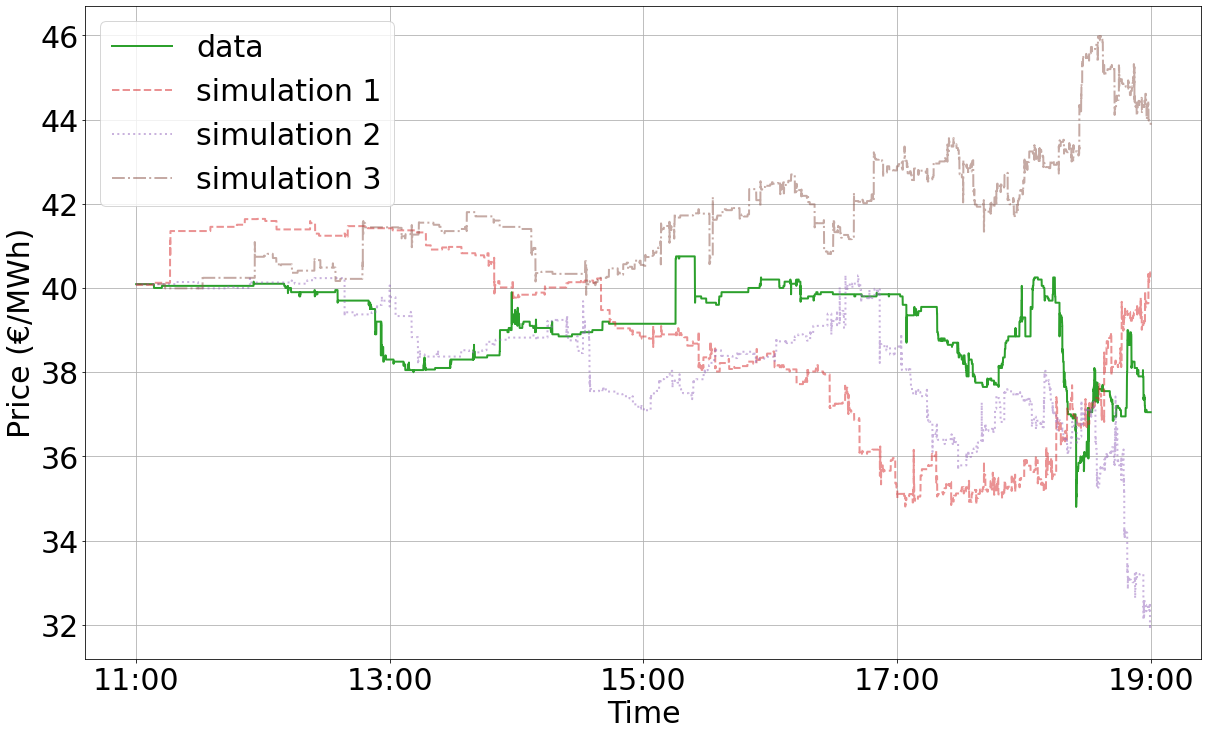}
    \caption{20h}
    \end{subfigure}
       \caption{\label{fig:simu} Price simulations, starting from the same initial value, for maturities 18h, 19h and 20h with estimated parameters in Table \ref{tab:parameters}, together with one sample}
\end{figure}

\begin{figure}
    \begin{subfigure}{0.49\textwidth}
    \centering
    \includegraphics[width=\textwidth]{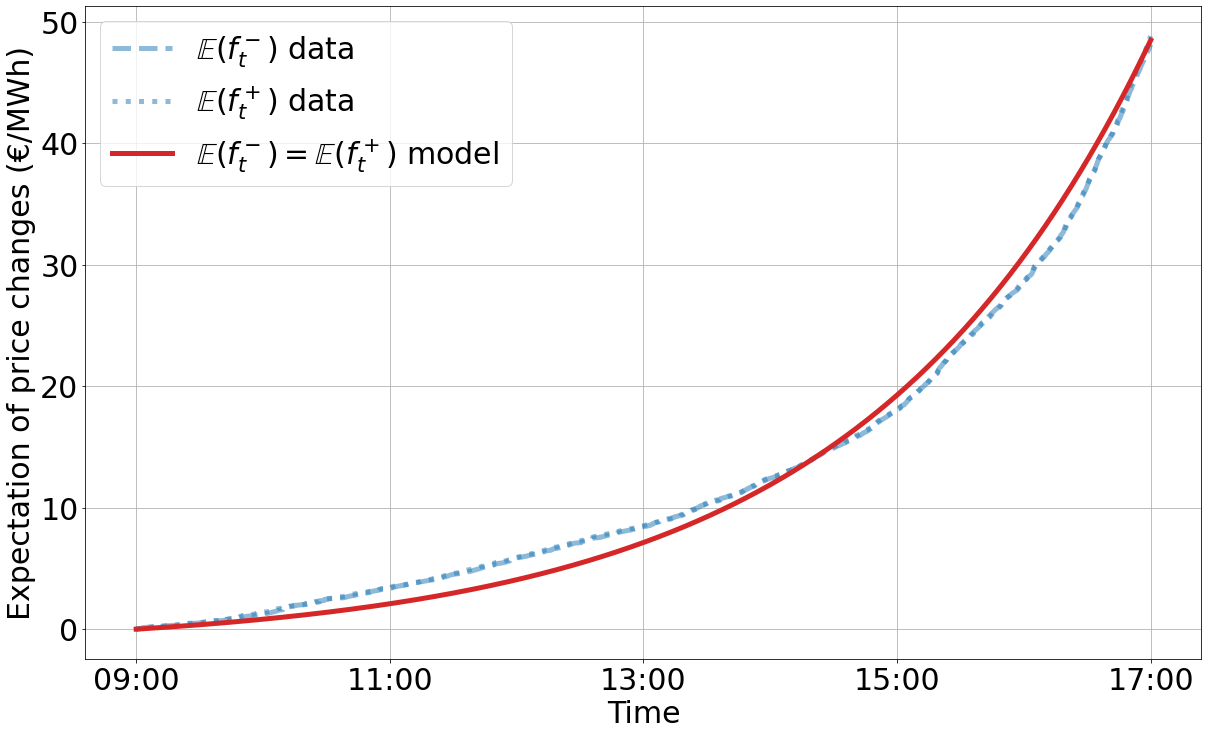}
    \caption{Expectation, 18h}
\end{subfigure}
    \begin{subfigure}{0.49\textwidth}
    \centering
    \includegraphics[width=\textwidth]{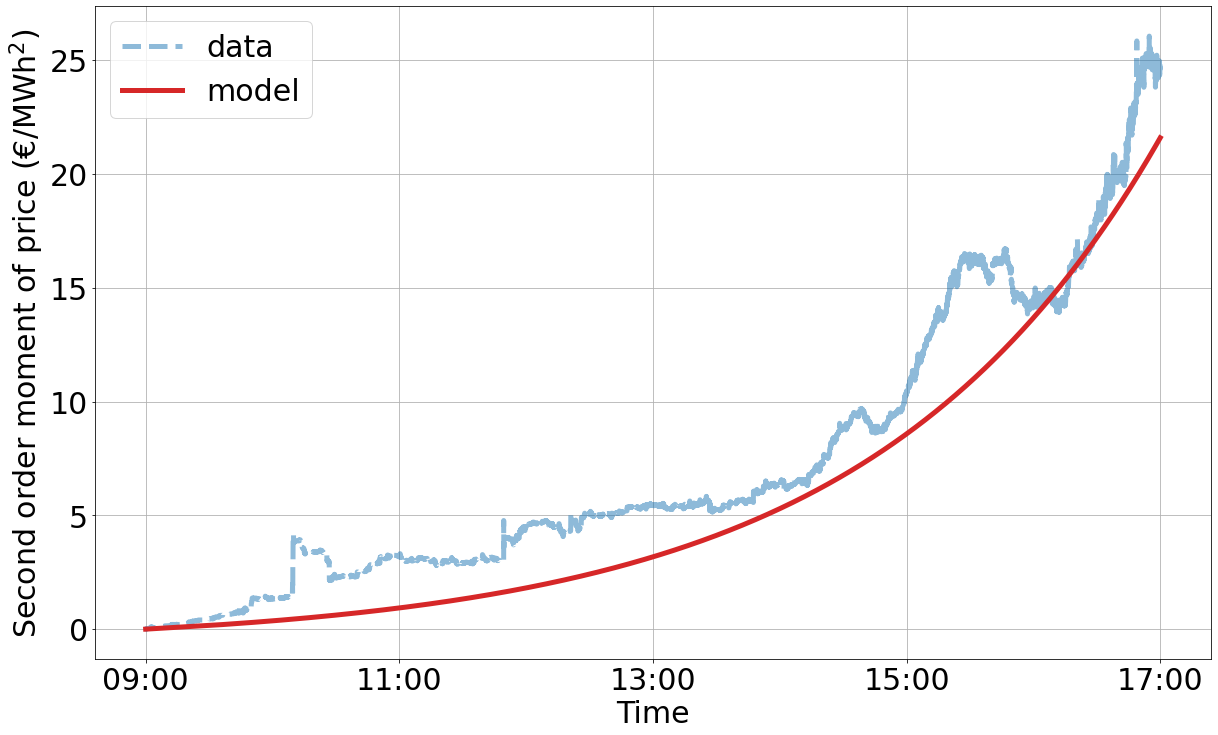}
    \caption{Second order moment, 18h}
\end{subfigure}
   \begin{subfigure}{0.49\textwidth}
   \includegraphics[width=\textwidth]{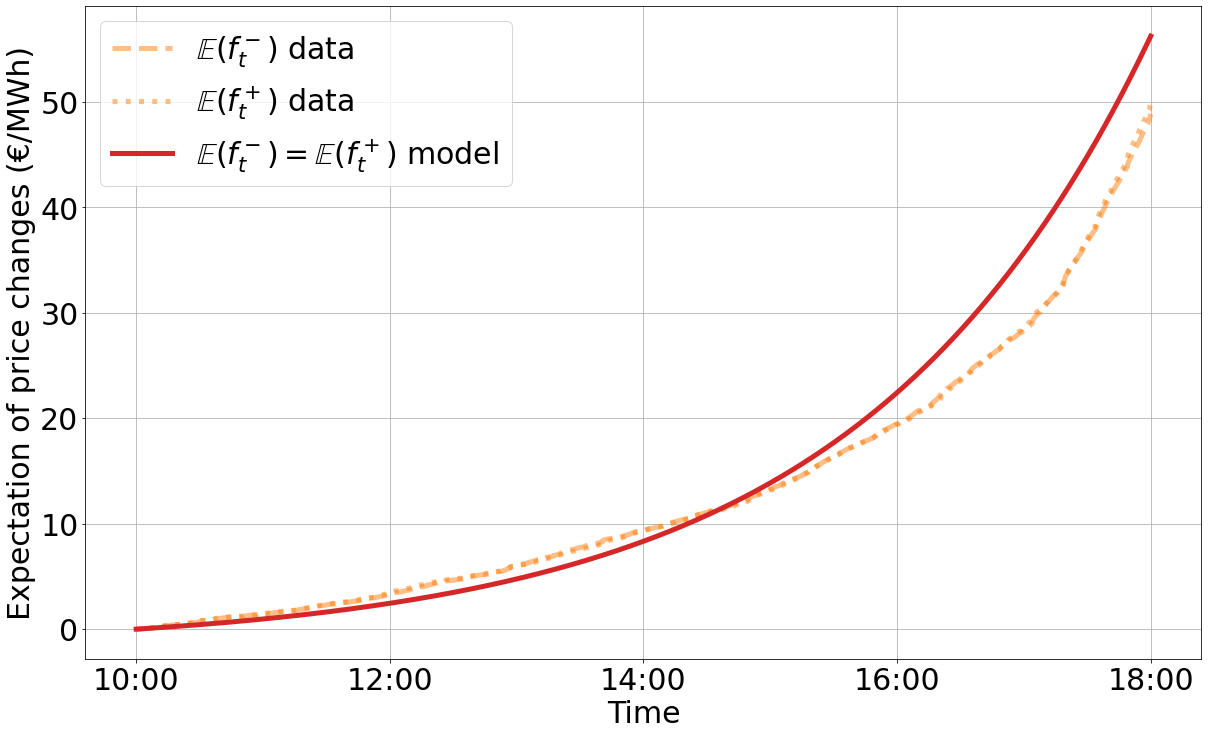}
    \caption{Expectation 19h}
    \end{subfigure}
        \begin{subfigure}{0.49\textwidth}
    \centering
    \includegraphics[width=\textwidth]{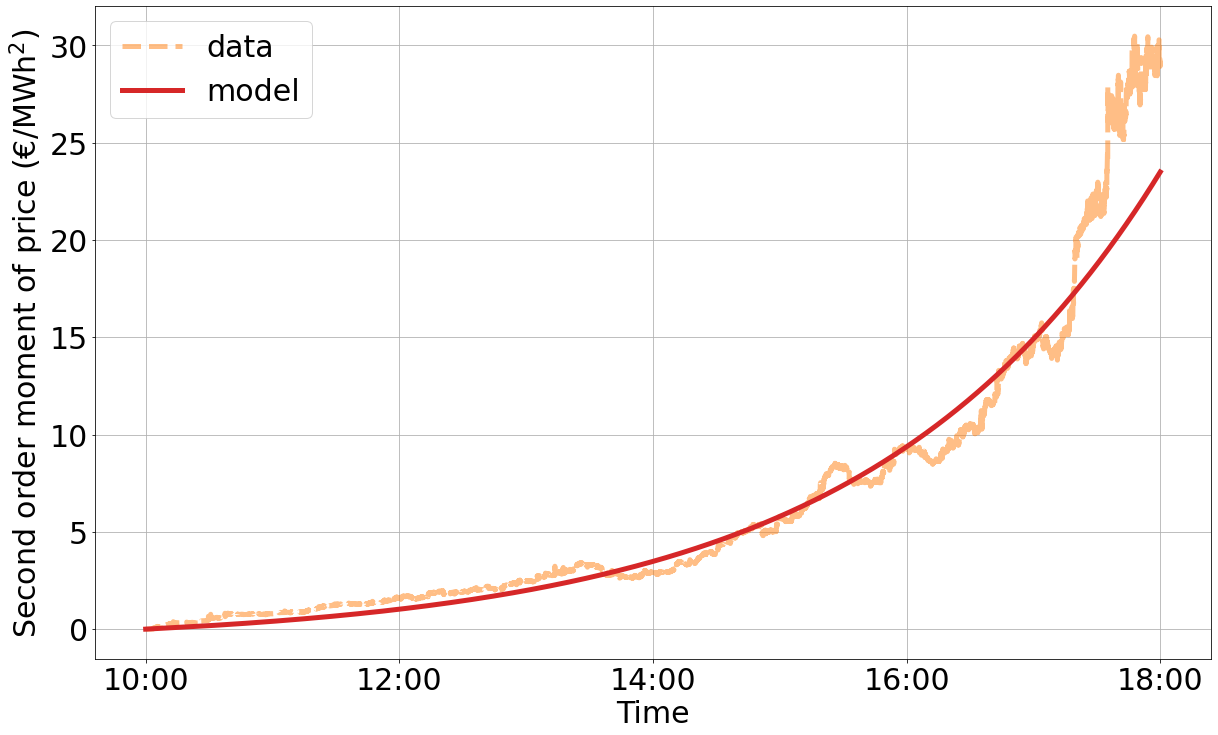}
    \caption{Second order moment, 19h}
\end{subfigure}
   \begin{subfigure}{0.49\textwidth}
   \includegraphics[width=\textwidth]{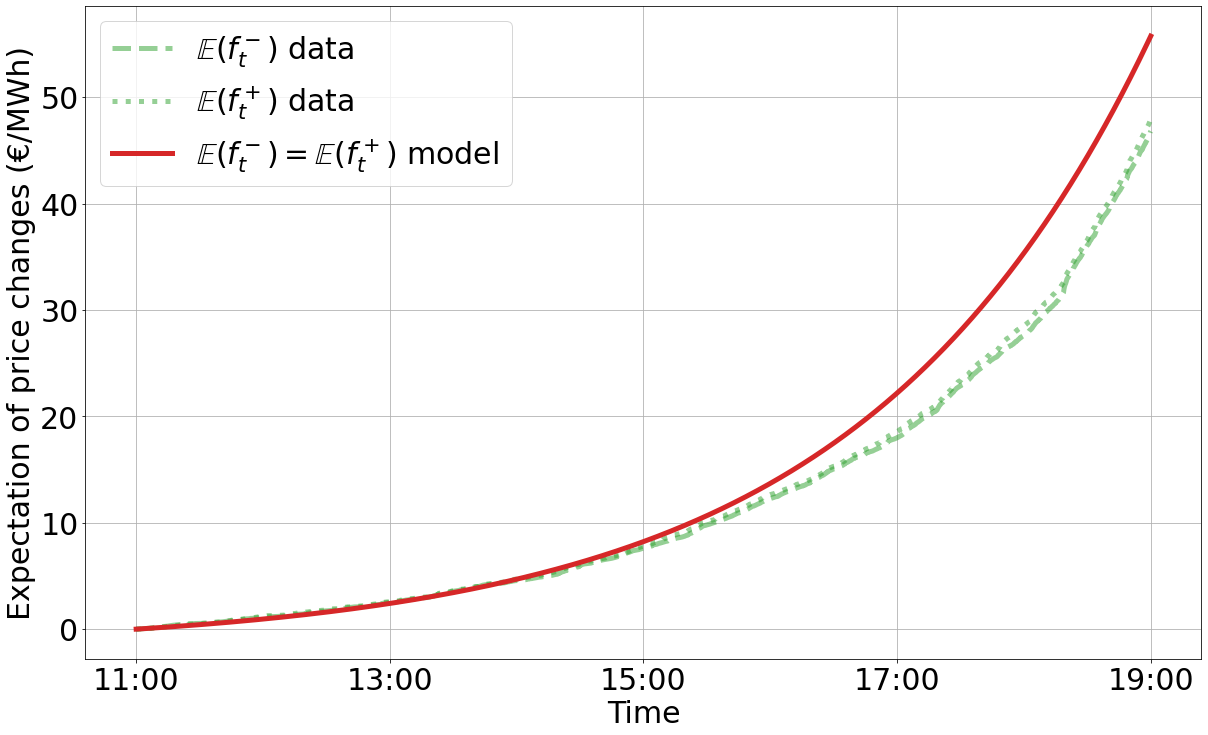}
    \caption{Expectation, 20h}
    \end{subfigure}
    \begin{subfigure}{0.49\textwidth}
    \centering
    \includegraphics[width=\textwidth]{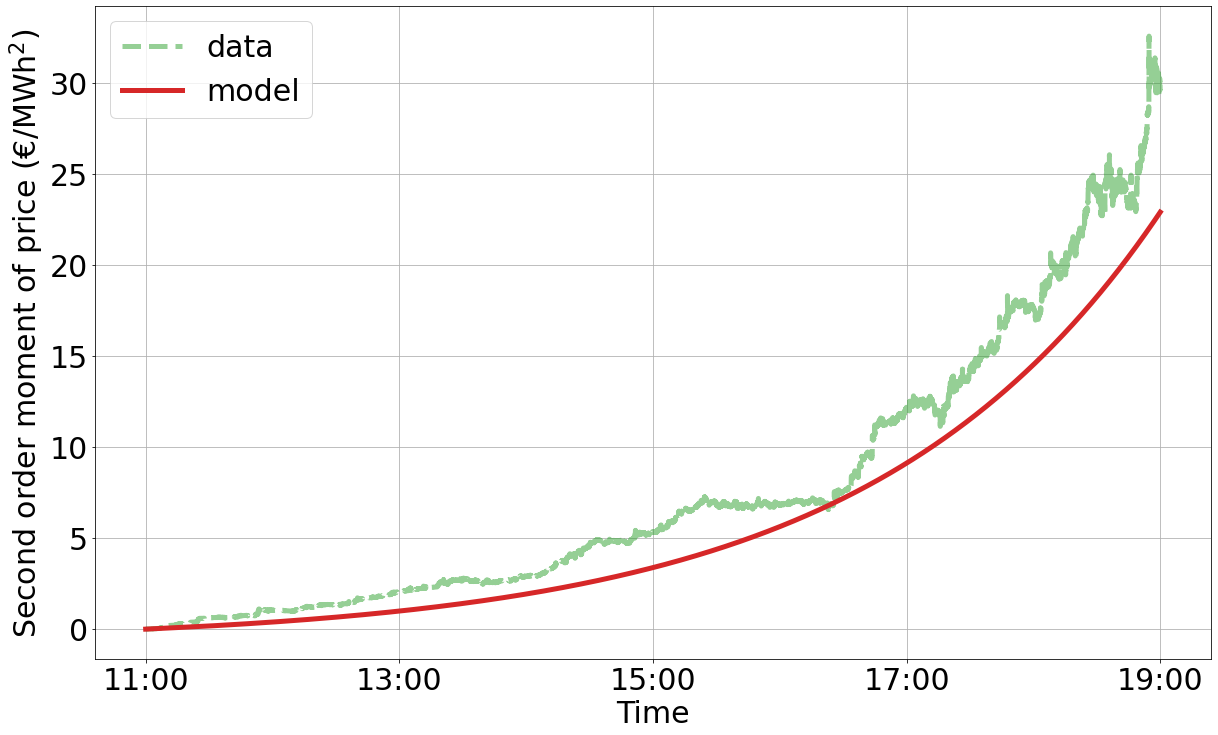}
    \caption{Second order moment, 20h}
\end{subfigure}
       \caption{\label{fig:momentsimu} Empirical and theoretical expectation of $f^+$ and $f^-$ and second order moment for price and for maturities 18h, 19h and 20h with estimated parameters in Table \ref{tab:parameters}}
\end{figure}

\begin{figure}
    \centering
    \begin{subfigure}{0.6\textwidth}
    \centering
    \includegraphics[width=\textwidth]{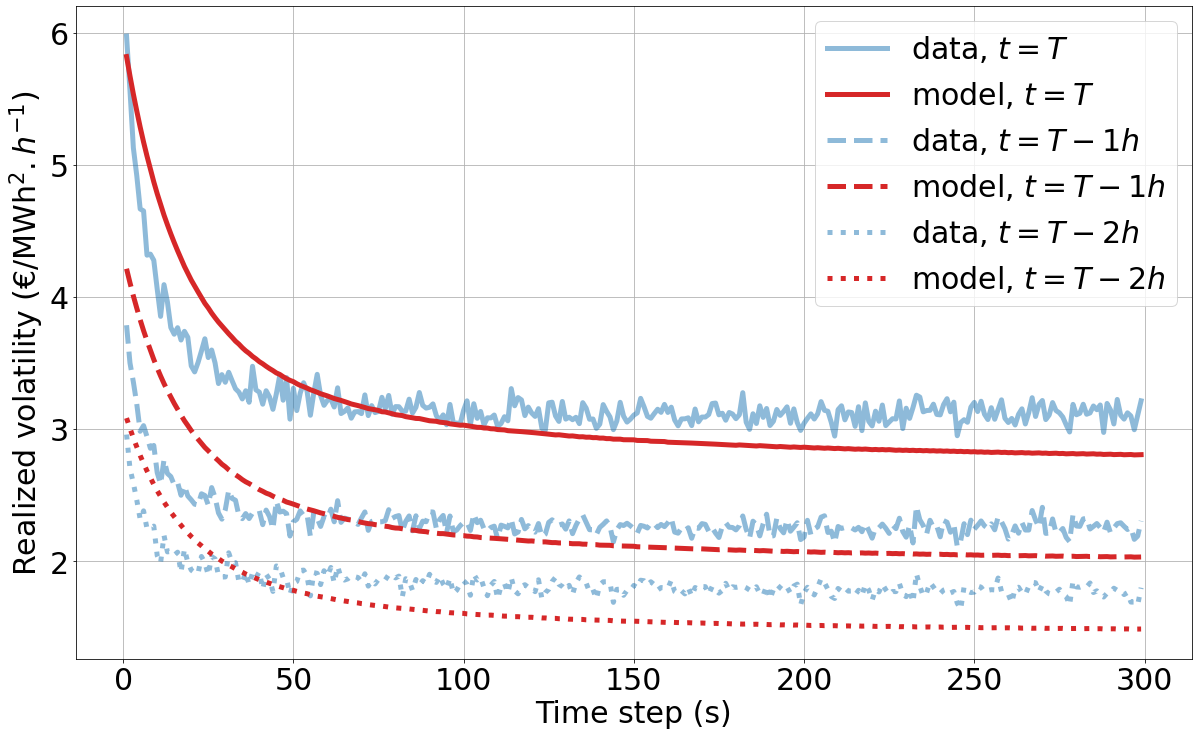}
    \caption{18h}
\end{subfigure}
   \begin{subfigure}{0.6\textwidth}
   \includegraphics[width=\textwidth]{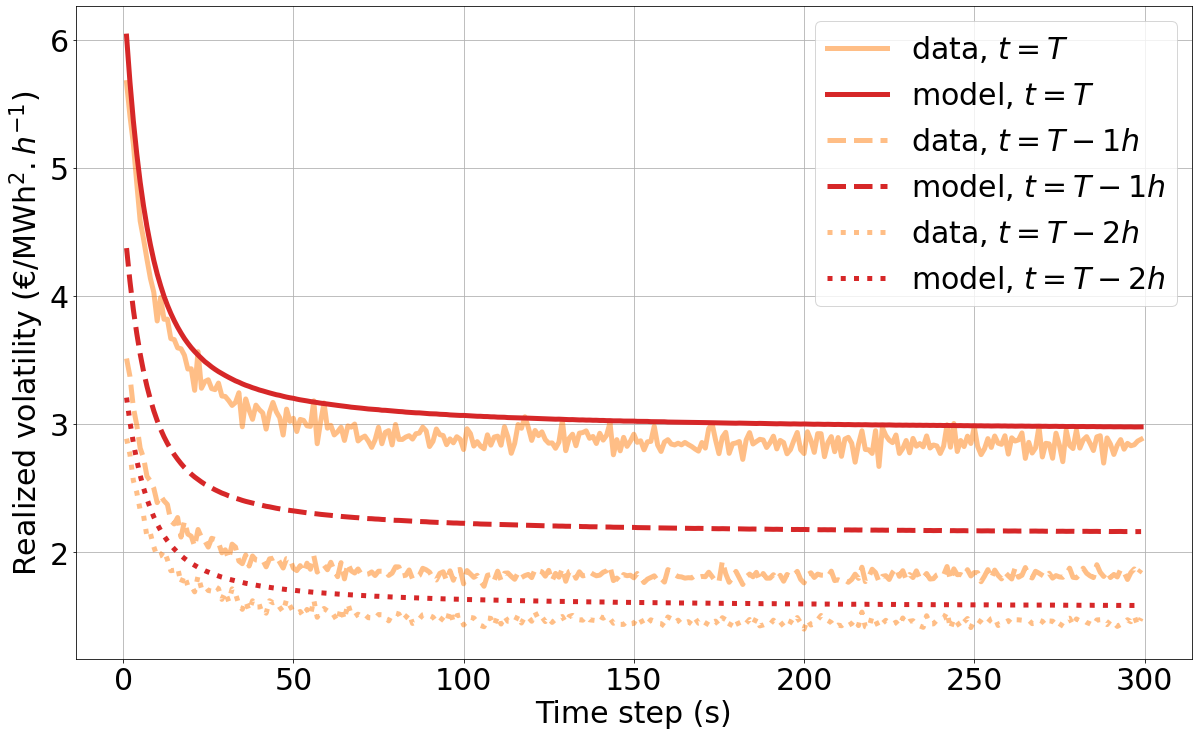}
    \caption{19h}
    \end{subfigure}
   \begin{subfigure}{0.6\textwidth}
   \includegraphics[width=\textwidth]{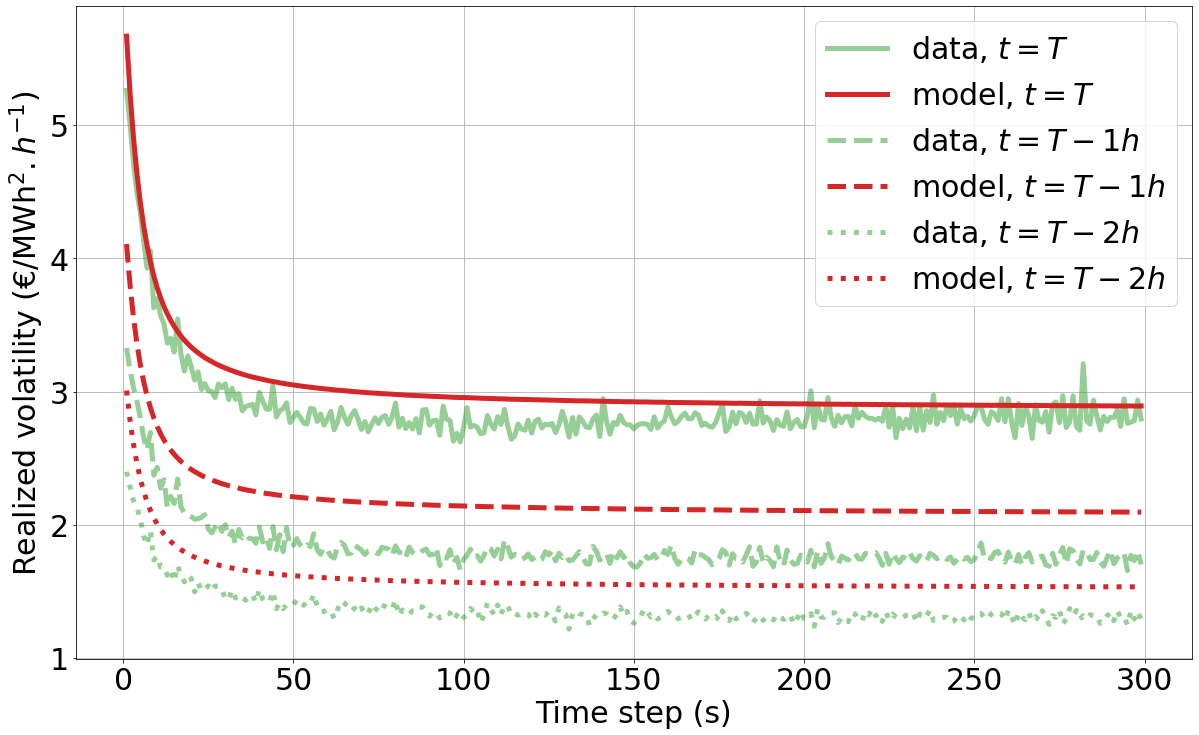}
    \caption{20h}
    \end{subfigure}
       \caption{\label{fig:signatureplotsimu}  Empirical and theoretical signature plot $C(t,\delta)$ at different times for maturities 18h, 19h and 20h with estimated parameters of Table~\ref{tab:parameters}}
\end{figure}

\section{Price at macroscopic scale and Samuelson effect}
\label{sec:macro}

In this section, we aim at defining the behavior of prices at large scale. Precisely, as done by Bacry et al.~\cite{Bacry2013b}, we provide a law of large numbers and a functional central limit theorem in our setting, as $T$ is sent to infinity. In the context of intraday trading sessions, of which duration is finite, this should be understood as having trading sessions of arbitrary length. The analysis of the prices at macroscopic scale will allow us to observe the Samuelson effect on intraday markets. In order to provide an adequate time normalization, the time index $t$ will be denoted as $vT$, where $v\in[0,1]$.\par
First we establish a law of large numbers for the sums of jump sizes to the power $0$, $1$ and $2$ respectively. Its proof is in Section~\ref{proof:proplln}.

\begin{proposition} \label{prop:lln} Let us consider the model \eqref{eq:intensities}, let $N^i_t = \int_0^t (J_s)^i dN_s = \left(\sum_{j=1}^{N^+_t} (J^+_j)^i, \sum_{j=1}^{N^-_t} (J^-_j)^i \right)^{\top}$ for $i \in \{0, 1, 2\}$. For $i \in \{0,1,2\}$, if $\mathbb{E}(J^{2i}) < \infty$,
\[\sup_{v \in \left[0,1\right]} \| T^{-1} N^i_{vT} - (I_2 - K)^{-1} \begin{pmatrix}
1 \\ 1
\end{pmatrix} \mathbb{E}(J^i) \int_0^v \mu(s) ds \| \to 0\] as $T \to \infty$ almost-surely and in $L^2(\mathbb{P})$.
\end{proposition}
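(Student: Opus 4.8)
The plan is to prove the result first for a fixed $v$, in $L^2$, and then to upgrade to the uniform (in $v$) and almost-sure statements. Throughout, write $\lambda_t=(\lambda^+_t,\lambda^-_t)^\top$ and recall that the compensator of the marked measure $N(ds,dx)$ is $\lambda_s\,ds\otimes\mu_J(dx)$, so that each $N^i$ admits the decomposition $N^i_t=\mathbb{E}(J^i)\int_0^t\lambda_s\,ds+M^i_t$, where $M^i$ is a locally square-integrable martingale whose two components are orthogonal (the coordinates of $N$ never jump simultaneously, since each arrival is marked either $+$ or $-$) with predictable brackets $\langle M^{i,\pm}\rangle_t=\mathbb{E}(J^{2i})\int_0^t\lambda^\pm_s\,ds$. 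The hypothesis $\mathbb{E}(J^{2i})<\infty$ is exactly what makes these brackets finite.

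First I would identify the limit. Taking expectations in \eqref{eq:intensities} and using the $i=1$ compensator, $m_t:=\mathbb{E}(\lambda_t)$ solves the linear Volterra equation $m_t=\mu(t/T)\begin{pmatrix}1\\1\end{pmatrix}+\int_0^t\psi(t-s)m_s\,ds$ with $\psi:=\mathbb{E}(J)\varphi$. Solving with the resolvent $\Psi=\sum_{n\ge1}\psi^{*n}$ (finite because $\rho(K)<1$, with $\int_0^\infty\Psi=(I_2-K)^{-1}K$ since $\int_0^\infty\psi=K$), the change of variables $w=s/T$ and dominated convergence — legitimate because $\mu$ is bounded and $\Psi$ is integrable — give
\[T^{-1}\int_0^{vT}m_s\,ds\longrightarrow \Big(I_2+\int_0^\infty\Psi\Big)\begin{pmatrix}1\\1\end{pmatrix}\int_0^v\mu(w)\,dw=(I_2-K)^{-1}\begin{pmatrix}1\\1\end{pmatrix}\int_0^v\mu(w)\,dw.\]
Multiplying by $\mathbb{E}(J^i)$ produces the candidate limit $\ell_i(v)$.

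Next, the $L^2$ convergence for fixed $v$. The martingale part is immediate: $\mathbb{E}(|M^i_{vT}|^2)=\mathbb{E}(\langle M^i\rangle_{vT})=\mathbb{E}(J^{2i})\int_0^{vT}\mathbb{E}(\lambda_s)\,ds=O(T)$, hence $T^{-1}M^i_{vT}\to0$ in $L^2$. For the compensator part I would write the intensity in resolvent form, $\lambda_t=g_t+\int_0^t\Psi(t-s)g_s\,ds$ with $g_t=\mu(t/T)\begin{pmatrix}1\\1\end{pmatrix}+\int_0^t\varphi(t-s)\,dM^1_s$: its deterministic component reproduces $m_t$, while its martingale-driven component, once integrated over $[0,vT]$, is a stochastic integral against $dM^1$ with a bounded kernel (controlled by $\int_0^\infty|\varphi|$ and $\int_0^\infty\Psi$), hence of variance $O(T)$. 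Dividing by $T$ shows that $T^{-1}\int_0^{vT}\lambda_s\,ds$ converges in $L^2$ to its deterministic limit, and adding the two contributions gives $T^{-1}N^i_{vT}\to\ell_i(v)$ in $L^2$ for each $v$. Since $v\mapsto N^i_{vT}$ is non-decreasing (the jumps are nonnegative) and $v\mapsto\ell_i(v)$ is continuous and non-decreasing, a Dini-type argument upgrades this pointwise convergence to $\sup_{v\in[0,1]}\|\cdot\|\to0$, which is the $L^2$ statement.

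Finally, the almost-sure statement, which I expect to be the main obstacle: the process depends on $T$ through the baseline $\mu(t/T)$, so we face a triangular array rather than a single process. Along $T_k=k^2$ the $L^2$ bound is summable, so Borel–Cantelli gives pointwise a.s. convergence along $(T_k)$, upgraded to uniform-in-$v$ a.s. convergence by the same monotonicity argument. To fill the gaps for continuous $T$, I would realise all the processes on one space through a common marked-Poisson embedding; since $\mu$ is non-decreasing, $T\mapsto\mu(t/T)$ is non-increasing, and — because every entry of $\varphi$ and every mark is nonnegative — the embedding is monotone, yielding $N^{(T')}_t\le N^{(T)}_t$ pathwise whenever $T\le T'$. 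For $T\in[T_k,T_{k+1}]$ this, together with the monotonicity in time, sandwiches $T^{-1}N^{i,(T)}_{vT}$ between $\tfrac{T_k}{T}T_k^{-1}N^{i,(T_{k+1})}_{vT_k}$ and $\tfrac{T_k}{T}T_k^{-1}N^{i,(T_k)}_{vT_{k+1}}$ (extending $\mu$ by its terminal value past $1$ so the right-hand evaluation times remain admissible). As $T_{k+1}/T_k\to1$, the uniform convergence along $(T_k)$ and the continuity of $\ell_i$ force both bounds to $\ell_i(v)$, uniformly in $v$. The two delicate points are thus the monotone coupling in the baseline and the control of the fluctuation of $\int_0^{vT}\lambda_s\,ds$ (and not merely of the terminal martingale), both addressed above.
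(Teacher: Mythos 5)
Your proposal is correct, and its $L^2$/uniform part is essentially the paper's own argument: like the authors, you decompose $N^i$ into the compensator $\mathbb{E}(J^i)\int_0^\cdot\lambda_s\,ds$ plus a martingale $M^i$ whose bracket is controlled by $\mathbb{E}(J^{2i})$ and the (uniformly bounded) mean intensity, and you use the resolvent $\Psi=\sum_{n\ge1}(\mathbb{E}(J)\varphi)^{*n}$ with $I_2+\int_0^\infty\Psi(s)\,ds=(I_2-K)^{-1}$ to identify the limit. The differences in this part are cosmetic: the paper writes the fluctuation as $N^i_t-\mathbb{E}(N^i_t)=M^i_t+\tfrac{\mathbb{E}(J^i)}{\mathbb{E}(J)}\int_0^t\Psi(t-s)M^1_s\,ds$ and bounds the supremum directly via Doob's inequality, while you represent $\int_0^{vT}(\lambda_s-\mathbb{E}(\lambda_s))\,ds$ as a stochastic integral against $M^1$ with a bounded kernel and upgrade pointwise convergence to uniform convergence through the P\'olya/Dini monotonicity argument; likewise, for the deterministic limit the paper obtains uniformity by dominating with a constant-baseline Hawkes process and invoking a lemma of Bacry et al.~\cite{Bacry2013b}, where you use dominated convergence plus monotonicity, which is more elementary and self-contained. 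The genuine divergence is the almost-sure statement. The paper disposes of it with ``the arguments for the almost-sure convergence remain true'', deferring to \cite{Bacry2013b}; but that argument concerns a single process observed over a growing horizon, whereas here the law of the process changes with $T$ through the baseline $\mu(\cdot/T)$, so, as you rightly stress, one faces a triangular array and the a.s. claim is not even well-posed without coupling the processes across different $T$. Your resolution --- Borel--Cantelli along $T_k=k^2$ using the $O(T^{-1})$ second-moment bound, the monotone thinning coupling (a larger $T$ yields a pointwise smaller baseline since $\mu$ is non-decreasing, hence, the kernel and marks being nonnegative, a pathwise dominated process), and the sandwich over $T\in[T_k,T_{k+1}]$ with $\mu$ extended past $1$ --- is sound and supplies exactly the ingredient the published proof leaves implicit; on this point your treatment is more careful than the paper's.
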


This proposition gives the uniform convergence of $N_{\cdot T}^i$ to a limit function of $v$, featuring the integral of the baseline intensity. In this respect, it extends Theorem~1 of Bacry et al.~\cite{Bacry2013b} in which the baseline intensity is constant and the jump sizes are equal to 1. Now we provide a functional central limit theorem for the process $N^1$ met in Proposition~\ref{prop:lln}, which is the couple $(f^+, f^-)^{\top}$. Under an additional integrability condition, it gives the convergence in law of the quantity in Proposition~\ref{prop:lln} after a suitable renormalization. The convergence holds for the Skorokhod topology, see Skorokhod~\cite{Skorokhod1956} for the exact definition. The proof is in Section~\ref{proof:propbrownianlimit}.

\begin{proposition}\label{prop:brownianlimit}
Let us consider the model \eqref{eq:intensities}--\eqref{eq:upwarddownward}, let $N^1_t = (f_t^+, f_t^-)^{\top}$ and assume $\mathbb{E}(J^4) < \infty$. Let $\Sigma$ denote the diagonal matrix with $\Sigma_{j,j} = \left(\left(I_2-K\right)^{-1}\left(1,1\right)^{\top}\right)_j$, $j=1,2$. We have
\[\left(\frac{1}{\sqrt{T}}\left(N^1_{vT} - \mathbb{E}\left(N^1_{vT}\right)\right)\right)_{v \in \left[0,1\right]} \to \left((I_2-K)^{-1}\Sigma^{1/2} \sqrt{E(J^2)} \int_0^v \sqrt{\mu(s)} dW_s\right)_{v \in \left[0,1\right]}
\]
in law for the Skorokhod topology when $T \to \infty$ where $W$ is a 2-dimensional Brownian motion.
Adding the assumption $\int_0^{\infty} \varphi(t) t^{\frac{1}{2}} dt < \infty$ componentwise, we have
\[
\left(\sqrt{T}\left(\frac{N^1_{vT}}{T} - \left(I_2-K\right)^{-1}\begin{pmatrix}
1 \\ 1
\end{pmatrix} \int_0^v \mu(s) ds\right)\right)_{v \in \left[0,1\right]} \to \left((I_2-K)^{-1}\Sigma^{1/2} \sqrt{E(J^2)} \int_0^v \sqrt{\mu(s)} dW_s\right)_{v \in \left[0,1\right]}
\]
in law for the Skorokhod topology when $T \to \infty$.
\end{proposition}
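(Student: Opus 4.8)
The plan is to follow the martingale-based strategy of Bacry et al.~\cite{Bacry2013b}, adapted to the marked, time-inhomogeneous setting, by reducing everything to a single fundamental martingale whose rescaled bracket is controlled by Proposition~\ref{prop:lln}.

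First I would introduce the fundamental martingale. Since the compensator of $N^1 = \int_0^\cdot J_s\,dN_s$ is $\mathbb{E}(J)\int_0^\cdot \lambda_s\,ds$, the process $M_t = N^1_t - \mathbb{E}(J)\int_0^t \lambda_s\,ds$ is a square-integrable $\mathbb{R}^2$-valued martingale (using $\mathbb{E}(J^2)<\infty$). Because $N^+$ and $N^-$ never jump simultaneously, its predictable bracket is the diagonal matrix with entries $\langle M^j\rangle_t = \mathbb{E}(J^2)\int_0^t \lambda^j_s\,ds$ and $\langle M^+,M^-\rangle\equiv 0$. Next I would linearize the intensity: writing $g_t = \lambda_t - \mathbb{E}(\lambda_t)$, subtracting the mean from \eqref{eq:intensities} and substituting $dN^1_s = dM_s + \mathbb{E}(J)\lambda_s\,ds$ gives the vector renewal equation $g = \varphi * dM + \phi * g$ with $\phi = \mathbb{E}(J)\varphi$, solved by $g = \varphi*dM + \psi*(\varphi*dM)$ where $\psi = \sum_{n\ge 1}\phi^{*n}$ is the resolvent. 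Since $N^1_{vT} - \mathbb{E}(N^1_{vT}) = M_{vT} + \mathbb{E}(J)\int_0^{vT} g_s\,ds$, a stochastic Fubini argument yields the representation $N^1_{vT}-\mathbb{E}(N^1_{vT}) = \int_0^{vT}\Theta(vT-u)\,dM_u$ with $\Theta(r) = I_2 + \mathbb{E}(J)\int_0^r (\varphi + \psi*\varphi)(w)\,dw$. Using $\int_0^\infty \phi = K$ and $\int_0^\infty \psi = (I_2-K)^{-1}-I_2$, one checks $\Theta(r)\to (I_2-K)^{-1}$ as $r\to\infty$, with $\Theta - (I_2-K)^{-1}\in L^1\cap L^2$.

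Then I would run the martingale FCLT on the rescaled martingale and transfer. The normalized bracket $\frac1T\langle M\rangle_{vT}$ has diagonal entries $\mathbb{E}(J^2)\frac1T\int_0^{vT}\lambda^j_s\,ds$; applying Proposition~\ref{prop:lln} with $i=0$ to $\frac1T\int_0^{vT}\lambda_s\,ds$ (which shares the limit of $T^{-1}N^0_{vT}$, the martingale defect being $O(\sqrt T)$) gives $\frac1T\langle M\rangle_{vT}\to \mathbb{E}(J^2)\,\Sigma\int_0^v \mu(s)\,ds$. The jumps of $\frac1{\sqrt T}M$ equal $J_i/\sqrt T$, so the Lindeberg condition reduces to $\mathbb{E}(J^2\mathbf{1}_{J>\varepsilon\sqrt T})\to 0$, which holds under $\mathbb{E}(J^2)<\infty$ (the stronger assumption $\mathbb{E}(J^4)<\infty$ secures the $L^2$ controls used when passing to the limit in the bracket). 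The martingale FCLT, e.g.\ from \cite{Jacod13}, then yields $\frac1{\sqrt T}M_{\cdot T}\Rightarrow \Sigma^{1/2}\sqrt{\mathbb{E}(J^2)}\int_0^\cdot \sqrt{\mu(s)}\,dW_s$ in the Skorokhod topology. Splitting $\Theta = (I_2-K)^{-1} + R$ gives $\frac1{\sqrt T}(N^1_{vT}-\mathbb{E}N^1_{vT}) = (I_2-K)^{-1}\frac1{\sqrt T}M_{vT} + \frac1{\sqrt T}\int_0^{vT}R(vT-u)\,dM_u$, and the bracket of the remainder is bounded by $\frac{\mathbb{E}(J^2)}{T}\int_0^{vT}|R(vT-u)|^2\,\mathbb{E}(\lambda_u)\,du = O(1/T)$ (using $R\in L^2$ and $\sup_{s,T}\mathbb{E}(\lambda_s)<\infty$, which follows from $\mu$ bounded), so it vanishes uniformly in $v$ by Doob's inequality. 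Slutsky's lemma and the continuous mapping theorem give the first claimed convergence.

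Finally, for the second statement I would split $\sqrt T\,(T^{-1}N^1_{vT} - \ell(v)) = \frac1{\sqrt T}(N^1_{vT}-\mathbb{E}N^1_{vT}) + \sqrt T\,(T^{-1}\mathbb{E}N^1_{vT} - \ell(v))$, where $\ell(v)$ is the deterministic limit of Proposition~\ref{prop:lln}; the first term converges by the previous step, so everything reduces to showing the deterministic bias is $o(T^{-1/2})$ uniformly in $v$. Using $\mathbb{E}(\lambda_t) = \mu(t/T)\,(1,1)^\top + \int_0^t \psi(t-s)\mu(s/T)\,(1,1)^\top ds$, the bias is governed by the resolvent tail $\int_s^\infty \psi$ and the modulus of continuity of $\mu$, and quantifying it shows $\sqrt T\times(\text{bias})$ is controlled by $\int_0^\infty \psi(t)\,t^{1/2}\,dt$, finite exactly because the assumption $\int_0^\infty \varphi(t)\,t^{1/2}\,dt<\infty$ transfers to $\psi$. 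I expect this bias estimate to be the main obstacle: in contrast with the stationary case of \cite{Bacry2013b}, the baseline $\mu(\cdot/T)$ is time-dependent, so one must track the rate of convergence of a non-stationary renewal solution to its fluid limit, and it is precisely here that the $t^{1/2}$-moment condition on the kernel enters.
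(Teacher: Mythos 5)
Your overall strategy coincides with the paper's: both introduce the martingale $M_t = N^1_t - \mathbb{E}(J)\int_0^t\lambda_s\,ds$, identify the limit of its rescaled bracket via Proposition~\ref{prop:lln}, check a Lindeberg-type condition on the jumps, invoke the functional limit theorem of Jacod--Shiryaev (Theorem VIII.3.22), and then transfer the conclusion to $N^1-\mathbb{E}(N^1)$ through a resolvent representation, treating the deterministic bias separately for the second statement. Your variants are legitimate in themselves: you use the predictable bracket $\langle M\rangle$ together with the $i=0$ case of Proposition~\ref{prop:lln}, whereas the paper uses the quadratic variation, whose diagonal is $N^2_{vT}$, together with the $i=2$ case (which is where $\mathbb{E}(J^4)<\infty$ enters in the paper); and your $\Theta$-form of the representation is just the integrated version of the paper's identity \eqref{eq:relNM}, since $\mathbb{E}(J)(\varphi+\psi*\varphi)=\psi$.

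The gap is in your treatment of the remainder $Z^{(T)}_v=T^{-1/2}\int_0^{vT}R(vT-u)\,dM_u$, where $R=\Theta-(I_2-K)^{-1}=-\int_{\cdot}^{\infty}\psi(w)\,dw$. First, the claim $R\in L^1\cap L^2$ is unjustified for the first convergence, which is stated for the general model with no moment condition on $\varphi$: there one only knows $\psi\in L^1$, hence $R$ is bounded with $R(r)\to0$, but $\int_0^\infty\|R(r)\|^2\,dr<\infty$ would require a tail/moment condition on $\psi$ of precisely the kind that is only assumed in the second statement (under Assumption~\ref{ass:parametric} the kernel is exponential and your bound would hold, but the proposition is not restricted to that case). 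Consequently the $O(1/T)$ estimate is not available; what survives is only $\frac{1}{T}\int_0^{vT}\|R(s)\|^2\,ds\to0$ by a Ces\`aro argument, i.e. negligibility pointwise in $v$. Second, and more seriously, $(Z^{(T)}_v)_{v\in[0,1]}$ is \emph{not} a martingale in $v$ -- the integrand $R(vT-u)$ depends on the endpoint $vT$ -- so Doob's maximal inequality cannot be applied to it, and the uniform (or at least tightness) control needed to add the remainder to the converging term $(I_2-K)^{-1}T^{-1/2}M_{vT}$ in the Skorokhod topology is not established. This uniformity is exactly the delicate point that the paper outsources to Theorem~2 of \cite{Bacry2013b}; it can be handled by writing the difference as $\int_0^{vT}\psi(s)\left(M_{vT-s}-M_{vT}\right)ds-\left(\int_{vT}^{\infty}\psi(s)\,ds\right)M_{vT}$, using the increment bound $\mathbb{E}\|M_t-M_{t-s}\|\le C\sqrt{s}$ (valid since $\mathbb{E}(\lambda)$ is uniformly bounded), a splitting between small and large $v$, and dominated convergence in the form $\int_0^T\|\psi(s)\|\sqrt{s/T}\,ds\to0$, which requires no $t^{1/2}$ condition -- not by a maximal inequality for a process that is not a martingale. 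Your bias analysis for the second statement is in the right spirit (the $t^{1/2}$ moment of $\varphi$ transfers to $\psi$ and controls $\sqrt{T}$ times the bias, uniformly in $v$), but note that $\mu$ is only assumed non-decreasing and bounded, so an argument through its modulus of continuity is not available; boundedness of $\mu$, as used in \eqref{eq:approxexpectationbound}, is what the estimate should rest on, and it suffices.
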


Finally, under the same assumptions as in the above Proposition~\ref{prop:brownianlimit}, we can state the limit law of the price process in Corollary \ref{cor:pricelimit}.

\begin{corollary} \label{cor:pricelimit}
Let us consider the model \eqref{eq:intensities}--\eqref{eq:upwarddownward}--\eqref{eq:prices} under Assumption \ref{ass:parametric} and assume $\mathbb{E}(J^4)<\infty$. We have
\[\left(\frac{1}{\sqrt{T}}\left(f_{vT}-f_0\right)\right)_{v\in \left[0,1\right]} \to \left(\sqrt{\frac{2 E(J^2)}{\left(1+\frac{\alpha\mathbb{E}(J)}{\beta}\right)^2\left(1-\frac{\alpha \mathbb{E}(J)}{\beta}\right)}} \int_0^v \sqrt{\mu(s)} dW_s\right)_{v\in \left[0,1\right]}\] in law for the Skorokhod topology when $T \to \infty$, where $W$ is a 1-dimensional Brownian motion.
\end{corollary}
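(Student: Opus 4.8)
The plan is to derive the corollary directly from the first convergence in Proposition~\ref{prop:brownianlimit} by projecting the two-dimensional limit onto the price. First I would write the recentered price as a fixed linear functional of $N^1=(f^+,f^-)^{\top}$: since $f_{vT}-f_0 = f^+_{vT}-f^-_{vT} = (1,-1)\,N^1_{vT}$ and, by Proposition~\ref{prop:expectation}, $\mathbb{E}(f^+_{vT})=\mathbb{E}(f^-_{vT})$, so that $\mathbb{E}\big((1,-1)N^1_{vT}\big)=0$, we have
\[
\frac{1}{\sqrt{T}}\big(f_{vT}-f_0\big) = (1,-1)\,\frac{1}{\sqrt{T}}\big(N^1_{vT}-\mathbb{E}(N^1_{vT})\big).
\]
Because $x\mapsto(1,-1)x$ is a continuous linear map, it induces a continuous map $D([0,1],\mathbb{R}^2)\to D([0,1],\mathbb{R})$ for the Skorokhod topology; applying the continuous mapping theorem to the first convergence of Proposition~\ref{prop:brownianlimit} (which only requires $\mathbb{E}(J^4)<\infty$, assumed here) yields convergence in law of $\frac{1}{\sqrt T}(f_{vT}-f_0)$ to $(1,-1)(I_2-K)^{-1}\Sigma^{1/2}\sqrt{\mathbb{E}(J^2)}\int_0^v\sqrt{\mu(s)}\,dW_s$, with $W$ a $2$-dimensional Brownian motion.

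The second step is to evaluate the deterministic prefactor under Assumption~\ref{ass:parametric}. Writing $a:=\frac{\alpha\mathbb{E}(J)}{\beta}$, the null-diagonal exponential kernel gives $\int_0^\infty\varphi_{\exp}=\alpha/\beta$, hence
\[
K=\begin{pmatrix}0 & a\\ a & 0\end{pmatrix},\qquad (I_2-K)^{-1}=\frac{1}{1-a^2}\begin{pmatrix}1 & a\\ a & 1\end{pmatrix},
\]
and the stability condition $\alpha\mathbb{E}(J)<\beta$ ensures $a<1$. A direct computation gives $(I_2-K)^{-1}(1,1)^{\top}=\frac{1}{1-a}(1,1)^{\top}$, so $\Sigma=\frac{1}{1-a}I_2$ and $\Sigma^{1/2}=(1-a)^{-1/2}I_2$. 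Multiplying out,
\[
(1,-1)(I_2-K)^{-1}\Sigma^{1/2}=\frac{1}{(1+a)\sqrt{1-a}}\,(1,-1).
\]

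Finally I would reduce the two-dimensional driving noise to a single Brownian motion. The limit is then $\frac{\sqrt{\mathbb{E}(J^2)}}{(1+a)\sqrt{1-a}}\int_0^v\sqrt{\mu(s)}\,(dW^1_s-dW^2_s)$, and since $W^1-W^2$ has independent increments with variance $2s$ it equals $\sqrt{2}\,\widetilde W$ for a standard one-dimensional Brownian motion $\widetilde W$. Collecting the constants produces the coefficient $\sqrt{\frac{2\mathbb{E}(J^2)}{(1+a)^2(1-a)}}$, which is exactly the stated constant once $a$ is replaced by $\alpha\mathbb{E}(J)/\beta$. No genuine obstacle arises: the argument is a continuous-mapping corollary of Proposition~\ref{prop:brownianlimit}, and the only points requiring a word of justification are the Skorokhod-continuity of the projection $(1,-1)$ (immediate, since the limit has continuous paths) and the Gaussian identification $W^1-W^2=\sqrt2\,\widetilde W$; the remainder is the short linear-algebra computation of the prefactor above.
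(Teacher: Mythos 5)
Your proposal is correct and is exactly the derivation the paper intends for this corollary: apply the first convergence of Proposition~\ref{prop:brownianlimit}, note that $f_{vT}-f_0=(1,-1)\left(N^1_{vT}-\mathbb{E}(N^1_{vT})\right)$ since $\mathbb{E}(f^+_{vT})=\mathbb{E}(f^-_{vT})$ under Assumption~\ref{ass:parametric}, and then carry out the linear algebra with $K$, $\Sigma$ and the identification $W^1-W^2=\sqrt{2}\,\widetilde{W}$ to obtain the stated constant. Your computation of the prefactor and your remarks on Skorokhod continuity of the projection are accurate, so there is nothing to add.
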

The price process converges to a Brownian motion with a time dependent volatility at large scale : this volatility increases at the same speed as the baseline intensity when time gets closer to maturity. The instantaneous squared volatility of the Brownian limit is equal to 
\[
(\sigma^{macro})^2(t) = \frac{2\mathbb{E}(J^2)\mu_0e^{\kappa t}}{\left(1+\frac{\alpha\mathbb{E}(J)}{\beta}\right)^2\left(1-\frac{\alpha\mathbb{E}(J)}{\beta}\right)}
\]
which is consistent with the macroscopic realized volatility given in Equation~\eqref{eq:macro}. 

\section{Conclusion and perspectives}
\label{sec:conclusion}

In this article, we first state some empirical facts about electricity intraday markets. In particular, we highlight an increasing price changing activity and the presence of microstructure noise through the empirical signature plot. Numerical illustrations on the German market are provided.\par
A simple model based on marked Hawkes processes is introduced. Generalizing the results of \cite{Bacry2013a}, we provide theoretical properties within our model, that allow us to reproduce the stylized facts quite well. In particular, such modeling allows us to fit the empirical signature plot. Similarly to electricity forward markets on which volatility increases as time gets closer to maturity, one also observes a Samuelson effect than can be reproduced within our model, in which the whole signature plot curve increases when approaching maturity.
Finally, we examine the behavior of prices at macroscopic scale that confirms the existence of this Samuelson effect : the price process converges to a Brownian motion with increasing instantaneous volatility.\par
Future research on this subject might focus on a multidimensional modeling with a specific focus on the Epps effect, according to which the correlation between two prices increases when the sampling frequency of estimation increases, as illustrated in Figure~\ref{fig:epps}. This is a well known effect in classic finance and would deserve being investigated in electricity intraday markets. We also think it would be meaningful to lead a large-scale study on the sizes of jumps in order to be able to use an even more representative distribution.

\begin{figure}
    \centering
     \centering
    \begin{subfigure}{0.49\textwidth}
    \centering
    \includegraphics[width=\textwidth]{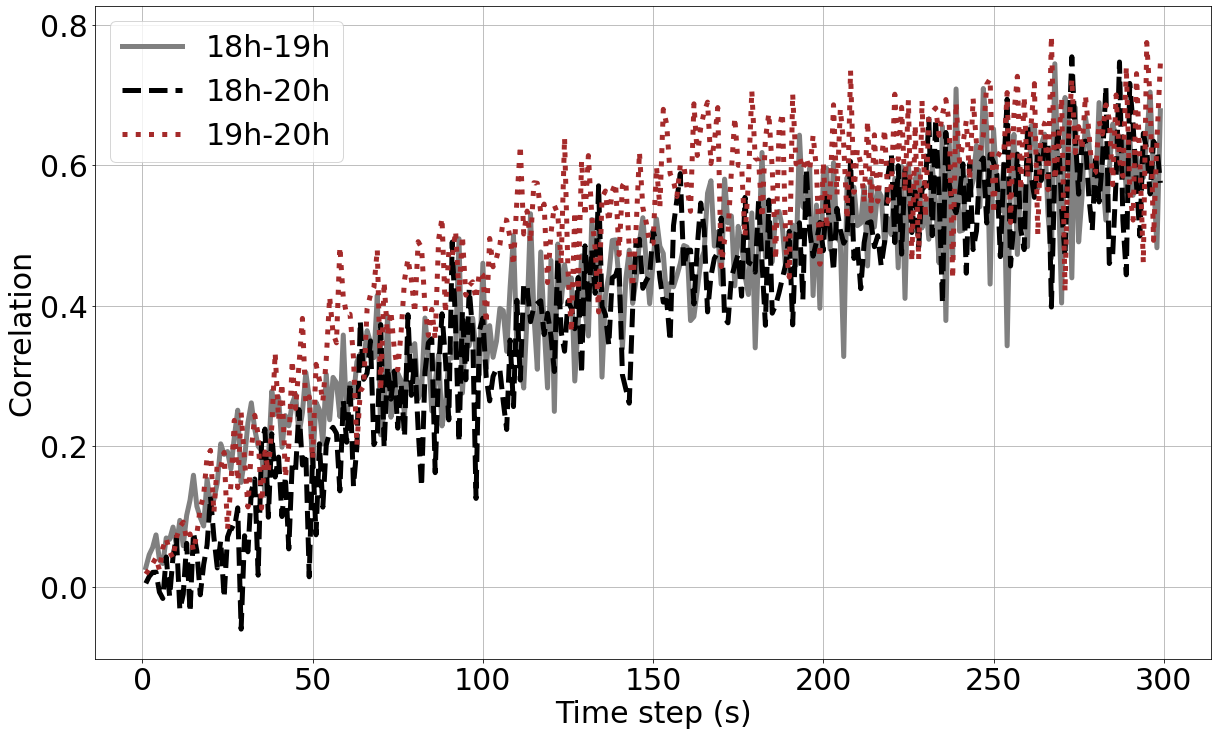}
    \caption{August 30\textsuperscript{th}, 2017}
\end{subfigure}
   \begin{subfigure}{0.49\textwidth}
    \centering
   \includegraphics[width=\textwidth]{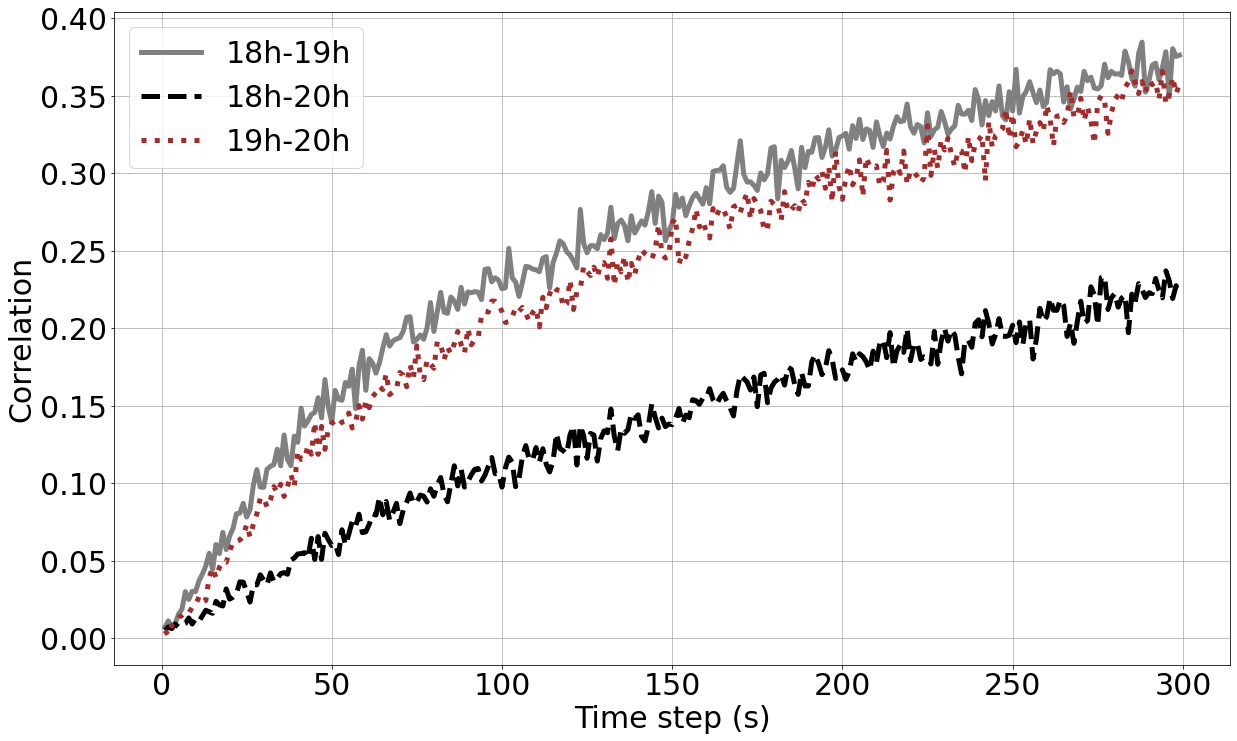}
    \caption{Average}
    \end{subfigure}
       \caption{\label{fig:epps} Epps effect for different maturities estimated from 9 hours before the nearest maturity for one trading session and on average over all trading sessions}
\end{figure}

\section{Proofs for Section \ref{sec:model}}
\label{proof:sectionmodel}
\subsection{Preliminary results}

The following two results are required for the computation of the different moments. The first result, in Proposition~\ref{prop:charachawkes}, gives the characteristic function of the multivariate marked Hawkes process~\eqref{eq:upwarddownward}. The moments will be derived from this function.  

\begin{proposition}\label{prop:charachawkes} Let $(f^+, f^-)^{\top}$ defined the bivariate marked Hawkes process defined by Equation~\eqref{eq:upwarddownward}. Its characteristic function 
\[L(a_+, a_-, t) = \mathbb{E}\left(e^{ia_+ f^+_t + ia_- f^-_t}\right)\]
for $t \in \left[0,T\right]$, $a_+ \in \mathbb{R}$, $a_- \in \mathbb{R}$, is given by
\[
L(a_+, a_-, t) = \exp\left(\int_0^t \mu\left(\frac{s}{T}\right)\begin{pmatrix}1 \\ 1 \end{pmatrix}^{\top}\left(C(a_+, a_-, t-s) -\begin{pmatrix}1 \\ 1 \end{pmatrix}\right)ds\right)
\]
with $C$ solution of the integro-differential equation 
\[
C(a_+, a_-, t) = \mathbb{E}\left(\exp\left(iJ \begin{pmatrix}a_+ \\ a_- \end{pmatrix} + J \int_0^t \varphi(s)^{\top} \left(C(a_+,a_-,t-s) - \begin{pmatrix}1 \\ 1 \end{pmatrix} \right)ds\right)\right).
\]
\end{proposition}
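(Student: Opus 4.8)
The plan is to exploit the Poisson cluster (immigration--birth) representation of the marked Hawkes process, which is the natural tool given the product/exponential structure of the claimed formula. First I would recall that, since $\rho(K)<1$, the process $N$ admits a representation as a Poisson cluster process: immigrants of each type $k\in\{+,-\}$ arrive according to a Poisson process on $[0,T]$ with intensity $\mu(s/T)$; a point of type $k$ occurring at time $u$ and carrying mark $J$ generates, conditionally on $J$, direct offspring of type $l$ according to a Poisson process of intensity $J\,\varphi_{lk}(\cdot-u)$ (with the identification $+\leftrightarrow 1$, $-\leftrightarrow 2$ for the matrix entries), and each offspring independently roots its own sub-cluster with a fresh mark. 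The subcriticality condition guarantees that every cluster is almost surely finite and that the resulting point process has exactly the intensity~\eqref{eq:intensities}; clusters attached to distinct immigrants are mutually independent given the immigrant configuration.

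Next I would introduce the cluster characteristic function: for $k\in\{+,-\}$ let $C_k(a_+,a_-,t)$ denote the characteristic function of $a_+\tilde f^+ + a_-\tilde f^-$, where $(\tilde f^+,\tilde f^-)$ is the total jump contribution (ancestor included) of a single cluster rooted at a type-$k$ point placed at time $0$ and observed over a window of length $t$; set $C=(C_+,C_-)^{\top}$. Because the cluster is a.s.\ finite this is well defined and satisfies $|C_k|\le 1$. The outer formula then follows by conditioning on the immigrants: distinct clusters contribute independently, an immigrant of type $k$ born at time $s$ carries a cluster with characteristic function $C_k(a_+,a_-,t-s)$, and applying the characteristic functional of a marked Poisson process (for intensity $\Lambda$ and mark-averaged factor $g$, $\mathbb{E}[\prod_j g(x_j)]=\exp(\int(g-1)\,d\Lambda)$) gives exactly
\[
L(a_+,a_-,t) = \exp\left(\int_0^t \mu\!\left(\tfrac{s}{T}\right)\begin{pmatrix}1\\1\end{pmatrix}^{\top}\!\left(C(a_+,a_-,t-s)-\begin{pmatrix}1\\1\end{pmatrix}\right)ds\right).
\]

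To obtain the fixed-point equation for $C$ I would decompose a single cluster into its root and the sub-clusters hanging off the first generation. Conditionally on the root mark $J$, a root of type $k$ contributes the factor $e^{ia_k J}$ through its own jump, while its type-$l$ offspring form a Poisson process of intensity $J\varphi_{lk}(u)\,du$ on $[0,t]$, each offspring at time $u$ independently carrying a sub-cluster with characteristic function $C_l(a_+,a_-,t-u)$. A second application of the Poisson characteristic functional, now conditional on $J$, yields the $k$-th component
\[
C_k(a_+,a_-,t) = \mathbb{E}\left[\exp\left(ia_k J + J\int_0^t \big[\varphi(u)^{\top}\big(C(a_+,a_-,t-u)-(1,1)^{\top}\big)\big]_k\,du\right)\right],
\]
where I have used $[\varphi(u)^{\top}v]_k=\sum_l \varphi_{lk}(u)v_l$; stacking the two components and taking the expectation over $J$ outside reproduces the stated integro-differential equation.

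Finally I would address well-posedness: existence and uniqueness of $C$ as a solution of this Volterra-type equation on $[0,T]$ follows from a Picard iteration, the condition $\rho(K)<1$ providing the contraction on the space of measurable functions bounded by $1$ in modulus, and the same iteration identifies the fixed point with the cluster characteristic function defined above. The main obstacle will be the rigorous justification of the cluster representation together with the two successive applications of the Poisson characteristic functional: one must verify the conditional independence and measurability of the sub-clusters, the almost-sure finiteness of clusters (which is exactly where $\rho(K)<1$ enters), and the legitimacy of interchanging the expectation over marks with the Poisson functional. An alternative route, avoiding the branching picture, would be to differentiate a suitably conditioned version of $L$ in $t$ and read off the same equation from the infinitesimal dynamics of the intensity, but for a general (non-exponential) $\varphi$ this forces one to track an infinite-dimensional state and is markedly less transparent than the cluster argument.
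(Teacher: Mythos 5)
Your proposal is correct and follows essentially the same route as the paper: the Hawkes--Oakes cluster (immigration--birth) representation, cluster characteristic functions $C_\pm$, an application of the Poisson exponential formula over the immigrant process for the outer expression, and the self-similarity of sub-clusters (a cluster rooted at a type-$k$ point with mark $J$ behaves, conditionally on $J$, like a Hawkes process with baseline $J\varphi_{\cdot,k}$) for the fixed-point equation. The only cosmetic difference is that you invoke the marked Poisson characteristic functional directly, while the paper rederives it by conditioning on the number of immigrants and using the order-statistics property of inhomogeneous Poisson processes; your added remarks on well-posedness of the Volterra fixed point and cluster finiteness are sound complements rather than deviations.
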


\begin{proof}
The proof is similar to the one of \cite[Proposition 5]{Gao2018} in a multidimensional setting. It also extends the one of \cite[Theorem 3.1]{ElEuch19} including marks. Let us write
\[\varphi = \begin{pmatrix}
\varphi_{+, +} & \varphi_{+, -} \\
\varphi_{-, +} & \varphi_{-, -}
\end{pmatrix}.
\]
In the following, we consider the cluster representation of Hawkes processes introduced in \cite{hawkes1974}. The first generation of migrants with type $+$ and the first generation of migrants with type $-$ both appears as an inhomogeneous Poisson process with intensity $\mu(\frac{\cdot}{T})$. Each new migrant is associated to a random jump mark with law $J$, and the marks are i.i.d. Migrants of type $+$ (resp. $-$) of the first generation give birth to migrants of the second generation of type $+$ as a inhomogeneous Poisson process with intensity $x\varphi_{+,+}(\cdot-t)$ (resp. $x\varphi_{+,-}(\cdot-t$) and of type $-$ with intensity $x\varphi_{-,+}(\cdot-t)$ (resp. $x\varphi_{-,-}(\cdot-t)$) if migrant arrived at time $t$ and is associated to a mark $x > 0$. Migrants of the second generation give birth to migrants of the third generation in the same way, and so on. $N^+_t$ and $N^-_t$ correspond respectively to the total number of migrants of type $+$ and of type $-$ up to time $t$ and $f^+_t$ and $f_t^-$ are the sum of their marks up to time $t$. 

\medskip
For $m \in \{+,-\}$ and a given mark $x > 0$, one considers the point process $\left(\left(\tilde{N}^{m,+}_{x,t}, \tilde{N}^{m,-}_{x,t}\right)^{\top}\right)_t$ defined with a first generation that arrives as an inhomogeneous Poisson process with rate $x\varphi_{+,m}$ for positive migrants and $x\varphi_{-,m}$ for negative migrants, and migrants are associated with an i.i.d. sequence with same law as $J$ that represents their mark. Migrants of type $j \in \{+,-\}$ with mark $y > 0$ born at time $t$ give birth to children of type $l \in \{+,-\}$ as an inhomogeneous Poisson process with intensity $y\varphi_{l,j}(\cdot-t)$. Each born migrant is associated with a mark having same law as $J$, and the marks are independent. The next generations are created in the same way as the second one. Let $\tilde{f}^{m,+}$ and $\tilde{f}^{m,-}$ be respectively the sum of the mark of all positive and negative migrants. Let
\[C_m(a_+,a_-,t) = \mathbb{E}\left(\exp\left(ia_m J + i a_+ \tilde{f}_{J,t}^{m,+} +i a_- \tilde{f}_{J,t}^{m,-} \right)\right)\]
for $a_+, \;a_- \in \mathbb{R}$ and $t \in \left[0,T\right]$ be the characteristic function of $\left(J{\bf 1}_{m=+} + \tilde{f}^{m,+}_{J,t},J{\bf 1}_{m=-} + \tilde{f}^{m,-}_{J,t}\right)^{\top}$ for $m \in \{+,-\}$ and let us consider the conditional version 
\[C^c_m(a_+,a_-,t, x) = \mathbb{E}\left(\exp\left(ia_m J + i a_+ \tilde{f}_{J,t}^{m,+} +i a_- \tilde{f}_{J,t}^{m,-} \right) | J = x\right).\]

\medskip
Let $N^{+,0}_t$ and $N^{-,0}_t$ be the number of migrants of type $+$ and of type $-$ of the first generation at time $t$, that have arrived at times $(\tau_i^{+,0})$ and $(\tau_i^{-,0})$ and with marks $(J^{+,0}_i)$ and $(J^{-,0}_i)$. The sum of the marks of migrants of type $+$ (resp. $-$) is then equal to the sum of $f^{+,0}_t$ (resp. $f^{-,0}_t$), which is the sum of the marks of first generation, and of the sum of the marks of positive migrants from the second generation. For a first generation migrant of type $m \in \{+,-\}$ born at time $s \leq t$ and with mark $x$, the sum of the marks of positive (resp. negative) migrants induced by its progeny has the same law as $\tilde{f}^{m,+}_{x,t-s}$ (resp. $\tilde{f}^{m,-}_{x,t-s}$). Therefore, we have the equality in law 
\begin{equation}
    \label{eq:laweq}
\begin{pmatrix}
f_t^+ \\ f_t^-
\end{pmatrix} = \begin{pmatrix}
f^{+,0}_t \\ f^{-,0}_t\end{pmatrix}
+ \sum_{i=1}^{N^{-,0}_t} 
\begin{pmatrix}
\tilde{f}^{-,+,i}_{J^{-,0}_i, t-\tau^{-,0}_i} \\
\tilde{f}^{-,-,i}_{J^{-,0}_i, t-\tau^{-,0}_i}
\end{pmatrix} 
+\sum_{i=1}^{N^{+,0}_t} 
\begin{pmatrix}
\tilde{f}^{+,+,i}_{J^{+,0}_i, t-\tau^{+,0}_i}  \\
\tilde{f}^{+,-,i}_{J^{+,0}_i, t-\tau^{+,0}_i}
\end{pmatrix}
\end{equation}
where, for $m=+,-$, for $i=1,...,N_t^{m,0}$, $(\tilde{f}^{m,+,i}_{J^{-,0}_i}, \tilde{f}^{m,-,i}_{J^{-,0}_i})^{\top}$ is an independent copy of $(\tilde{f}^{m,+}_{J^{-,0}_i},\tilde{f}^{m,-}_{J^{-,0}_i})^{\top}$, independent of $N^0=\left(N^{-,0}, N^{+,0}\right)^{\top}$.

\medskip
From \eqref{eq:laweq}, we deduce 
\[
\begin{split}
\mathbb{E}\left(e^{a_+ f_t^+ + a_- f_t^-} | f^0\right) &= e^{a_+ f^{+,0}_t + a_- f^{-,0}_t} \prod_{m \in \{+,-\}}\prod_{i=1}^{N^{m,0}_t} \mathbb{E}\left(\exp\left(a_+\tilde{f}^{m,+,i}_{J^{m,0}_i, t-\tau^{m,0}_i} + a_-\tilde{f}^{m,-,i}_{J^{m,0}_i, t-\tau^{m,0}_i}\right) | f_0\right) \\
&= \prod_{m \in \{+,-\}}\prod_{i=1}^{N_t^{m,0}} \mathbb{E}\left(\exp\left(ia_m J^{m,0}_i + a_+\tilde{f}^{m,+,i}_{J^{m,0}_i, t-\tau^{m,0}_i} + a_-\tilde{f}^{m,-,i}_{J^{m,0}_i, t-\tau^{m,0}_i}\right) | f_0\right) \\
&=\prod_{m \in \{+,-\}}\prod_{i=1}^{N^{m,0}_t} C^c_m(a_+,a_-,t-\tau^{m,0}_i,J^{m,0}_i).
\end{split}
\]
As the marks $(J^{m,0}_i)$ are independent,
\[
\mathbb{E}\left(e^{a_+ f_t^+ + a_- f_t^-} | N^0\right) = \prod_{m \in \{+,-\}}\prod_{i=1}^{N^{m,0}_t} C_m(a_+,a_-,t-\tau^{m,0}_i).
\]
For $m \in \{+,-\}$, conditionally on $N^{0,m}_t$, the vector of jump times $(\tau^{0,m}_1, \ldots, \tau^{0,m}_{N_t^{m,0}})^{\top}$ has the same law as the order statistics of i.i.d. random variables with density $s \mapsto \frac{\mu(\frac{s}{T}){\bf 1}_{s \leq t}}{\int_0^t \mu(\frac{s}{T})ds}$. We then have 
\[
\mathbb{E}\left(e^{a_+ f_t^+ + a_- f_t^-} | N^0_t\right) = \prod_{m \in \{+,-\}} \left(\int_0^t C_m(a_+,a_-,t-s) \frac{\mu\left(\frac{s}{T}\right)}{\int_0^t \mu\left(\frac{s}{T}\right)ds}ds\right)^{N^{0,m}_t}.
\]
Finally, as $N^{0,+}$ and $N^{0,-}$ are independent inhomogeneous Poisson processes, we get 
\[
\mathbb{E}\left(e^{a_+ f_t^+ + a_- f_t^-}\right) = \exp\left(\sum_{m \in \{+,-\}}\int_0^t \left(C_m\left(a_+,a_-,t-s\right) - 1\right)\mu\left(\frac{s}{T}\right)ds\right).
\]
Conditionally on a given mark $J$, $\left(\tilde{f}_J^{l,+}, \tilde{f}_J^{l,-}\right)^{\top}$ is also a bivariate marked Hawkes process defined in the same way as $\left(f^{+}, f^{-}\right)^{\top}$ but with baseline equal to $\left(J\varphi_{+,l}, J\varphi_{-,l}\right)^{\top}$ for $l \in \{+,-\}$. Therefore we can find, for $l \in \{+,-\}$,
\[\mathbb{E}\left(e^{a_+ \tilde{f}_J^{l,+} + a_- \tilde{f}_J^{l,-}} | J\right) = \exp\left(\sum_{m \in \{+,-\}}\int_0^t J(C_m(a_+,a_-,t-s) - 1)\varphi_{m,l}(s)ds\right)
\]
and $C_l(a_+,a_-,t)$ is equal to
\[
\mathbb{E}\left(e^{a_l J + a_+ \tilde{f}_J^{l,+} + a_- \tilde{f}_J^{l,-}}\right) = \mathbb{E}\left(\exp\left(a_l J + \sum_{m \in \{+,-\}}J\int_0^t (C_m(a_+,a_-,t-s) - 1)\varphi_{m,l}(s)ds\right)\right).
\]
\end{proof}

\bigskip
Second result is Lemma \ref{lemma:ode}. The equation in this lemma is a Volterra equation of the second kind that often appears in moments computation, and the proof follows directly from \cite[Lemma~3]{Bacry2013b}.

\begin{lemma} \label{lemma:ode} Let $a, \beta \in \mathbb{R}$, $|a| < |\beta|$, $\varphi: x \mapsto a e^{-\beta x}$ and $f: \mathbb{R}_+ \to \mathbb{R}$ some measurable and locally bounded function. 
The unique solution of 
\[
\Psi(t) = f(t) + \int_0^t \Psi(t-s) \varphi(s) ds 
\]
with unknown $\Psi$ is given by 
\[
\Psi(t) = f(t) + \int_0^t f(s) \varphi(t-s) e^{a(t-s)}ds.
\]
\end{lemma}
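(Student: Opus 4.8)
The final statement is Lemma~\ref{lemma:ode}, a Volterra equation of the second kind with an exponential kernel $\varphi(x) = ae^{-\beta x}$. The plan is to verify directly that the proposed formula solves the equation, and then invoke the standard uniqueness theory for Volterra equations of the second kind.

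First I would establish existence by substitution. Let $\Psi(t) = f(t) + \int_0^t f(s)\varphi(t-s)e^{a(t-s)}ds$ be the candidate solution, and compute $\int_0^t \Psi(t-u)\varphi(u)\,du$ directly. Substituting the candidate, this splits into $\int_0^t f(t-u)\varphi(u)\,du$ plus a double integral term $\int_0^t\left(\int_0^{t-u} f(s)\varphi(t-u-s)e^{a(t-u-s)}ds\right)\varphi(u)\,du$. The goal is to show that the sum of these two pieces equals $\Psi(t) - f(t) = \int_0^t f(s)\varphi(t-s)e^{a(t-s)}ds$. The key simplification is that with $\varphi(x) = ae^{-\beta x}$, the kernel factorizes as $\varphi(t-s) = ae^{-\beta t}e^{\beta s}$, so exponentials separate and the inner integrals become elementary. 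After a change of variables (e.g. setting $w = s+u$ to collapse the double integral) and collecting the exponential factors, the double-integral term combines with $\int_0^t f(t-u)\varphi(u)\,du$ to reproduce exactly $\int_0^t f(s)\varphi(t-s)e^{a(t-s)}ds$; the extra factor $e^{a(\cdot)}$ in the candidate is precisely what accounts for the resolvent of the exponential kernel. This is the computational heart of the argument.

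For uniqueness, I would note that the difference $D = \Psi_1 - \Psi_2$ of any two solutions satisfies the homogeneous equation $D(t) = \int_0^t D(t-s)\varphi(s)\,ds$. Since $\varphi$ is bounded on compacts and $D$ is locally bounded, a standard Gronwall-type estimate (iterating the integral inequality $\sup_{[0,t]}|D| \le \|\varphi\|_{L^1[0,t]}\sup_{[0,t]}|D|$, or more sharply bounding the $n$-fold convolution kernel by $(at)^n/n!$ up to constants) forces $D \equiv 0$ on every $[0,T]$. Alternatively, as the lemma statement itself indicates, one simply cites \cite[Lemma~3]{Bacry2013b}, where this exact exponential-kernel Volterra equation is solved; the present statement is the scalar specialization of that result, so the cleanest route is to reduce to it directly rather than reprove uniqueness from scratch.

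The main obstacle is purely bookkeeping in the existence step: keeping the nested limits of integration and the three exponential factors ($e^{-\beta t}$ from $\varphi$, $e^{a(t-s)}$ from the resolvent, and the interaction term from the convolution) correctly aligned through the change of variables. There is no conceptual difficulty once the factorization $\varphi(t-s)=ae^{-\beta t}e^{\beta s}$ is exploited, since it decouples the double integral into a product of one-dimensional integrals that telescope. I expect the condition $|a| < |\beta|$ plays no role in the formal algebraic verification but is the natural hypothesis guaranteeing the relevant series converge and the solution stays bounded, matching the spectral-radius condition $\rho(K)<1$ used elsewhere; I would mention it only to confirm consistency rather than invoke it in the substitution.
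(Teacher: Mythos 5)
Your proof is correct, but it is not the route the paper takes: the paper offers no computation at all, simply noting that the lemma ``follows directly from'' Lemma~3 of Bacry et al.~\cite{Bacry2013b}, i.e.\ from the resolvent theory for Volterra equations of the second kind. In that approach the solution is $\Psi = f + \psi * f$, where $\psi = \sum_{n\ge 1}\varphi^{*n}$ is the Neumann series of the kernel, and the point specific to the exponential kernel is that this series sums in closed form: $\varphi^{*n}(t) = a^n t^{n-1}e^{-\beta t}/(n-1)!$, hence $\psi(t) = ae^{-\beta t}e^{at} = \varphi(t)e^{at}$, which is exactly the factor appearing in the stated formula; the hypothesis $|a|<|\beta|$ is what makes the cited lemma applicable, since for $\beta>0$ it gives $\|\varphi\|_{L^1(\mathbb{R}_+)} = |a|/\beta < 1$. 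Your direct substitution is a genuinely different, self-contained alternative, and it does go through: the product $\varphi(t-u-s)\varphi(u) = a^2e^{-\beta(t-s)}$ is independent of $u$, so after Fubini the inner integral is $\int_0^{t-s}e^{a(t-s-u)}\,du = (e^{a(t-s)}-1)/a$ (the case $a=0$ being trivial), and the two correction terms cancel to yield the claimed identity; uniqueness via the factorial bound on iterated kernels is likewise sound and shows, as you observe, that $|a|<|\beta|$ is not actually needed for the lemma itself. One caution: your first suggested uniqueness estimate, $\sup_{[0,t]}|D|\le \|\varphi\|_{L^1[0,t]}\,\sup_{[0,t]}|D|$, is only conclusive when $\|\varphi\|_{L^1[0,t]}<1$, so you should rely on the iterated-convolution bound of order $(|a|t)^n/n!$, which works on every compact interval with no restriction on $a,\beta$. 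In short, your argument buys self-containedness and a weaker hypothesis, while the paper's citation buys brevity and makes explicit the resolvent identity $\psi = \varphi(\cdot)e^{a\cdot}$ that its moment computations implicitly reuse.
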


\subsection{Proof of Proposition \ref{prop:expectation}}
\label{proof:propexpectation}

Let $t \in \left[0,T\right]$. The expectations of $f^+_t$ and $f^-_t$ are respectively given by $-i\frac{\partial L}{\partial a_+}(0,0,t)$ and $-i\frac{\partial L}{\partial a_-}(0,0,t)$ with $L$ the characteristic function of $(f^+_t, f^+_t)^{\top}$. Applying Proposition \ref{prop:charachawkes} with $\varphi = \begin{pmatrix}
0 & \varphi_{\exp} \\
\varphi_{\exp} & 0 
\end{pmatrix}$, we have 
\begin{equation} \label{carac}
L\left(a_+, a_-, t\right) = \exp\left(\int_0^t \left(C_+\left(a_+, a_-, s\right) -1\right)\mu\left(\frac{t-s}{T}\right) ds + \int_0^t \left(C_-\left(a_+, a_-, s\right) -1\right)\mu\left(\frac{t-s}{T}\right)ds\right)
\end{equation}
with $C_+$ and $C_-$ solutions of 
\begin{equation} \label{c}
\begin{cases}
C_+\left(a_+, a_-, t\right) = \mathbb{E}\left(\exp\left(i J a_+ + \int_0^t J C_-\left(a_+,a_-,t-s\right)\varphi_{\exp}\left(s\right)ds\right)\right)\\
C_-\left(a_+, a_-, t\right) = \mathbb{E}\left(\exp\left(i J a_- + \int_0^t J C_+\left(a_+,a_-,t-s\right)\varphi_{\exp}\left(s\right)ds\right)\right)
\end{cases}.
\end{equation}
By differentiating \eqref{carac} and \eqref{c} with respect to $a_+$ and $a_-$, we find
\begin{equation} \label{eq:Mplus}
\mathbb{E}(f_t^+) = \int_0^t \Psi_+(s) \mu\left(\frac{t-s}{T}\right)ds,
\end{equation}
\begin{equation} \label{eq:Mminus}
\mathbb{E}(f_t^-) = \int_0^t \Psi_-(s) \mu\left(\frac{t-s}{T}\right)ds
\end{equation}
with $\Psi_+(t) = -i\frac{\partial (C^+ + C^-)}{\partial a_+}(0,0,t)$ and $\Psi_-(t) = -i\frac{\partial (C^+ + C^-)}{\partial a_-}(0,0,t)$
and we have the system of equations 
\begin{equation} \label{eq:systemC}
\left \{
\begin{array}{rcl}
-i \frac{\partial C^+}{\partial a_+}(0,0,t)&=&\mathbb{E}(J) + \int_0^t \mathbb{E}(J) \left(- i \frac{\partial C^-}{\partial a_+}(0,0,t-s)\right)\varphi_{\exp}(s)ds\\
-i \frac{\partial C^-}{\partial a_-}(0,0,t)&=&\mathbb{E}(J) + \int_0^t \mathbb{E}(J)\left(- i \frac{\partial C^+}{\partial a_-}(0,0,t-s)\right)\varphi_{\exp}(s)ds\\
-i \frac{\partial C^+}{\partial a_-}(0,0,t)&=&\int_0^t \mathbb{E}(J) \left(- i \frac{\partial C^-}{\partial a_-}(0,0,t-s)\right)\varphi_{\exp}(s)ds\\
-i \frac{\partial C^-}{\partial a_+}(0,0,t)&=&\int_0^t \mathbb{E}(J) \left(- i \frac{\partial C^+}{\partial a_+}(0,0,t-s)\right)\varphi_{\exp}(s)ds\\
\end{array}
\right.
\end{equation}
with solutions found using Lemma \ref{lemma:ode} :
\begin{equation} \label{eq:solutionC}
\left \{
\begin{array}{rcl}
-i \frac{\partial C^+}{\partial a_+}(0,0,t)&=&\frac{\beta \mathbb{E}(J)}{2(\beta - \alpha \mathbb{E}(J))} + \frac{\beta \mathbb{E}(J)}{2(\beta + \alpha \mathbb{E}(J))} -\frac{\alpha \mathbb{E}(J)^2}{2(\beta - \alpha \mathbb{E}(J))}e^{-(\beta - \alpha\mathbb{E}(J))t} + \frac{\alpha \mathbb{E}(J)^2}{2(\beta + \alpha \mathbb{E}(J))}e^{-(\beta + \alpha\mathbb{E}(J))t} \\
-i \frac{\partial C^-}{\partial a_-}(0,0,t)&=&-i \frac{\partial C^+}{\partial a_+}(0,0,t)\\
-i \frac{\partial C^+}{\partial a_-}(0,0,t)&=&\frac{\beta \mathbb{E}(J)}{2(\beta - \alpha \mathbb{E}(J))} - \frac{\beta \mathbb{E}(J)}{2(\beta + \alpha \mathbb{E}(J))} -\frac{\alpha \mathbb{E}(J)^2}{2(\beta - \alpha \mathbb{E}(J))}e^{-(\beta - \alpha\mathbb{E}(J))t} - \frac{\alpha \mathbb{E}(J)^2}{2(\beta + \alpha \mathbb{E}(J))}e^{-(\beta + \alpha\mathbb{E}(J))t}\\
-i \frac{\partial C^-}{\partial a_+}(0,0,t)&=&-i \frac{\partial C^+}{\partial a_-}(0,0,t)\\
\end{array}
\right..
\end{equation}

We then have
\begin{equation} \label{eq:solutionpsi}\Psi_+(t) = \Psi_-(t) = \frac{-\alpha\mathbb{E}(J)^2}{\beta - \alpha \mathbb{E}(J)} e^{-(\beta - \alpha \mathbb{E}(J)) t} + \frac{\beta\mathbb{E}(J)}{\beta - \alpha\mathbb{E}(J)}\end{equation}
and $\mathbb{E}(f^+_t)$, $\mathbb{E}(f^-_t)$ can be derived from \eqref{eq:Mplus} and \eqref{eq:Mminus}. The computation of $\mathbb{E}(\lambda_t^+)$ and $\mathbb{E}(\lambda_t^-)$ follows from 
\[\mathbb{E}(f_t^+) = \mathbb{E}(J) \mathbb{E}(\int_0^t \lambda^+_s ds)\]
and
\[\mathbb{E}(f_t^-) = \mathbb{E}(J) \mathbb{E}(\int_0^t \lambda^-_s ds).\]

\subsection{Proof of Proposition \ref{prop:moment2}}
\label{proof:propmoment2}

Let $t \in \left[0,T\right]$. The expectation of $(f^+_t)^2$, $(f^-_t)^2$ and $f^+_t f^-_t$ are respectively given by $-\frac{\partial^2 L}{\partial a_+^2}(0,0,t)$, $-\frac{\partial^2 L}{\partial a_-^2}(0,0,t)$ and $-\frac{\partial^2 L}{\partial a_+ \partial a_-}(0,0,t)$ with $L$ defined in \eqref{carac}--\eqref{c}. Thus, 
\[\mathbb{E}((f^+_t)^2) = \left(\int_0^t\Psi_+(s)\mu\left(\frac{t-s}{T}\right)ds\right)^2 + \int_0^t \Psi_{2,+}(s)\mu\left(\frac{t-s}{T}\right)ds \]
with $\Psi_+$ defined in \eqref{eq:solutionpsi} and $\Psi_2(t) =-\frac{\partial^2 (C^+ + C^-)}{\partial a_+^2}(0,0,t)$ solution of
\[
\begin{split}
\Psi_{2,+}(t) =& \mathbb{E}(J^2)\left(1 + \int_0^t -i\frac{\partial C^-}{\partial a_+}(0,0,t-s)\varphi_{\exp}(s)ds\right)^2 \\
&+ \mathbb{E}(J^2)\left(\int_0^t -i\frac{\partial C^+}{\partial a_+}(0,0,t-s)\varphi_{\exp}(s)ds\right)^2  + \int_0^t \Psi_{2,+}(t-s)\varphi_{\exp}(s)ds,
\end{split}
\]
which can be rewritten using \eqref{eq:systemC} :
\[\Psi_{2,+}(t) = \frac{\mathbb{E}(J^2)}{\mathbb{E}(J)^2}\left(-i\frac{\partial C^+}{\partial a_+}(0,0,t)\right)^2 + \frac{\mathbb{E}(J^2)}{\mathbb{E}(J)^2} \left(-i\frac{\partial C^-}{\partial a_+}(0,0,t)\right)^2  + \int_0^t \Psi_{2,+}(s)\varphi_{\exp}(t-s)ds.
\]
We easily show that 
\[\mathbb{E}((f^-_t)^2) = \mathbb{E}((f^+_t)^2).\]
The cross moment can be computed in the same way and it is equal to 
\[
\mathbb{E}(f^+_t f^-_t) = \left(\int_0^t\Psi_+(s)\mu\left(\frac{t-s}{T}\right)ds\right)^2 + \int_0^t \Psi_{2,+,-}(s)\mu\left(\frac{t-s}{T}\right)ds
\]
with $\Psi_{2,+,-}(t) = - \frac{\partial C^+ + C^-}{\partial a^+ \partial a^-}(0,0,t)$ solution of 
\[
\Psi_{2,+,-}(t) = 2\frac{\mathbb{E}(J^2)}{\mathbb{E}(J)^2} \left(-i\frac{\partial C^+}{\partial a_+}(0,0,t)\right) \left(-i\frac{\partial C^-}{\partial a_+}(0,0,t) \right) + \mathbb{E}(J)\int_0^t \Psi_{2,+,-}(s)\varphi_{\exp}(t-s)ds.
\]
Therefore, 
\begin{equation} \label{eq:moment2integral}
\begin{split}
\mathbb{E}(f_t^2) &= f_0^2 + 2\left(\mathbb{E}\left(\left(f_t^+\right)^2\right) - \mathbb{E}\left(f_t^+ f_t^-\right)\right) \\
&= f_0^2 + 2 \int_0^t \Psi_2(s) \mu\left(\frac{t-s}{T}\right) ds
\end{split}
\end{equation}
with $\Psi_2 = \Psi_{2,+} - \Psi_{2,+,-}$ solution of 
\[
\begin{split}
\Psi_2(t) &=\frac{\mathbb{E}(J^2)}{\mathbb{E}(J)^2}\left(-i\frac{\partial C^+}{\partial a_+}\left(0,0,t\right) - \left(-i\frac{\partial C^-}{\partial a_+}\left(0,0,t\right)\right)\right)^2 + \mathbb{E}(J)\int_0^t \Psi_{2}(s)\varphi_{\exp}(t-s)ds\\
&= \tilde{\Psi}(t) + \mathbb{E}(J)\int_0^t \Psi_{2}(s)\varphi_{\exp}(t-s)ds
\end{split}
\]
with 
\[\tilde{\Psi}(t) = \mathbb{E}(J^2)\left(\frac{\beta}{(\beta + \alpha \mathbb{E}(J))} + \frac{\alpha \mathbb{E}(J)}{(\beta + \alpha \mathbb{E}(J))}e^{-(\beta + \alpha\mathbb{E}(J))t}\right)^2.
\]
Lemma~\ref{lemma:ode} implies that
\[
\begin{split}
\Psi_2(t) =& \tilde{\Psi}(t) + \int_0^t \tilde{\Psi}(s) \mathbb{E}(J)\varphi_{\exp}(t-s)e^{\alpha\mathbb{E}(J)(t-s)}ds \\
=&\mathbb{E}(J^2) \Bigg(\frac{\alpha^2 \mathbb{E}(J)^2}{(\beta + \alpha \mathbb{E}(J))^2}e^{-2(\beta + \alpha \mathbb{E}(J))t} + \frac{\beta^2}{(\beta + \alpha \mathbb{E}(J))^2} + 2\frac{\alpha \beta \mathbb{E}(J)}{(\beta + \alpha \mathbb{E}(J))^2}e^{-(\beta + \alpha \mathbb{E}(J))t} \\
&+ \frac{\alpha^3 \mathbb{E}(J)^3}{(\beta + \alpha \mathbb{E}(J))^2(\beta + 3\alpha \mathbb{E}(J))}\left(e^{-(\beta - \alpha\mathbb{E}(J))t} - e^{-2(\beta + \alpha \mathbb{E}(J))t}\right) \\
&+ \frac{\beta^2\alpha\mathbb{E}(J)}{(\beta+\alpha\mathbb{E}(J))^2(\beta-\alpha\mathbb{E}(J))} \left(1 - e^{-(\beta-\alpha\mathbb{E}(J))t}\right) \\
&+ \frac{\beta \alpha \mathbb{E}(J)}{(\beta+\alpha  \mathbb{E}(J))^2} \left(e^{-(\beta-\alpha \mathbb{E}(J))t} - e^{-(\beta+\alpha \mathbb{E}(J))t}\right) \Bigg)\\
=&\mathbb{E}(J^2)\left(\tilde{C}_{1}e^{-(\beta-\alpha\mathbb{E}(J))t} + \tilde{C}_{2}e^{-2(\beta+\alpha \mathbb{E}(J))t} + \tilde{C}_{3} e^{-(\beta+\alpha \mathbb{E}(J))t} + \tilde{C}_{4}\right)
\end{split}
\]
with 
\begin{align*}
    \tilde{C}_{1} &= \frac{- \alpha^2 \mathbb{E}(J)^2}{(\beta - \alpha \mathbb{E}(J)) (\beta+3\alpha \mathbb{E}(J))},\\
    \tilde{C}_{2} &= \frac{\alpha^2 \mathbb{E}(J)^2(\beta + 2\alpha\mathbb{E}(J))}{(\beta + \alpha\mathbb{E}(J))^2(\beta + 3\alpha\mathbb{E}(J))},\\
    \tilde{C}_{3} &= \frac{\alpha \beta \mathbb{E}(J)}{(\beta+\alpha \mathbb{E}(J))^2},\\
    \tilde{C}_{4} &= \frac{\beta^3}{(\beta+\alpha\mathbb{E}(J))^2(\beta-\alpha\mathbb{E}(J))}.
\end{align*}
Finally, from \eqref{eq:moment2integral}, we obtain
\[
\mathbb{E}(f_t^2) = f_0^2 + 2\mu_0\mathbb{E}(J^2)\left((C_{1} + C_{2} + C_{3} + C_{4})e^{\kappa \frac{t}{T}} - C_{1} e^{-(\beta-\alpha \mathbb{E}(J))t}  - C_{2} e^{-2(\beta+\alpha \mathbb{E}(J))t} - C_{3}e^{-(\beta+\alpha \mathbb{E}(J))t} - C_{4}\right)
\]
with
\begin{align*}
    C_{1} &= \frac{\tilde{C}_{1}}{\beta - \alpha \mathbb{E}(J) + \frac{\kappa}{T}},\\
    C_{2} &= \frac{\tilde{C}_{2}}{2\beta + 2\alpha \mathbb{E}(J) + \frac{\kappa}{T}},\\
    C_{3} &= \frac{\tilde{C}_{3}}{\beta + \alpha \mathbb{E}(J) + \frac{\kappa}{T}},\\
    C_{4} &= \frac{\tilde{C}_{4}}{\frac{\kappa}{T}}.
\end{align*}

\subsection{Proof of Proposition \ref{prop:signatureplot}}
\label{proof:propsignatureplot}
Let $0 \leq s \leq t \leq T$. To compute $\mathbb{E}((f_t - f_s)^2)$, one needs to know $\mathbb{E}(f_t f_s)$ which is equal to the sum of $f_0^2$ and 
\[\mathbb{E}((f_t^+ - f_t^-)(f_s^+ - f_s^-)) = \mathbb{E}(f_t^+f_s^+) + \mathbb{E}(f_t^- f_s^-) - \mathbb{E}(f_t^+ f_s^-) - \mathbb{E}(f_t^- f_s^+).\]
Conditioning with respect to $\mathcal{F}_s$ gives 
\[\mathbb{E}\left(f_t^+f_s^+\right)
= \mathbb{E}\left(\left(f_s^+\right)^2\right) + \mathbb{E}\left(J\right)\mathbb{E}\left(\int_s^t \mathbb{E}\left(\lambda^+_u | \mathcal{F}_s\right)du f_s^+\right),\] 
\[\mathbb{E}\left(f_t^+f_s^-\right)
= \mathbb{E}\left(f_s^+ f_s^-\right) + \mathbb{E}\left(J\right)\mathbb{E}\left(\int_s^t\mathbb{E}\left(\lambda^+_u | \mathcal{F}_s\right)du f_s^-\right),\]
\[\mathbb{E}\left(f_t^- f_s^-\right)
= \mathbb{E}\left(\left(f_s^-\right)^2\right) + \mathbb{E}\left(J\right)\mathbb{E}\left(\int_s^t \mathbb{E}\left(\lambda^-_u | \mathcal{F}_s\right)du f_s^-\right),\] 
\[\mathbb{E}\left(f_t^- f_s^+\right)
= \mathbb{E}\left(f_s^- f_s^+\right) + \mathbb{E}\left(J\right) \mathbb{E}\left(\int_s^t\mathbb{E}\left(\lambda^-_u | \mathcal{F}_s\right)du f_s^+\right).\]
Hence
\[\mathbb{E}\left(f_tf_s\right) = \mathbb{E}\left(f_s^2\right) + \mathbb{E}\left(J\right)\mathbb{E}\left(\int_s^t \mathbb{E}\left(\lambda^+_u - \lambda^-_u |\mathcal{F}_s\right)du \left(f_s^+ - f_s^-\right)\right).\]
The conditional intensities are such that
\[
\begin{split}
\mathbb{E}\left(\lambda^+_u | \mathcal{F}_s\right) &= \mu\left(\frac{u}{T}\right) + \int_0^s \varphi_{\exp}\left(u-v\right)J_vdN^-_v + \mathbb{E}\left(J\right)\int_s^u \varphi_{\exp}\left(u-v\right) \mathbb{E}\left(\lambda^-_v | \mathcal{F}_s\right)dv \\
&= \mu\left(\frac{u}{T}\right) + e^{-\beta\left(u-s\right)} \left(\lambda_s^+ - \mu\left(\frac{s}{T}\right)\right) +  \mathbb{E}(J)\int_s^u \varphi_{\exp}\left(u-v\right) \mathbb{E}\left(\lambda^-_v | \mathcal{F}_s\right)dv
\end{split}
\]
and 
\[
\mathbb{E}\left(\lambda^-_u | \mathcal{F}_s\right) = \mu\left(\frac{u}{T}\right) + e^{-\beta\left(u-s\right)} \left(\lambda_s^- - \mu\left(\frac{s}{T}\right)\right) +  \mathbb{E}(J)\int_s^u \varphi_{\exp}\left(u-v\right) \mathbb{E}\left(\lambda^+_v | \mathcal{F}_s\right)dv.
\]
Applying Lemma \ref{lemma:ode}, we find 
\begin{equation}
\label{diffintensity}
\mathbb{E}\left(\left(\lambda^+_u - \lambda_u^-\right) | \mathcal{F}_s\right) = \left(\lambda_s^+ - \lambda_s^-\right) e^{-\left(\beta+\alpha\mathbb{E}\left(J\right)\right)\left(u-s\right)}.
\end{equation}
As
\[
\mathbb{E}\left(\left(f_s^+\right)^2\right) = \mathbb{E}\left(J^2\right)\mathbb{E}\left(\int_0^s \lambda^+_v dv\right) + 2 \mathbb{E}\left(J\right)\mathbb{E}\left(\int_0^s f^+_v \lambda^+_v dv\right),
\]
\[
\mathbb{E}\left(\left(f_s^-\right)^2\right) = \mathbb{E}\left(J^2\right)\mathbb{E}\left(\int_0^s\lambda^-_v dv\right) + 2 \mathbb{E}\left(J\right)\mathbb{E}\left(\int_0^s f^-_v\lambda^-_v dv\right),
\]
and 
\[
\mathbb{E}\left(f_s^+ f_s^-\right) = 2 \mathbb{E}\left(J\right)\mathbb{E}\left(\int_0^s f^-_v\lambda^+_v dv\right) = 2 \mathbb{E}\left(J\right)\mathbb{E}\left(\int_0^s f^+_v\lambda^-_v dv\right),
\]
we have, using also Equation~\eqref{diffintensity}, 
\[\mathbb{E}(f_tf_s) = \mathbb{E}(f_s^2) + \left(\frac{1}{2}\frac{d \mathbb{E}((f_s^+)^2) + \mathbb{E}((f_s^-)^2) - 2\mathbb{E}(f_s^+ f_s^-)}{ds} - \mathbb{E}(J^2)\mathbb{E}(\lambda_s^+)\right)\frac{(1 - e^{-(\beta+\alpha \mathbb{E}(J))(t-s)})}{\beta + \alpha \mathbb{E}(J)}\]
which is equal to 
\[
\mathbb{E}(f_s^2) + \left(\frac{1}{2} \frac{d \mathbb{E}(f_s^2)}{ds} -\mathbb{E}\left(J^2\right)\mathbb{E}\left(\lambda_s^+\right)\right) \frac{\left(1 - e^{-\left(\beta+\alpha\mathbb{E}\left(J\right)\right)\left(t-s\right)}\right)}{\beta + \alpha\mathbb{E}(J)}.
\]
We obtain 
\[
\mathbb{E}\left(\left(f_t-f_s\right)^2\right) = \mathbb{E}(f_t^2) - \mathbb{E}(f_s^2) -\frac{\left(1 - e^{-\left(\beta+\alpha\mathbb{E}\left(J\right)\right)\left(t-s\right)}\right)}{\beta + \alpha\mathbb{E}(J)} \left(\frac{d \mathbb{E}(f_s^2)}{ds} - 2\mathbb{E}\left(J^2\right)\mathbb{E}\left(\lambda_s^+\right)\right),
\]
which achieves the proof.

\section{Proofs for Section \ref{sec:macro}}
\label{proof:sectionmacro}
\subsection{Proof of Proposition \ref{prop:lln}}
\label{proof:proplln}
The proof is very similar to the one of \cite[Theorem 1]{Bacry2013b}. Let us recall the main steps and identify the differences. In the following, one denotes by $\underline{\mu}$ the function $t \mapsto \mu(t) \left(1, 1\right)^{\top}$.

\medskip
\paragraph{\it Computation of the expectation} First step consists in computing the expectation of $N^0_{t} = N_t$. We easily find that, for $t \in \left[0,T\right]$,
\[
\mathbb{E}(N_t) = \int_0^t \underline{\mu}\left(\frac{s}{T}\right) ds + \int_0^t \Psi(t-s) \left(\int_0^s \underline{\mu}\left(\frac{u}{T}\right) du\right)ds
\]
where $\Psi$ is defined by $\sum_{n \geq 1} \varphi_n$ and $\varphi_n$ is defined recursively by
\[
    \begin{cases}
        \varphi_1 &= \mathbb{E}(J) \varphi \\
        \varphi_{n+1}(t) &= \int_0^t \mathbb{E}(J) \varphi(t-s)\varphi_n(s)ds, \; n \geq 1.
    \end{cases}
\]
The sum is well defined: as $\int_0^{\infty} \varphi_n(s) ds = K^n$ and $\rho(K) < 1$, $\sum_{n \geq 1} \varphi_n$ converges in $L^1(dt)$. Furthermore, $\int_0^{\infty} \Psi(s)ds = (I_2 - K)^{-1} - I_2$.

\medskip
By doing a simple change of variable in the integral, we have for $v \in \left[0,1\right]$,
\begin{equation} \label{eq:expectrescale}
\begin{split}
\frac{\mathbb{E}(N_{Tv})}{T} &= \int_0^v \underline{\mu}(s) ds + \frac{1}{T}\int_0^{Tv} \Psi(Tv-s) \left(\int_0^s \underline{\mu}\left(\frac{u}{T}\right) du\right)ds \\
&= \int_0^v \underline{\mu}(s) ds + \frac{1}{T}\int_0^{Tv} \Psi(s) \left(\int_0^{Tv-s} \underline{\mu}\left(\frac{u}{T}\right) du\right)ds \\
&= \left(I_2 + \int_{0}^{Tv} \Psi(s) ds\right)\int_0^{v} \underline{\mu}(s) ds - \frac{1}{T} \int_0^{Tv} \Psi(s) \left(\int_{0}^s \underline{\mu}\left(v - \frac{u}{T}\right) du\right) ds.
\end{split}
\end{equation}

\medskip
\paragraph{\it Approximation of $\mathbb{E}\left(\frac{N^i_{Tv}}{T}\right)$} From Equation~\eqref{eq:expectrescale}, 
\[\left(I_2 - K\right)^{-1}\int_0^{v} \underline{\mu}(s) ds - \frac{\mathbb{E}(N_{Tv})}{T} = \int_{Tv}^{\infty} \Psi(s)ds \int_0^v \underline{\mu}(s) ds + \frac{1}{T} \int_0^{Tv} \Psi(s)\left(\int_0^s  \underline{\mu}\left(v - \frac{u}{T}\right) du\right) ds. 
\]
As $\mu$ is bounded, one has componentwise
\begin{equation} \label{eq:approxexpectationbound}
\left(I_2 - K\right)^{-1}\int_0^{v} \underline{\mu}(s) ds - \frac{\mathbb{E}(N_{Tv})}{T} \leq \left(v\int_{Tv}^{\infty} \Psi(s)ds  + \frac{1}{T} \int_0^{Tv} \Psi(s) s ds\right)\underline{\mu}^{*}
\end{equation}
with $\underline{\mu}^{*}$ an upper bound for $\underline{\mu}$ componentwise on $\left[0,1\right]$. The right hand side of \eqref{eq:approxexpectationbound} is equal to $v \left(I_2 - K\right)^{-1}\underline{\mu}^{*} - \frac{\mathbb{E}(\tilde{N}_{Tv})}{T}$ where $\tilde{N}$ is a Hawkes process with constant baseline intensity $\underline{\mu}^{*}$ and excitation kernel $\mathbb{E}(J) \varphi$. By \cite[Lemma 5]{Bacry2013b}, it converges to 0 uniformly on $\left[0,1\right]$ as $T \to \infty$, so that we have 
\begin{equation} \label{eq:convexpect}
\frac{\mathbb{E}(N_{Tv})}{T} - (I_2 - K)^{-1}\int_0^v \mu(s) ds \to 0
\end{equation}
uniformly on $\left[0,1\right]$ as $T \to \infty$.
As $\mathbb{E}(N^i_{t}) = \mathbb{E}(J^i)\mathbb{E}(N_{t})$, we obtain directly that $\frac{\mathbb{E}(N^i_{Tv})}{T}$ converges uniformly on $\left[0,1\right]$ as $T \to \infty$ to $\mathbb{E}(J^i)(I_2 - K)^{-1} \int_0^v \mu(s) ds$.

\medskip
\paragraph{\it Approximation of $\frac{N^i_{Tv}}{T}$}
Let us consider the martingales $M^i_t = N^i_t - \mathbb{E}(J^i)\int_0^t \lambda_s ds$ for $t \in \left[0,T\right]$ and $i \in \{0,1,2\}$. As 
\[
\mathbb{E}(N^1_t) = \mathbb{E}(J)\int_0^t \underline{\mu}\left(\frac{s}{T}\right) ds + \mathbb{E}(J)\int_0^t \left(\int_0^s  \varphi(s-u) \mathbb{E}(J_udN_u)\right)ds
\]
and 
\[
\mathbb{E}(J) \int_0^t \lambda_s ds = \mathbb{E}(J)\int_0^t \underline{\mu}\left(\frac{s}{T}\right)ds + \mathbb{E}(J)\int_0^t \left(\int_0^s \varphi(s-u)J_u dN_u\right)ds,
\]
one gets
\[
N^1_t - \mathbb{E}(N^1_t) = M^1_t + \mathbb{E}(J)\int_0^t \left(\int_0^s  \varphi(s-u)\left(dN^1_u - \mathbb{E}(dN^1_u)\right)\right)ds.
\]
Then, applying Fubini theorem, we have
\[
    N^1_t - \mathbb{E}(N^1_t) = M^1_t + \mathbb{E}(J)\int_0^t \left(\int_u^t  \varphi(s-u)ds\right)\left(dN^1_u - \mathbb{E}(dN^1_u)\right)
\]
so that using integration by parts, similarly to the proof of \cite[Lemma 2]{Bacry2013b}, one finds
\begin{equation} \label{eq:relN1}
N^1_t - \mathbb{E}(N^1_t) = M^1_t + \mathbb{E}(J)\int_0^t \varphi(t-s)\left(N^1_s - \mathbb{E}(N^1_s)\right)ds.
\end{equation}
Applying \cite[Lemma 3]{Bacry2013b}, we obtain 
\begin{equation} \label{eq:relNM}
N^1_t - \mathbb{E}(N^1_t) = M^1_t + \int_0^t \Psi(t-s)M^1_s ds.
\end{equation}
Let us now consider the case $i \in \{0, 2\}$. We have 
\[
\mathbb{E}(N^i_t) = \mathbb{E}(J^i) \int_0^t \underline{\mu}\left(\frac{s}{T}\right) ds + \mathbb{E}(J^i)\int_0^t \int_0^s \varphi(s-u) \mathbb{E}(dN^1_u)
\]
and 
\[
\mathbb{E}(J^i)\int_0^t \lambda_s ds = \mathbb{E}(J^i) \int_0^t \underline{\mu}\left(\frac{s}{T}\right) ds + \mathbb{E}(J^i)\int_0^t \int_0^s \varphi(s-u)dN^1_u.
\] 
Thus, integrating by parts again, 
\[
N_t^i - \mathbb{E}(N^i_t) = M_t^i + \mathbb{E}(J^i) \int_0^t \varphi(t-s) \left(N_s^1 - \mathbb{E}(N_s^1)\right)ds
\]
and we have, using \eqref{eq:relN1} and \eqref{eq:relNM}, 
\[
N_t^i - \mathbb{E}(N^i_t) = M_t^i + \frac{\mathbb{E}(J^i)}{\mathbb{E}(J)} \int_0^t \Psi(t-s) M^1_s ds.
\]
Remaining of the proof is the same as the one of \cite[Theorem 1]{Bacry2013b} :
\begin{itemize}
    \item $\sup_{v \in \left[0,1\right]} T^{-1}\|N_{Tv}^i - \mathbb{E}(N^i_{Tv})\|$ is bounded by $$T^{-1}\sup_{t \in \left[0,T\right]} \|M^i_t\| + \frac{\mathbb{E}(J^i)}{\mathbb{E}(J)} \int_0^{\infty} \|\Psi(s)ds\|  T^{-1}\sup_{t \in \left[0,T\right]} \|M^1_t\|;$$
    \item if $\mathbb{E}(J^{2i}) < \infty$, we can use Doob's inequality to bound $\mathbb{E}((\sup_{t \in \left[0,T\right]} \|M^i_t\|)^2)$ by a constant times $T$ (the quadratic variation of $M^i$ being $N^{2i}$ on the diagonal and 0 otherwise, and we can easily show that $\mathbb{E}(N^{2i}_T)$ is bounded by $T$ as $\underline{\mu}$ is bounded) and the convergence of $\sup_{v \in \left[0,1\right]} T^{-1}\|N_{Tv}^i - \mathbb{E}(N_{Tv})\|$ toward 0 in $L^2(\mathbb{P})$ follows ;
    \item the arguments for the almost-sure convergence remain true.
\end{itemize}
We then have \[ \sup_{v \in \left[0,1\right]} \|N^i_{Tv} - \mathbb{E}(N^i_{Tv})\| \to 0\]
for $i \in \{0,1,2\}$ almost-surely and in $L^2(\mathbb{P})$ as $T \to \infty$, ending the proof.

\subsection{Proof of Proposition \ref{prop:brownianlimit}}
\label{proof:propbrownianlimit}

The proof relies on the convergence of the martingale 
\[M^{1,(T)} =  \left(T^{-\frac{1}{2}} \left(N^1_{Tv} - \mathbb{E}(J)\int_0^{Tv} \lambda_s ds\right)\right)_{v \in \left[0,1\right]}.
\]
The quadratic variation matrix is equal on the diagonal to $\frac{1}{T} N^{2}_{Tv}$ and 0 otherwise, and then converges to $C_{v} = \mathbb{E}(J^2) \Sigma \int_0^v \mu(s)ds$, with $\Sigma_{i,i} = \left((I_2 - K)^{-1} (1,1)^{\top}\right)_i$, $i=1,2$, in $L^2(\mathbb{P})$ according to Proposition \ref{prop:lln}.

\medskip
The jump measure compensator of $M^{1,(T)}$ is $\nu^{(T)}(dt,dx) = T\lambda_{tT}dt \otimes \mu_J(dx\sqrt{T})$ where $\mu_J$ is the probability measure associated to $J$. Let $t \in \left[0,1\right]$, $\epsilon > 0$. The integral
\[
\int_0^t \int_{\mathbb{R}_+} x^2 {\bf 1}_{x > \epsilon} \nu^{(T)}(ds,dx)
\]
is equal to 
\[\mathbb{E}(J^2 {\bf 1}_{J > \sqrt{T}\epsilon})\int_0^t \lambda_{sT} ds\] 
with expectation equal to 
\[\mathbb{E}(J^2 {\bf 1}_{J > \sqrt{T}\epsilon})\frac{\mathbb{E}(N_{Tt})}{T}\]
which is bounded componentwise by \[\mathbb{E}(J^4)T^{-1} \epsilon^{-2} \frac{\mathbb{E}(N_{Tt})}{T}\]
as can be seen by applying Cauchy-Schwarz inequality to $\E(J^2{\bf 1}_{J> \sqrt{T}\epsilon})$ and then Markov inequality $\mathbb{P}(J>\sqrt{T}\epsilon)\leq \frac{\E(J^4)}{T^2\epsilon^4}$. The bound converges to $0$ when $T \to \infty$ using \eqref{eq:convexpect} from the proof of Proposition \ref{prop:lln}. Then $\int_0^t \int_{\mathbb{R}_+} x^2 {\bf 1}_{x > \epsilon} \nu^{(T)}(ds,dx)$ converges in probability to $0$ when $T \to \infty$ (as it is positive) for every $t \in \left[0,1\right]$, $\epsilon > 0$. From \cite[Section VIII, Theorem 3.22]{Jacod13}, the quadratic variation convergence implies the convergence in law of $M^{1,(T)}$ to a martingale with quadratic variation $C$, that is to $\left(\sqrt{\mathbb{E}(J^2)} \Sigma^{1/2} \int_0^{v} \sqrt{\mu(s)}dW_s\right)_{v \in \left[0,1\right]}$ with $W$ a 2-dimensional Brownian motion.

\medskip
Remaining of the proof is similar to the one of \cite[Theorem 2]{Bacry2013b}

\vip 
\noindent {\bf Acknowledgements.} This research is supported by the FiME (Finance for Energy Markets) Research Initiative.

\bibliographystyle{plain}
\bibliography{Biblio}
\end{document}